\def\R{{\mathbb{R}}}
\def\N{{\mathbb{N}}}
\def\Z{{\mathbb{Z}}}
\def\Z{{\mathbb{Z}}}
\def\V{{\mathcal{V}}}
\def\F{{\mathcal{F}}}
\def\E{{\mathcal{E}}}
\def\Li{{\mathrm{L}^2}}
\def\rmd{{\mathrm{d}}}
\newcommand{\email}[1]{\href{mailto:#1}{#1}}
\newtheorem{remark}{Remark}
\newtheorem{hypothesis}{Hypothesis}
\newtheorem{definition}{Definition}
\newtheorem{proposition}{Proposition}
\newtheorem{corollary}{Corollary}
\newtheorem{theorem}{Theorem}
\newtheorem{lemma}{Lemma}
\crefname{hypothesis}{Hypothesis}{Hypotheses}
\newtheorem{example}{Example}[section]
\title{Multidimensional empirical wavelet transform
}
\author{Charles-G\'erard Lucas\thanks{Department of Mathematics \& Statistics, San Diego State University,
5500 Campanile Dr., San Diego, 92182, CA, USA.
  (\email{clucas2@sdsu.edu}, \email{jgilles@sdsu.edu}).
  }
\and J\'er\^ome Gilles\footnotemark[1]
}
\date{}
\begin{document}

\maketitle

\begin{abstract}
The empirical wavelet transform is a data-driven time-scale representation consisting of an adaptive filter bank. Its robustness to data has made it the subject of intense developments and an increasing number of applications in the last decade. However, it has been mostly studied theoretically for signals and its extension to images is limited to a particular generating function. This work presents a general framework for multidimensional empirical wavelet transform based on any wavelet kernel. It also provides conditions to build wavelet frames for both continuous and discrete transforms. Moreover, numerical simulations of transforms are given.
\end{abstract}

\textbf{Keywords:} empirical wavelets, multidimensional transform, frames, image processing.

\medskip

\textbf{MSCcodes:} 42C15, 42C40, 68U10.

\section{Introduction}

The wavelet transform is a reference tool for time-scale representation used in many signal and image processing techniques, such as denoising, deconvolution and texture segmentation. Originally, it consists of projecting data onto wavelet filters that are built from a mother wavelet which is scaled and modulated independently of the data. In practice, this leads to the construction of wavelet filters based on a prescribed scheme, such as a dyadic decomposition. Although this approach is widely used in contemporary research, it is not guaranteed to be optimal for the data at hand, since a prescribed scheme does not take into account the specificity of the underlying Fourier spectrum. Therefore, data-driven filtering approaches have received much attention to provide an accurate time-scale representation that is robust to the data. 
Among them, empirical mode decomposition \cite{huang1998empirical}, a purely algorithmic approach which behaves as a filter bank \cite{flandrin2004empirical}, has been widely used in real-world applications.
Inspired by this decomposition, empirical wavelet transform has been introduced in \cite{Gilles2013} to provide a more consistant decomposition with a sound theoretical foundation \cite{kedadouche2016comparative,wardana2016comparative}.
In the last decade, this transform has been the subject of an increasing interest through a continuous development and numerous applications in various fields, as reviewed in \cite{huang2018review,liu2019recent}. To name a few applications, we can mention seismic time-frequency analysis \cite{seismic}, electrocardiogram (ECG) signal compression \cite{ecg}, epileptic seizure detection \cite{EEGEWT,panda2021epileptic}, speech recognition \cite{lavrynenko2023method}, time series forecasting \cite{mohammadi2023using}, glaucoma detection \cite{maheshwari2016automated}, hyperspectral image classification \cite{prabhakar2017two}, texture segmentation \cite{Huang2019}, multimodal medical image fusion \cite{polinati2020multimodal} and cancer histopathological image classification \cite{deo2024ensemble}. 
Particularly, this transform has shown to outperform traditional wavelet transforms for texture segmentation \cite{huang2018review} and framelet tranforms for texture denoising \cite{hurat2020empirical}.

The construction of empirical wavelet systems consists of two steps: $(i)$ extracting supports of the harmonic modes of the function under study, and $(ii)$ constructing empirical wavelet filters that are compactly (or very rapidly decaying) supported in the Fourier domain by the extracted supports. 
This construction is the core of an ongoing and active research effort for one-dimensional (1D) and two-dimensional (2D) functions.
For 1D functions, the detection of the segments supporting each harmonic modes in the Fourier domain is usually performed by extracting the lowest minima between them using a scale-space representation \cite{Gilles2014a}. For 2D functions, several different techniques have been proposed to delimit the supports of the harmonic modes, such as the Curvelet \cite{Gilles2013a}, Watershed \cite{hurat2020empirical} and Voronoi \cite{gilles2022empirical} partitioning methods.
The construction of wavelet filters based on the detected supports is usually done in the Fourier domain. In the 1D domain, \cite{cewt} proposed a general framework to build continuous empirical wavelet filters from a generating function, such as the Meyer, Shannon or Gabor scaling functions. These 1D empirical wavelet systems are written as modulations and dilations of a wavelet kernel based on segments that divide the Fourier line. Such systems have been shown to induce both continuous and discrete frames in \cite{Gilles2024frame}. In the 2D domain, empirical wavelet filters have been designed following the Littlewood-Palley wavelet formulation for various Fourier partitioning methods \cite{gilles2022empirical,Gilles2013a,hurat2020empirical}. However, the proposed construction is only valid for discrete images and is entirely based on the distance to the support boundaries, which limits its extension to classic scaling functions. So far, no construction of empirical wavelets in higher dimensions has been proposed. 
In addition, empirical wavelet filters for real-valued functions are built from supports that take into account the symmetry of the corresponding Fourier transform, but these have been little studied theoretically.

This work aims to provide a general framework for empirical wavelet transforms of multidimensional functions, thus extending the 1D framework in \cite{cewt,Gilles2024frame}. We show that we can build empirical wavelets on Fourier supports, symmetric or not, from any wavelet kernel defined in the Fourier domain, using diffeomorphisms that map these supports to the wavelet kernel's Fourier support. Both continuous and discrete transforms are considered. In addition, conditions for the construction of wavelet frames are examined.

The paper is organized as follows. The construction of the multidimensional empirical wavelet systems and the resulting transforms is described in
\Cref{sec:ews}. Theoretical results on these systems as frames are given in \Cref{sec:ewf}. The special case of Fourier supports resulting from affine deformations of the wavelet kernel's Fourier support is studied in \Cref{sec:aff_def}. Finally, in \Cref{sec:num_exp}, specific 2D wavelet systems are tailored from classic wavelet kernels and studied numerically on images.
A Matlab toolbox will be made publicly available at the time of publication.

\section{Notations and reminders}
We respectively denote $\partial \Omega$ and $\overline{\Omega}$ the boundary and closure of a set $\Omega \in \R^N$. We denote $\Upsilon^+=\{n\in\Upsilon \mid n \geq 0\}$ the subset of positive elements of a set $\Upsilon\in \Z$. The Jacobian of a differentiable function $\gamma$ is denoted $J_{\gamma}$. We recall that a function $\gamma$ is called a diffeomorphism if it is infinitely differentiable and invertible of inverse infinitely differentiable. 

We consider that the space of square integrable functions $\Li(\R^N)$ is endowed with the usual inner product
$$ \langle f,g \rangle =\int_{\R^N} f(x) \overline{g(x)} \rmd x.$$
The Fourier transform $\widehat{f}$ of a function $f \in \Li(\R^N)$ and its inverse are given by, respectively,
\begin{align*}
& \widehat{f}(\xi) = \F (f)(\xi) = \int_{\R^N} f(x)e^{-2\pi \imath (\xi \cdot x)} \rmd x, \\
& f(x) = \F^{-1} (\widehat{f})(x) = \int_{\R^N} \widehat{f}(\xi)e^{2\pi \imath (\xi \cdot x)} \rmd \xi,
\end{align*}
where $\cdot$ stands for the usual dot product in $\R^N$. 
The translation operator $T_y$ of a function $f \in \Li(\R^N)$ for $y \in \R^N$ is defined by 
$$(T_y f)(x)= f(x-y).$$

We recall hereafter definitions on frames that are essential throughout this work. A set of functions $\{g_p\}_{p\in \V}$ of $\Li(\R^N)$ is a frame if there exist two constants $0<A_1\leq A_2<\infty$ such that, for every $f\in \Li(\R^N)$,
$$A_1 \Vert f \Vert_2 \leq \int_{ \V} \vert \langle f,g_p \rangle \vert^2 \, \rmd \mu(p) \leq A_2 \Vert f \Vert_2,$$
with $(\V,\mu)$ a measure set.
In particular, $\{g_p\}_{p\in \V}$ is called a tight frame if $A_1=A_2$ and a Parseval frame if $A_1=A_2=1$. 
A frame $\{\widetilde{g}_p\}_{p\in \V}$ is the dual frame of $\{g_p\}_{p\in \V}$ if, for every $f\in \Li(\R^N)$,
$$f(x) = \int_{\V} \langle f,g_p \rangle \, \widetilde{g}_p(x) \, \rmd \mu(p).$$
It is well known that a frame $\{g_p\}_{p\in \V}$ is a tight frame if and only if there exist $A>0$ such that $\{g_p/A\}_{p\in \V}$ is a dual frame of $\{g_p\}_{p\in \V}$.
In these definitions, the set $\V$ can be a cartesian product of both uncountable and countable sets. In particular, countable sets equipped with a counting measure lead to summations instead of integrals.
For an in-depth introduction to frames, interested readers can see \cite{christensen2008frames}.

\section{Multidimensional dimensional empirical wavelet system}
\label{sec:ews}

In this section, we build empirical wavelet systems for the analysis of a given $N$-variate function $f\in\Li(\R^N)$, with $N\in\N$. A key feature of empirical wavelet filters is that they are adaptive: they are constructed from a Fourier domain partitioning scheme that is data-driven rather than pre-specified. 
We first define this Fourier partitioning. We then provide the formalism to construct empirical wavelet systems. Finally, we define empirical wavelet transforms.

\subsection{Fourier domain partitions}

In this section, we introduce the formalism used for the Fourier supports involved in the construction of empirical wavelet systems. 

\begin{definition}
A partition of the Fourier domain is defined as a family of disjoint connected open sets $\{\Omega_n\}_{n\in\Upsilon}$, with $\Upsilon \subset \Z$, of closures covering the Fourier domain, i.e., satisfying
 $$\R^N=\bigcup_{n\in\Upsilon} \overline{\Omega_n} \quad \textrm{ and } \quad n\neq m \Rightarrow \Omega_n\cap \Omega_m=\varnothing.$$
\end{definition}

A partition can consist of either $(i)$ an infinite number of $\Omega_n$ with compact closure $\overline{\Omega_n}$ or $(ii)$ a finite number of $\Omega_n$ composed of both compact and non-compact closures $\overline{\Omega_n}$. Since the sets $\Omega_n$ are connected, the closure $\overline{\Omega_n}$ is compact if and only if $\Omega_n$ is bounded.

In the 1D domain, this definition coincides with the Fourier line partitioning proposed by \cite{cewt} where unbounded intervals, called rays, correspond to non-compact supported closures and segments to compact supported closures.
An example of a partition of the Fourier domain in the 2D domain is given in \cref{fig:partition}.

\begin{figure}[t]
\centering
  \includegraphics[width=.45\linewidth]{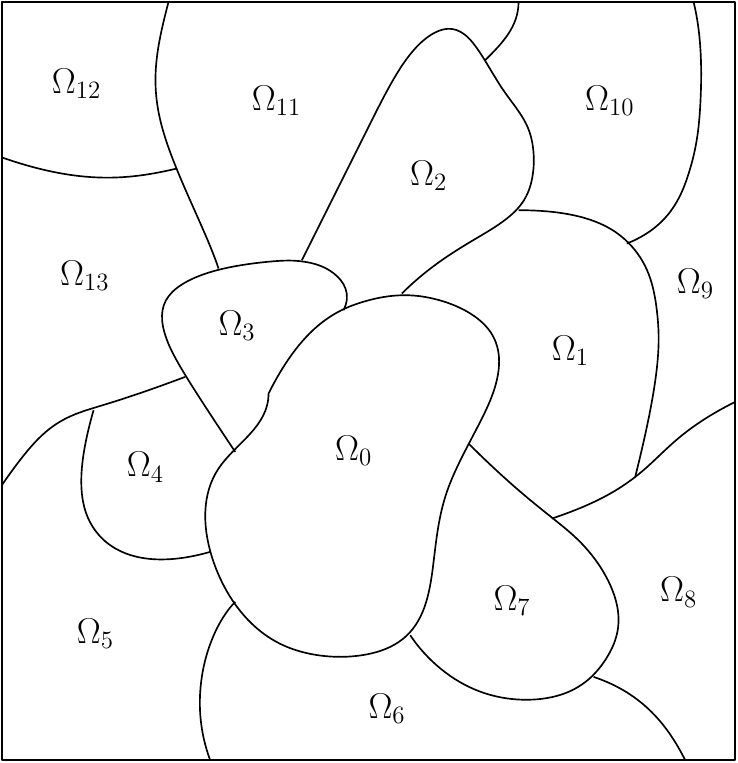}
  \caption{{\bf Partitioning.} Example of a partition of a square domain in the 2D case.}
  \label{fig:partition}
\end{figure}

The partition $\{\Omega_n\}_{n\in\Upsilon}$ is usually constructed such that each set $\Omega_n$ contains one mode of the Fourier domain.
To this end, we can first detect the modes of the Fourier spectrum using the scale-space representation \cite{Gilles2014a} and then define boundaries $\partial \Omega_n$ separating these modes.
In 1D, the intervals $\Omega_n$ can be defined using the lowest minima between these modes. In 2D, the extraction of supports $\Omega_n$ can be performed by various methods, such as the Tensor (grid) \cite{Gilles2013}, Ridgelet (radial), Curvelet (radial and angular) \cite{Gilles2013a}, Watershed \cite{hurat2020empirical} or Voronoi \cite{gilles2022empirical} tilings.

A special case is raised by the real-valued functions since their Fourier spectrum is even.
It is therefore natural to consider a symmetric partition $\{\Omega_n\}_{n\in\Upsilon}$, defined as follows.

\begin{definition}
A partition $\{\Omega_n\}_{n\in\Upsilon}$ is called symmetric if
\begin{equation*} 
n \in \Upsilon \Rightarrow -n \in \Upsilon \quad \textrm{ and } \quad \xi \in \Omega_n \Rightarrow -\xi \in \Omega_{-n}.
\end{equation*}
\end{definition}
This definition implies that the region $\Omega_0$ contains the zero frequency.
A procedure of symmetrization of partitions has been proposed in \cite{hurat2020empirical}.
For such partitions, we will build filters on sets of paired regions $\Omega_n \cup \Omega_{-n}$ rather than on single regions $\Omega_n$.

\subsection{Empirical wavelet filter bank}
In this section, we introduce empirical wavelet filter banks induced by a wavelet kernel for a given Fourier domain partition. Two types of filters are proposed, depending on the symmetry of the Fourier domain supports.

Let $\psi \in \Li(\R^N)$ be a wavelet kernel such that its Fourier transform $\widehat{\psi}$ is localized in frequency and verifies, for some connected open bounded subset $\Lambda\subseteq \mathrm{supp} \, \psi$, 
$$\exists \, 0 \leq \delta < 1, \quad \int_{\overline{\Lambda}} \left\vert \widehat{\psi}(\xi) \right\vert^2 \rmd \xi = (1-\delta) \left\lVert \widehat{\psi} \right\rVert_{\Li}.$$
This condition ensures that $\widehat{\psi}$ is mostly supported by $\overline{\Lambda}$.
Generally, $\widehat{\psi}$ is homogeneous or separable, implying that $\Lambda$ is a $1$-ball or $2$-ball in $\R^N$. 

Given a partition $\{\Omega_n\}_{n\in\Upsilon}$, the goal of this section is to define, from any wavelet kernel $\psi$, two banks of wavelet filters that are mostly supported in the Fourier domain by $(i)$ $\Omega_n$, or $(ii)$ $\Omega_n \cup \Omega_{-n}$ in the case of a symmetric partition.
To this end, we make the following assumption, which is used throughout the paper.
\begin{hypothesis}
For every $n\in\Upsilon$, there exists a diffeomorphism $\gamma_n$ on $\R^N$ such that 
$$
\begin{cases}
\displaystyle \Lambda=\gamma_n(\Omega_n)  & \textrm{if } \Omega_n \textrm{ is bounded}, \\
\displaystyle  \Lambda \subsetneq \gamma_n(\Omega_n)  & \textrm{otherwise}.
\end{cases}
$$
\end{hypothesis}

A diffeomorphism on a bounded set $\Omega_n$ is illustrated in \Cref{fig:map}. This assumption is mild since, by the Hadamard-Cacciopoli theorem \cite{de1994global}, if $\Lambda$ is simply connected (i.e., has no hole), an infinitely differentiable function $\gamma_n$ is a diffeomorphism if and only if it is proper, i.e., the preimage of any compact set under $\gamma_n$ is compact.

\begin{figure}[t]
\centering
  \includegraphics[width=.6\linewidth]{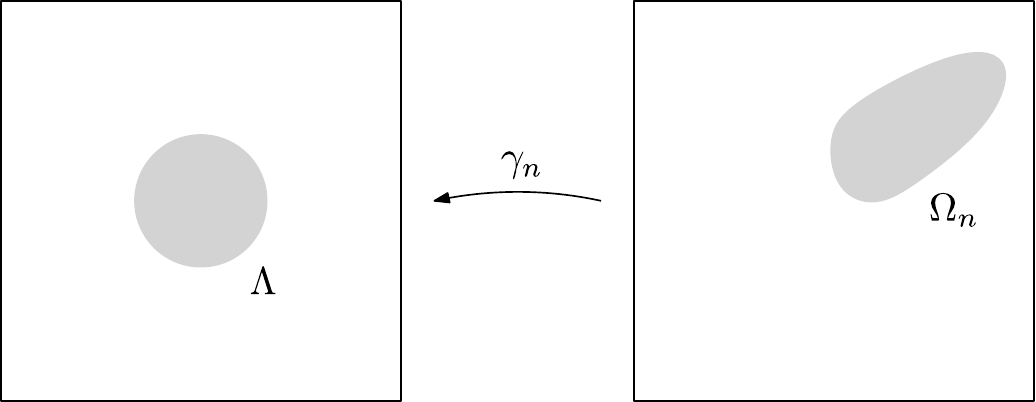}
	\caption{{\bf Mapping.} Illustration of a diffeomorphism from a set $\Omega_n$ to a disk $\Lambda$.}
  \label{fig:map}
\end{figure}

First, we consider the case of a partition $\{\Omega_n\}_{n\in\Upsilon}$ that is not necessarily symmetric.

\begin{definition}
The empirical wavelet system, denoted $\{\psi_n\}_{n\in\Upsilon}$, corresponding to the partition $\{\Omega_n\}_{n\in\Upsilon}$ is defined by, for every $\xi \in \R^N$,
\begin{equation*}
\widehat{\psi}_n(\xi)= \sqrt{\vert \mathrm{det}\; J_{\gamma_n}(\xi) \vert} \;  \widehat{\psi}(\gamma_n(\xi)).
\end{equation*}
\end{definition}

The determinant is a normalization coefficient for the conservation of energy when $\Lambda=\gamma_n(\Omega_n)$, i.e., 
$$\int_{\Omega_n} \left\vert \widehat{\psi}_n(\xi) \right\vert^2 \mathrm{d}\xi = \int_{\Lambda} \left\vert \widehat{\psi}(u) \right\vert^2 \mathrm{d}u.$$ 

\begin{example}\label{ex:psin_1D}
We consider the 1D case ($N=1$).
Let $\psi$ a wavelet kernel on $\R$ of Fourier transform $\widehat{\psi}$ with support $\Lambda=(-\frac{1}{2},\frac{1}{2})$ and $\{\Omega_n\}_{n\in\Upsilon}$ a family of non-overlapping open bounded intervals with center $\omega_n$ such that $\R=\bigcup_{n\in\Upsilon} \overline{\Omega_n}$. Then $\gamma_n: \xi \mapsto \frac{1}{\vert \Omega_n \vert}(\xi-\omega_n)$ are diffeomorphisms such that $\Lambda=\gamma_n(\Omega_n)$ and the empirical wavelet system is given by, for every $\xi \in \R$,
\begin{equation}
\label{eq:wavelet_bankfilters_1D}
\widehat{\psi}_n(\xi) = \frac{1}{\sqrt{\vert \Omega_n \vert}} \widehat{\psi} \left(\frac{\xi-\omega_n}{\vert \Omega_n \vert} \right).
\end{equation}
This definition is in agreement with the definition given in \cite{cewt}.
For the diffeomorphisms $\kappa_n: \xi \mapsto \frac{1}{\vert \Omega_n \vert}(\omega_n-\xi)$ also satisfying $\Lambda=\kappa_n(\Omega_n)$, Definition \ref{eq:wavelet_bankfilters_1D} becomes, for every $\xi \in \R$,
\begin{equation*}
\widehat{\psi}_n(\xi) = \frac{1}{\sqrt{\vert \Omega_n \vert}} \widehat{\psi} \left(\frac{\omega_n-\xi}{\vert \Omega_n \vert} \right),
\end{equation*}
which is different from \Cref{eq:wavelet_bankfilters_1D} if $\widehat{\psi}$ is not even.
\end{example}

Now, we consider that $\{\Omega_n\}_{n\in\Upsilon}$ is a symmetric partition. To build wavelet filters $\chi_n$ which are mostly supported by $\Omega_n \cup \Omega_{-n}$, we assume that $\Lambda$ satisfies
\begin{equation*}
u \in \Lambda \Rightarrow -u \in \Lambda.
\end{equation*}
Necessarily, the system $\{\chi_n\}_{n\in\Upsilon}$ must be symmetric with respect to frequency band, that is, it must satisfy the property $$\widehat{\chi}_n = \widehat{\chi}_{-n}.$$
In this context, we only consider the diffeomorphisms $\gamma_n$ for $n \geq 0$. The function $\gamma_{-n}: \xi \mapsto -\gamma_n(-\xi)$ is a diffeomorphism, that verifies $\gamma_{-n}(\Omega_{-n}) = \Lambda $ when  $\gamma_n(\Omega_n)=\Lambda$, which suggests the following definition. 

\begin{definition}
The symmetric empirical wavelet system, denoted $\{\chi_n\}_{n\in\Upsilon}$, corresponding to the symmetric partition $\{\Omega_n\}_{n\in\Upsilon}$ is defined by, for every $\xi \in\R^N$,
\begin{equation}\label{eq:symwavelet_def}
\begin{cases}
&\displaystyle \widehat{\chi}_0(\xi) = \widehat{\psi}_0(\xi), \\
&\displaystyle \widehat{\chi}_n(\xi) = \frac{1}{\sqrt{2}} \left( \widehat{\psi}_n(\xi) + \widehat{\psi}_{-n}(\xi) \right) \quad \textrm{ for } n\neq 0,
\end{cases}
\end{equation}
where $\widehat{\psi}_n= \widehat{\psi} \circ \gamma_n$ and $\gamma_{-n}:\xi \mapsto -\gamma_n(-\xi)$.
\end{definition}

In this definition, $\widehat{\chi}_n$ is not necessarily even for any $n\in \Upsilon$, but ensures that the parity of $\widehat{\psi}_n$ implies the parity of $\widehat{\chi}_n$.
Moreover, we can write \Cref{eq:symwavelet_def} explicitely as follows, for every $n\in\Upsilon\setminus\{0\}$ and $\xi \in\R^N$,
\begin{equation*}
\widehat{\chi}_n(\xi) = \frac{1}{\sqrt{2}} \left(\sqrt{\vert \mathrm{det}\; J_{\gamma_n}(\xi) \vert} \;  \widehat{\psi}(\gamma_n(\xi)) + \sqrt{\vert \mathrm{det}\; J_{\gamma_n}(-\xi) \vert} \;  \widehat{\psi}(-\gamma_n(-\xi)) \right).
\end{equation*}
Thus, for $n\in\Upsilon\setminus\{0\}$ such that $\int_{\Omega_n \cup \Omega_{-n}} \widehat{\psi}(\gamma_n(\xi)) \overline{\widehat{\psi}(-\gamma_n(-\xi))} = 0$ and $\Lambda=\gamma_n(\Omega_n)$, the factor $1/\sqrt{2}$ guarantees the conservation of the energy, i.e., 
$$\int_{\Omega_n \cup \Omega_{-n}} \left\vert \widehat{\chi}_n(\xi) \right\vert^2 \mathrm{d}\xi = \int_{\Lambda} \left\vert \widehat{\psi}(u) \right\vert^2 \mathrm{d}u.$$ 
In particular, if $\partial \Omega_n$ and $\partial \Omega_{-n}$ are disjoint, the conservation of energy is guaranteed when $\widehat{\psi}$ has a compact support.

\begin{example} As in \Cref{ex:psin_1D}, we consider $N=1$, $\widehat{\psi}$ on $\R$ with support $\Lambda=(-\frac{1}{2},\frac{1}{2})$ and $\{\Omega_n\}_{n\in\Upsilon}$ a family of non-overlapping bounded intervals with center $\omega_n$ such that $\R=\bigcup_{n\in\Upsilon} \overline{\Omega_n}$. For the diffeomorphism $\gamma_n: \xi \mapsto \frac{1}{\vert \Omega_n \vert}(\xi-\omega_n)$, the symmetric empirical wavelet system reads, for every $\xi \in \R$,
\begin{equation*}
\begin{cases}
&\displaystyle \widehat{\chi}_0(\xi) = \frac{1}{\sqrt{ \vert \Omega_0 \vert}} \widehat{\psi} \left(\frac{\xi}{\vert \Omega_0 \vert} \right), \\
&\displaystyle  \widehat{\chi}_n(\xi) = \frac{1}{\sqrt{2 \vert \Omega_n \vert}} \left[ \widehat{\psi} \left(\frac{\xi-\omega_n}{\vert \Omega_n \vert} \right) + \widehat{\psi} \left(\frac{\xi+\omega_n}{\vert \Omega_n \vert} \right) \right] \quad \textrm{ for } n \neq 0.
\end{cases}
\end{equation*}
\end{example}

\begin{remark}
Due to the linearity of the inverse Fourier tranform,  the symmetric empirical wavelet system $\{\chi_n\}_{n\in\Upsilon}$ also satisfies, in the spatial domain, for every $x \in \R^N$,
\begin{equation}\label{eq:spatial_domain_symwavelet}
\chi_0(x) = \psi_0(x) \quad \textrm{ and } \quad \chi_n(x) = \frac{1}{\sqrt{2}} \left( \psi_n(x) + \psi_{-n}(x) \right) \quad \textrm{ for } n \neq 0.
\end{equation}
It is therefore sufficient to write $\psi_n$ in the spatial domain to write $\chi_n$ in the spatial domain.
\end{remark}

\subsection{Empirical wavelet transform}

In this section, we introduce continuous and discrete transforms of a function $f\in \Li(\R^N)$ based on either the empirical wavelet systems $\{\psi_n\}_{n\in\Upsilon}$ or the symmetric empirical wavelet systems $\{\chi_n\}_{n\in\Upsilon}$.

\begin{definition}\label{def:continuous_transf}
The $N$-dimensional continuous empirical wavelet transform of a  real or complex-valued function $f\in \Li(\R^N)$ is defined by, for every $n\in\Upsilon$ and $b\in\R^N$,
\begin{equation}\label{eq:cont_wave_transf}
\E_\psi^f(b,n)=\langle f, T_b\psi_n\rangle.
\end{equation}

The $N$-dimensional continuous symmetric empirical wavelet transform of a real-valued function $f\in \Li(\R^N)$ is defined by, for every $n\in\Upsilon$ and $b\in\R^N$,
\begin{equation}\label{eq:cont_wave_symtransf}
\E_\chi^f(b,n)=\langle f, T_b\chi_n\rangle.
\end{equation}
\end{definition}

In this definition, the $N$-dimensional continuous symmetric empirical wavelet transform $\E_\chi^f$ is symmetric for the frequency variable $n$ but not for the translation variable $b$.

The following proposition shows that the $N$-dimensional continuous empirical wavelet transform can be rewritten as a filtering process. It is a straight generalization of Proposition~1 of \cite{cewt}. We will adopt the notation $\psi^*(t)\equiv \psi(-t)$, and denote $\star$ and $\bullet$ the convolution and pointwise product of functions, respectively.

\begin{proposition}[Filtering process]\label{prop-ndewt}
The $N$-dimensional continuous empirical wavelet transform $\E_{\psi}^f(b,n)$ is equivalent to the convolution of $f$ with the function $\overline{\psi^*_n}$, i.e., for every $n\in\Upsilon$ and $b\in\R^N$,
\begin{equation}
\label{eq:convolution_transform}
\E_{\psi}^f(b,n) = \left(f\star\overline{\psi^*_n}\right)(b)= \F^{-1}\left(\widehat{f}\bullet\overline{\widehat{\psi}_n}\right)(b).
\end{equation}

In addition, if $f$ is a real-valued function, for every $n\in\Upsilon$ and $b\in\R^N$,
\begin{equation}
\label{eq:convolution_symtransform}
\E_{\chi}^f(b,n) = \left(f\star\overline{\chi^*_n}\right)(b)= \F^{-1}\left(\widehat{f}\bullet\overline{\widehat{\chi}_n}\right)(b).
\end{equation}
\end{proposition}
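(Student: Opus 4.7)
The plan is to unfold the inner product defining $\E_\psi^f(b,n)$ and recognize it as a convolution, then invoke the convolution theorem to obtain the Fourier-domain expression. The argument proceeds identically (up to replacing $\psi_n$ by $\chi_n$) for the symmetric transform, so I would treat the two statements in parallel, handling only the $\psi_n$ case in detail and noting that the symmetric case follows by substitution.

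First, I would write out the definition:
$$\E_\psi^f(b,n)=\langle f,T_b\psi_n\rangle=\int_{\R^N} f(x)\,\overline{\psi_n(x-b)}\,\rmd x,$$
and then rewrite the integrand using $\psi_n^*(t)\equiv \psi_n(-t)$. Since $\overline{\psi_n(x-b)}=\overline{\psi_n^*(b-x)}$, this integral takes the form
$$\int_{\R^N} f(x)\,\overline{\psi_n^*}(b-x)\,\rmd x=\bigl(f\star\overline{\psi_n^*}\bigr)(b),$$
which establishes the first equality in \Cref{eq:convolution_transform}.

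Next I would pass to the Fourier domain using the convolution theorem $\F(f\star g)=\widehat{f}\bullet\widehat{g}$, so it suffices to identify $\F(\overline{\psi_n^*})$. Using the two elementary identities $\F(g^*)(\xi)=\widehat{g}(-\xi)$ and $\F(\overline{h})(\xi)=\overline{\widehat{h}(-\xi)}$, a direct composition gives $\F(\overline{\psi_n^*})=\overline{\widehat{\psi}_n}$. Applying $\F^{-1}$ to both sides then yields the second equality of \Cref{eq:convolution_transform}.

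For \Cref{eq:convolution_symtransform}, I would simply repeat the previous computation with $\chi_n$ in place of $\psi_n$; the real-valuedness of $f$ is not used in the derivation itself but is the standing assumption that justifies working with the symmetric system $\{\chi_n\}_{n\in\Upsilon}$ in \Cref{def:continuous_transf}. No significant obstacle is anticipated: the only points requiring care are the correct bookkeeping of conjugation and reflection when translating between $T_b\psi_n$ and the convolution kernel $\overline{\psi_n^*}$, and the verification that the standard convolution theorem applies, which is immediate since $f,\psi_n\in\Li(\R^N)$.
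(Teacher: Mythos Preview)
Your proposal is correct and follows essentially the same route as the paper: unfold the inner product, recognize the convolution with $\overline{\psi_n^*}$, then compute $\F(\overline{\psi_n^*})=\overline{\widehat{\psi}_n}$ and apply the convolution theorem. The only cosmetic difference is that for the symmetric case the paper invokes the linear relation $\chi_n=\frac{1}{\sqrt{2}}(\psi_n+\psi_{-n})$ together with linearity, whereas you simply rerun the argument with $\chi_n$ in place of $\psi_n$; both are equally valid since the computation uses nothing about $\psi_n$ beyond membership in $\Li(\R^N)$.
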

\begin{proof}
Given functions $f$ and $\psi$, we have,
\begin{align*}
  \E_{\psi}^f(b,n)= \left\langle f,T_b\psi_n \right\rangle = \int_{\R^N} f(x)\overline{T_b\psi_n(x)}\mathrm{d}x &=\int_{\R^N} f(x)\overline{\psi_n(x-b)}\mathrm{d}x\\
&=\int_{\R^N} f(x)\overline{\psi^*_n(b-x)}\mathrm{d}x\\
&=\left(f\star\overline{\psi^*_n}\right)(b).
\end{align*}
This proves the first equality of \Cref{eq:convolution_transform}. Now, noticing that
\begin{align*}
\widehat{\overline{\psi^*_n}}(\xi)= \int_{\R^N} \overline{\psi_n(-x)}e^{-2\pi \imath (\xi \cdot x)}\mathrm{d}x
& =\overline{\int_{\R^N} \psi_n(-x)e^{2\pi \imath (\xi \cdot x)}\mathrm{d}x} \\
& =\overline{\int_{\R^N} \psi_n(x)e^{-2\pi \imath (\xi \cdot x)}\mathrm{d}x}=\overline{\widehat{\psi}_n}(\xi), 
\end{align*}
we can rewrite the convolution obtained above as a pointwise multiplication in the Fourier domain,
\begin{align*}
\E_{\psi}^f(b,n)=\F^{-1}\left(\F\left(f\star\overline{\psi^*_n}\right)\right)(b)= \F^{-1}\left(\widehat{f}\bullet\widehat{\overline{\psi^*_n}}\right)(b)= 
\F^{-1}\left(\widehat{f}\bullet\overline{\widehat{\psi}_n}\right)(b).
\end{align*}
This provides the second equality of \Cref{eq:convolution_transform}.

Given the relation between $\chi_n$ and $\psi_n$ in the spatial domain, which is given by \Cref{eq:spatial_domain_symwavelet}, \Cref{eq:convolution_symtransform} directly stems from \Cref{eq:convolution_transform} using the linearity of the convolution, the inner product and the inverse Fourier transform.
\end{proof}

\begin{definition}\label{def:discrete_transf}
Let $\{b_n\}_{n\in\Upsilon} \in \R\setminus\{0\}$. 
The $N$-dimensional discrete empirical wavelet transform of a real or complex-valued function $f\in \Li(\R^N)$ is defined by, for every $n\in\Upsilon$ and $k \in \mathbb{Z}^N$,
\begin{equation*}
\E_\psi^f(b_n k,n)=\langle f, T_{b_n k}\psi_n\rangle.
\end{equation*}

The $N$-dimensional discrete symmetric empirical wavelet transform of a real-valued function $f\in \Li(\R^N)$ is defined by, for every $n\in\Upsilon$ and $k \in \mathbb{Z}^N$,
\begin{equation*}
\E_\chi^f(b_n k,n)=\langle f, T_{b_n k}\chi_n\rangle,
\end{equation*}
with $b_n=b_{-n}$.
\end{definition}

\section{Frames of empirical wavelets}
\label{sec:ewf}

In this section, we provide conditions to build empirical wavelet frames for both continuous and discrete empirical wavelet transforms of a given function $f \in \Li(\R^N)$. In particular, we examine the potential reconstruction of $f$.

\subsection{Continuous frames}

In this section, we build dual frames for the two proposed systems
$$\{T_{b}\psi_n\}_{(n,b)\in\Upsilon \times \R^N} \quad \textrm{ and } \quad \{T_{b}\chi_n\}_{(n,b)\in\Upsilon^+ \times \R^N},$$
with $\Upsilon^+ = \{n \in \Upsilon \mid n \geq 0\}$, involved in the continuous wavelet transform of \Cref{def:continuous_transf}. This allows to find sufficients conditions for these systems to be tight frames.

First, we consider the system $\{\psi_n\}_{n\in\Upsilon}$.
The following theorem guarantees the exact reconstruction of a function $f$ from the continuous empirical wavelet transform $\E_{\psi}^f$ given by \Cref{eq:cont_wave_transf}. It is a straight generalization of Proposition~2 of \cite{cewt} to the $N$-dimensional case.

\begin{theorem}[Continuous dual frame]\label{theo:icewt}
Let assume that, for $a.e.\, \xi \in \R^N$,
$$\displaystyle 0<\sum_{n\in\Upsilon}\left|\widehat{\psi}_n(\xi)\right|^2<\infty.$$
Then, for every $x \in \R^N$,
\begin{align}\label{eq:reconstruction}
f(x) = \sum_{n\in\Upsilon}\left(\E_{\psi}^f(\cdot,n)\star\widetilde\psi_n\right)(x) = \sum_{n\in\Upsilon}\int_{b \in \R^N}\langle f, T_b\psi_n \rangle T_b\widetilde{\psi}_n(x) \mathrm{d}b,
\end{align}
where the set of dual empirical wavelets $\{\widetilde\psi_n\}_{n\in\Upsilon}$ is defined by, for every $n\in\Upsilon$ and $\xi\in\R^N$,
\begin{equation*}
\widehat{\widetilde\psi_n}(\xi)=\frac{\widehat{\psi}_n(\xi)}{\displaystyle\sum_{n\in\Upsilon}\left|\widehat{\psi}_n(\xi)\right|^2}.
\end{equation*}
\end{theorem}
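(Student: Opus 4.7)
The plan is to work in the Fourier domain and exploit the filtering characterization of $\E_\psi^f$ provided by \Cref{prop-ndewt}. First, I would observe that \Cref{prop-ndewt} gives $\F(\E_\psi^f(\cdot,n))(\xi) = \widehat{f}(\xi)\overline{\widehat{\psi}_n(\xi)}$, so the Fourier transform of each convolution $\E_\psi^f(\cdot,n)\star\widetilde{\psi}_n$ equals $\widehat{f}(\xi)\,\overline{\widehat{\psi}_n(\xi)}\,\widehat{\widetilde{\psi}_n}(\xi)$. Substituting the definition of $\widehat{\widetilde{\psi}_n}$, this equals $\widehat{f}(\xi)\,|\widehat{\psi}_n(\xi)|^2 \big/ \sum_{m\in\Upsilon}|\widehat{\psi}_m(\xi)|^2$, which is well-defined for a.e.\ $\xi$ by the nonvanishing assumption on the denominator.

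Second, I would sum over $n\in\Upsilon$. The hypothesis $0<\sum_{n\in\Upsilon}|\widehat{\psi}_n(\xi)|^2<\infty$ ensures that the pointwise sum collapses cleanly:
$$\sum_{n\in\Upsilon}\F\bigl(\E_\psi^f(\cdot,n)\star\widetilde{\psi}_n\bigr)(\xi) \;=\; \widehat{f}(\xi)\cdot\frac{\sum_{n\in\Upsilon}|\widehat{\psi}_n(\xi)|^2}{\sum_{m\in\Upsilon}|\widehat{\psi}_m(\xi)|^2} \;=\; \widehat{f}(\xi)\quad \textrm{a.e.}$$
Applying the inverse Fourier transform yields the first equality of \Cref{eq:reconstruction}. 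The second equality then follows directly from unfolding the convolution and using \Cref{def:continuous_transf}: at a point $x$, $(\E_\psi^f(\cdot,n)\star\widetilde{\psi}_n)(x) = \int_{\R^N}\langle f,T_b\psi_n\rangle\, \widetilde{\psi}_n(x-b)\,\mathrm{d}b = \int_{\R^N}\langle f,T_b\psi_n\rangle\, T_b\widetilde{\psi}_n(x)\,\mathrm{d}b$.

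The main subtlety I expect is rigorously justifying the interchange of the infinite sum with the Fourier transform and the inverse Fourier transform. I would handle this by arguing that the partial sums converge in $\Li(\R^N)$: on the Fourier side, the partial sums are bounded pointwise a.e.\ by $|\widehat{f}(\xi)|$ (since the truncated ratio $\sum_{n\in F}|\widehat{\psi}_n(\xi)|^2 \big/ \sum_{m\in\Upsilon}|\widehat{\psi}_m(\xi)|^2$ lies in $[0,1]$) and converge pointwise a.e.\ to $\widehat{f}(\xi)$; dominated convergence then gives $\Li$-convergence, and Plancherel transfers this convergence back to the spatial domain. A brief remark would also record that the dual filters $\widetilde{\psi}_n$ are well-defined elements of $\Li(\R^N)$ for the same reason, since $|\widehat{\widetilde{\psi}_n}(\xi)|\leq|\widehat{\psi}_n(\xi)|/\sum_{m}|\widehat{\psi}_m(\xi)|^2$ and the denominator is positive a.e.
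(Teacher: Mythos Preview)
Your proposal is correct and follows essentially the same route as the paper: pass to the Fourier side via \Cref{prop-ndewt}, substitute the definition of $\widehat{\widetilde\psi_n}$, observe that the sum over $n$ telescopes to $\widehat f$, then invert and unfold the convolution for the second equality. Your added dominated-convergence justification for interchanging the sum with the Fourier transform is in fact more careful than the paper's own argument, which performs the interchange without comment.
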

\begin{proof}
Using the Fourier transform and its inverse, we can write
\begin{align*}
\sum_{n\in\Upsilon}\left(\E_{\psi}^f(\cdot,n)\star\widetilde\psi_n\right)&=\F^{-1}\left(\F\left(\sum_{n\in\Upsilon}\left(\E_{\psi}^f(\cdot,n)\star\widetilde\psi_n\right)\right)\right)\\
&=\F^{-1}\left(\sum_{n\in\Upsilon}\widehat{\E_{\psi}^f}(\cdot,n) \bullet \widehat{\widetilde\psi_n}\right)\\
&=\F^{-1}\left(\sum_{n\in\Upsilon}\widehat{f} \bullet \overline{\widehat{\psi}_n} \bullet \widehat{\widetilde\psi_n}\right)\\
&=\F^{-1}\left(\widehat{f} \bullet \sum_{n\in\Upsilon}\overline{\widehat{\psi}_n} \bullet \frac{\widehat{\psi_n}}{\sum_{n\in\Upsilon}\left|\widehat{\psi}_n\right|^2}\right)\\
&=\F^{-1}\left(\widehat{f}\right)=f.
\end{align*}
This proves the first equality of \Cref{eq:reconstruction}.
Moreover, we can rewrite 
\begin{align*}
\sum_{n\in\Upsilon}\left(\E_{\psi}^f(\cdot,n)\star\widetilde\psi_n\right)(x) &= \sum_{n\in\Upsilon}\int_{b \in \R^N}\E_{\psi}^f(b,n)\widetilde{\psi}_n(x-b) \mathrm{d}b \\
&= \sum_{n\in\Upsilon}\int_{b \in \R^N}\langle f, T_b\psi_n \rangle T_b\widetilde{\psi}_n(x) \mathrm{d}b.
\end{align*}
This proves the second equality of \Cref{eq:reconstruction}.
\end{proof}

A particular case of the previous theorem is given by the following corollary.

\begin{corollary}[Continuous tight frame]\label{coro:tightewt}
If, for $a.e.\, \xi \in \R^N$,
$$\displaystyle 0<\sum_{n\in\Upsilon}\left|\widehat{\psi}_n(\xi)\right|^2 =A<\infty,$$
then $\{T_b\psi_n\}_{(n,b)\in\Upsilon \times \R^N}$ is a continuous tight frame. Specifically, for every $x \in \R^N$,
\begin{equation*}
f(x)= \frac{1}{A}\sum_{n\in\Upsilon}\int_{b \in \R^N}\langle f, T_b\psi_n \rangle T_b\psi_n(x) \mathrm{d}b.
\end{equation*}
\end{corollary}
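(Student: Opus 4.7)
The plan is to deduce this corollary directly from \Cref{theo:icewt} and then verify the tight frame inequality via Plancherel's identity. Since $0 < A < \infty$, the hypothesis of \Cref{theo:icewt} is satisfied, so the reconstruction formula \Cref{eq:reconstruction} applies with the dual wavelets. Under the constant-sum assumption, the dual wavelets simplify to
$$\widehat{\widetilde{\psi}_n}(\xi) = \frac{\widehat{\psi}_n(\xi)}{A},$$
so by linearity of the inverse Fourier transform, $\widetilde{\psi}_n = \psi_n/A$. Substituting into the reconstruction formula of \Cref{theo:icewt} yields the stated formula, which I would write as a single short derivation.

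Next, I would verify that $\{T_b \psi_n\}_{(n,b) \in \Upsilon \times \R^N}$ is indeed a tight frame with constant $A$ in the sense of the definition recalled in the preliminaries. Using the Fourier-domain expression of the transform given in \Cref{prop-ndewt}, namely $\E_\psi^f(\cdot,n) = \F^{-1}(\widehat{f} \bullet \overline{\widehat{\psi}_n})$, I would apply Plancherel's identity in the $b$ variable to obtain
$$\int_{\R^N} \left| \E_\psi^f(b,n) \right|^2 \rmd b = \int_{\R^N} \left| \widehat{f}(\xi) \right|^2 \left| \widehat{\psi}_n(\xi) \right|^2 \rmd \xi.$$
Summing over $n\in\Upsilon$ and exchanging the sum and integral via Fubini-Tonelli (valid since the integrand is nonnegative), the inner sum becomes the constant $A$, giving
$$\sum_{n\in\Upsilon} \int_{\R^N} \left| \langle f, T_b\psi_n \rangle \right|^2 \rmd b = A \left\lVert \widehat{f} \right\rVert_{\Li}^2 = A \lVert f \rVert_{\Li}^2,$$
which is exactly the tight frame inequality with $A_1=A_2=A$.

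As an alternative route for the tightness, one could invoke the equivalence recalled in the preliminaries: a frame $\{g_p\}$ is tight if and only if $\{g_p/A\}$ is a dual frame; since we already identified $\widetilde{\psi}_n = \psi_n/A$, tightness is immediate once the frame property is established. I do not foresee any real obstacle here: the only minor point to be careful about is the justification of swapping the sum over $n$ with the integral over $\xi$, which is routine under nonnegativity, and ensuring that the inverse Fourier transform/Plancherel identity is applied to the correct variable.
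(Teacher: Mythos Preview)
Your proposal is correct and its core step---applying \Cref{theo:icewt} and observing that the dual wavelets reduce to $\widetilde{\psi}_n=\psi_n/A$---is exactly the paper's proof. The paper stops there, implicitly relying on the equivalence recalled in the preliminaries (a frame is tight iff $\{g_p/A\}$ is its own dual) to conclude tightness, whereas you additionally spell out the Plancherel computation yielding $\sum_n\int|\langle f,T_b\psi_n\rangle|^2\,\rmd b = A\|f\|_{\Li}^2$; this extra step is sound and arguably makes the ``tight frame'' claim more self-contained, but it is not a different route, just a more explicit one.
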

\begin{proof}
From \Cref{theo:icewt} with 
\begin{equation*}
 \widehat{\widetilde\psi_n}(\xi)=\frac{\widehat{\psi}_n(\xi)}{\sum_{n\in\Upsilon}\left|\widehat{\psi}_n(\xi)\right|^2}=\frac{\widehat{\psi}_n(\xi)}{A}, 
\end{equation*}
it follows that
\begin{equation*}
f(x) = \frac{1}{A}\sum_{n\in\Upsilon}\left(\E_{\psi}^f(\cdot,n)\star\psi_n\right)(x) = \frac{1}{A}\sum_{n\in\Upsilon}\int_{b \in \R^N}\langle f, T_b\psi_n \rangle T_b\psi_n(x) \mathrm{d}b.
\end{equation*}
\end{proof}

Now, we consider the symmetric wavelet filter bank $\{\chi_n\}_{n\in\Upsilon^+}$. The following theorem guarantees the exact reconstruction of a real-valued function $f$ from the continuous symmetric empirical wavelet transform $\E_{\chi}^f$ given in \Cref{eq:cont_wave_symtransf}.

\begin{theorem}[Continuous symmetric dual frame]\label{theo:symicewt}
Let assume that, for $a.e.\, \xi \in \R^N$,
$$\displaystyle 0<\sum_{n\in\Upsilon^+}\left|\widehat{\psi}_n(\xi)\right|^2<\infty \quad \textrm{ and } \quad \sum_{n\in\Upsilon^+\setminus\{0\}}\left|\widehat{\psi}_n(\xi)\right| \left|\widehat{\psi}_{-n}(\xi)\right|<\infty.$$
Then, for any $x \in \R^N$,
\begin{align}
\label{eq:symreconstruction}
f(x) = \sum_{n\in\Upsilon^+}\left(\E_{\chi}^f(\cdot,n)\star \tilde \chi_n\right)(x)  = \sum_{n\in\Upsilon^+}\int_{b \in \R^N}\langle f, T_b\chi_n \rangle T_b\widetilde{\chi}_n(x) \mathrm{d}b,
\end{align}
where the set of dual symmetric empirical wavelets $\{\widetilde\chi_n\}_{n\in\Upsilon}$ is defined by, for every $n\in\Upsilon$ and $\xi\in\R^N$,
\begin{equation*}
\widehat{\widetilde\chi_n}(\xi)=\frac{\widehat{\chi}_n(\xi)}{\displaystyle\sum_{n\in\Upsilon^+}\left|\widehat{\chi}_n(\xi)\right|^2}.
\end{equation*}
\end{theorem}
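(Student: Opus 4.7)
The plan is to mirror the proof of \Cref{theo:icewt}, passing to the Fourier domain via the filtering identity of \Cref{prop-ndewt}. The main work is to show that the hypotheses on $\{\widehat\psi_n\}$ guarantee that the denominator $\sum_{m\in\Upsilon^+}|\widehat\chi_m(\xi)|^2$ defining $\widehat{\widetilde\chi_n}$ is a.e.\ positive and finite, so that $\widetilde\chi_n$ is well-defined.

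First I would expand, using \Cref{eq:symwavelet_def},
\[
|\widehat\chi_n|^2 = \tfrac{1}{2}\bigl(|\widehat\psi_n|^2 + |\widehat\psi_{-n}|^2 + 2\,\mathrm{Re}(\widehat\psi_n\,\overline{\widehat\psi_{-n}})\bigr) \qquad (n\neq 0),
\]
with $|\widehat\chi_0|^2=|\widehat\psi_0|^2$. The second hypothesis $\sum_{n\in\Upsilon^+\setminus\{0\}}|\widehat\psi_n||\widehat\psi_{-n}|<\infty$ directly controls the cross terms. For the diagonal terms, the first hypothesis controls $\sum |\widehat\psi_n|^2$; the companion sum $\sum_{n\in\Upsilon^+\setminus\{0\}}|\widehat\psi_{-n}(\xi)|^2$ equals $\sum_{n\in\Upsilon^+\setminus\{0\}}|\widehat\psi_n(-\xi)|^2$ by the relation $\gamma_{-n}(\xi)=-\gamma_n(-\xi)$ and the change of variable $\xi\mapsto-\xi$, and is therefore a.e.\ finite as well. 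Finiteness a.e.\ of $\sum_{n\in\Upsilon^+}|\widehat\chi_n|^2$ follows, while positivity is inherited from the lower bound on $\sum_{n\in\Upsilon^+}|\widehat\psi_n|^2$ (even a single nonvanishing term $|\widehat\chi_0|^2=|\widehat\psi_0|^2$ suffices whenever it is the only nonzero one).

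Second, once $\widehat{\widetilde\chi_n}$ is well-defined, I would apply \Cref{eq:convolution_symtransform} of \Cref{prop-ndewt} to write $\widehat{\E^f_\chi}(\cdot,n)=\widehat f\bullet\overline{\widehat\chi_n}$, and compute, exactly as in the proof of \Cref{theo:icewt},
\[
\F\!\left(\sum_{n\in\Upsilon^+}\E^f_\chi(\cdot,n)\star\widetilde\chi_n\right)
= \widehat f \bullet \sum_{n\in\Upsilon^+}\overline{\widehat\chi_n}\bullet\frac{\widehat\chi_n}{\sum_{m\in\Upsilon^+}|\widehat\chi_m|^2}
= \widehat f.
\]
The interchange of sum and Fourier transform is justified by the summability bounds obtained above (dominated convergence). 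Applying $\F^{-1}$ then yields the first equality of \Cref{eq:symreconstruction}. The second equality is a purely formal rewriting of the convolution integral using $\widetilde\chi_n(x-b)=T_b\widetilde\chi_n(x)$ and the definition of $\E^f_\chi(b,n)=\langle f,T_b\chi_n\rangle$, identical to the corresponding step in \Cref{theo:icewt}.

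The main obstacle is the first step: unlike the non-symmetric case, where summability of $\sum|\widehat\psi_n|^2$ immediately implies summability of $\sum|\widehat\psi_n|^2$, here the cross terms $\widehat\psi_n\overline{\widehat\psi_{-n}}$ obstruct a direct telescoping, which is precisely why a separate summability hypothesis on $|\widehat\psi_n||\widehat\psi_{-n}|$ is assumed. Once these two hypotheses are in place, the rest of the argument is a line-by-line adaptation of the proof of \Cref{theo:icewt}.
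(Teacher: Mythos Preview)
Your proposal is correct and follows the paper's proof essentially line for line: expand $\sum_{n\in\Upsilon^+}|\widehat\chi_n|^2$ via \Cref{eq:symwavelet_def}, use the two hypotheses to see it is a.e.\ finite (and positive) so that $\widehat{\widetilde\chi_n}$ is well-defined, and then repeat verbatim the Fourier-side telescoping from the proof of \Cref{theo:icewt}. The paper is in fact terser than your outline---it writes the expansion as a single displayed chain and then simply invokes ``the same computation as in the proof of \Cref{theo:icewt}''---and, like you, it does not dwell on the control of $\sum_{n\in\Upsilon^+}|\widehat\psi_{-n}(\xi)|^2$ (your symmetry remark $\gamma_{-n}(\xi)=-\gamma_n(-\xi)$ only gives $|\widehat\psi_{-n}(\xi)|=|\widehat\psi_n(-\xi)|$ when $|\widehat\psi|$ is even, but the paper leaves this implicit as well).
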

\begin{proof}
First, we have
\begin{align*}
\sum_{n\in\Upsilon^+}\left|\widehat{\chi}_n(\xi)\right|^2 &= \left|\widehat{\psi}_0(\xi)\right|^2 + \frac{1}{2} \sum_{n\in\Upsilon^+\setminus\{0\} }\left|\widehat{\psi}_n(\xi)+\widehat{\psi}_{-n}(\xi)\right|^2 \\
&= \frac{1}{2} \sum_{n\in\Upsilon^+}\left|\widehat{\psi}_n(\xi)\right|^2+\frac{1}{2}\sum_{n\in\Upsilon^+}\left|\widehat{\psi}_{-n}(\xi)\right|^2+\sum_{n\in\Upsilon^+\setminus\{0\}}\left|\widehat{\psi}_n(\xi)\right| \left|\widehat{\psi}_{-n}(\xi)\right| < \infty,
\end{align*}
which ensures that the symmetric empirical wavelet filters $\widetilde\chi_n$ are well defined. Then, \Cref{eq:symreconstruction} stems from the same computation as in the proof of \Cref{theo:icewt}.
\end{proof}

Similarly to \Cref{coro:tightewt}, the following corollary gives a particular case of the previous theorem.

\begin{corollary}[Continuous symmetric tight frame]\label{coro:tightsymewt}
If, for $a.e.\, \xi \in \R^N$,
$$\displaystyle 0<\sum_{n\in\Upsilon^+}\left|\widehat{\psi}_n(\xi)\right|^2=A<\infty \quad \textrm{ and } \quad \sum_{n\in\Upsilon^+\setminus\{0\}}\left|\widehat{\psi}_n(\xi)\right| \left|\widehat{\psi}_{-n}(\xi)\right|=B<\infty,$$
then $\{T_b\chi_n\}_{(n,b)\in\Upsilon^+ \times \R^N}$ is a continuous tight frame. Specifically, for every $x \in \R^N$,
\begin{equation*}
f(x)= \frac{1}{A+B}\sum_{n\in\Upsilon^+}\int_{b \in \R^N}\langle f, T_b\chi_n \rangle T_b\chi_n(x) \mathrm{d}b.
\end{equation*} 
\end{corollary}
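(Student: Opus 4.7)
The strategy is to derive this as a direct specialization of Theorem~2: once we verify that the normalizing sum $\sum_{n\in\Upsilon^+}|\widehat{\chi}_n(\xi)|^2$ is the constant $A+B$, the dual wavelets collapse to $\widetilde{\chi}_n = \chi_n/(A+B)$, and the reconstruction formula of Theorem~2 immediately yields the claimed identity.

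First, I would reuse the pointwise expansion already carried out inside the proof of Theorem~2. Writing $|\widehat{\chi}_0|^2 + \tfrac{1}{2}\sum_{n\in\Upsilon^+\setminus\{0\}}|\widehat{\psi}_n+\widehat{\psi}_{-n}|^2$ and regrouping, we get
$$\sum_{n\in\Upsilon^+}|\widehat{\chi}_n(\xi)|^2 = \tfrac{1}{2}\sum_{n\in\Upsilon^+}|\widehat{\psi}_n(\xi)|^2 + \tfrac{1}{2}\sum_{n\in\Upsilon^+}|\widehat{\psi}_{-n}(\xi)|^2 + \sum_{n\in\Upsilon^+\setminus\{0\}}|\widehat{\psi}_n(\xi)||\widehat{\psi}_{-n}(\xi)|.$$
The last sum equals $B$ by hypothesis, and the first equals $\tfrac{1}{2}A$. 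For the middle sum, I would invoke the symmetric construction $\gamma_{-n}(\xi)=-\gamma_n(-\xi)$ together with the standard parity property $|\widehat{\psi}(-u)|=|\widehat{\psi}(u)|$ of the kernel (which is the natural setting for symmetric partitions), to obtain the pointwise identity $|\widehat{\psi}_{-n}(\xi)| = |\widehat{\psi}_n(-\xi)|$; since the hypothesis $\sum_{n\in\Upsilon^+}|\widehat{\psi}_n(\xi)|^2 = A$ holds a.e.\ and is therefore invariant under $\xi\mapsto -\xi$, the middle sum also equals $\tfrac{1}{2}A$. Adding the three contributions gives $\sum_{n\in\Upsilon^+}|\widehat{\chi}_n(\xi)|^2 = A+B$ for a.e.\ $\xi\in\R^N$.

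With this constant in hand, the dual wavelets from Theorem~2 simplify to $\widehat{\widetilde{\chi}_n}(\xi) = \widehat{\chi}_n(\xi)/(A+B)$, i.e.\ $\widetilde{\chi}_n = \chi_n/(A+B)$. Substituting into the formula $f(x) = \sum_{n\in\Upsilon^+}\int_{\R^N}\langle f,T_b\chi_n\rangle\, T_b\widetilde{\chi}_n(x)\,\mathrm{d}b$ from Theorem~2 produces the stated reconstruction identity, and the equivalence recalled in Section~2 (a frame whose dual is a scalar multiple of itself is a tight frame) concludes that $\{T_b\chi_n\}_{(n,b)\in\Upsilon^+\times\R^N}$ is tight with constant $A+B$.

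The main non-cosmetic step is arguing that the second pointwise sum equals $\tfrac{1}{2}A$: this is exactly where the symmetric construction of the $\gamma_n$ together with the parity of $|\widehat{\psi}|$ must be used, and it isolates the only compatibility condition on the kernel that the corollary implicitly requires. Everything else is algebraic rearrangement inherited from Theorem~2.
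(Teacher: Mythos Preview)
Your approach is exactly the paper's: expand $\sum_{n\in\Upsilon^+}|\widehat{\chi}_n(\xi)|^2$ into the three pieces coming from $|\widehat{\psi}_n+\widehat{\psi}_{-n}|^2$, recognize the total as the constant $A+B$, and then feed this into Theorem~2 so that $\widetilde{\chi}_n=\chi_n/(A+B)$. The paper's proof simply writes the three-term expansion and declares it equal to $A+B$ without further comment; you go one step further and justify why the middle sum $\tfrac{1}{2}\sum_{n\in\Upsilon^+}|\widehat{\psi}_{-n}(\xi)|^2$ also equals $\tfrac{1}{2}A$, via the relation $\gamma_{-n}(\xi)=-\gamma_n(-\xi)$ and the parity $|\widehat{\psi}(-u)|=|\widehat{\psi}(u)|$. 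That observation is correct and is indeed the hidden ingredient the paper is tacitly using (it holds for the Gabor and Shannon kernels treated later, and the symmetric setup already assumes $\Lambda$ is symmetric about the origin), so your write-up is a faithful and slightly more explicit version of the paper's own argument.
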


\begin{proof}
We can write
\begin{align*}
\sum_{n\in\Upsilon^+}\left|\widehat{\chi}_n(\xi)\right|^2 &= \frac{1}{2}\sum_{n\in\Upsilon^+}\left|\widehat{\psi}_n(\xi)\right|^2+\frac{1}{2}\sum_{n\in\Upsilon^+}\left|\widehat{\psi}_{-n}(\xi)\right|^2+\sum_{n\in\Upsilon^+\setminus\{0\}}\left|\widehat{\psi}_n(\xi)\right| \left|\widehat{\psi}_{-n}(\xi)\right| = A+B,
\end{align*}
and the result follows from the fact that $\widetilde{\chi}_n = \chi_n/(A+B)$.
\end{proof}

\subsection{Discrete frames}

In this section, we exhibit conditions to build discrete wavelet frames involved in the wavelet transforms of \Cref{def:discrete_transf}. These conditions depend on the compactness of the support of the wavelet kernel's Fourier transform $\widehat{\psi}$. The two underlying cases are examined separately.

\subsubsection{Compactly supported \texorpdfstring{$\boldsymbol{\widehat{\psi}}$}{Lg}}

In this section, we consider that $\widehat{\psi}$ has a compact support. 
Excluding the supports of a partition $\{\Omega_n\}_{n\in\Upsilon}$ with non-compact closure, we can state sufficient and necessary conditions for which the systems 
$$\{T_{b_nk}\psi_n\}_{(n,k)\in\Upsilon_{\rm comp} \times \Z^N} \quad \textrm{ and } \quad \{T_{b_nk}\chi_n\}_{(n,k)\in\Upsilon_{\rm comp}^+ \times \Z^N},$$
are Parseval frames, with
$$\Upsilon_{\rm comp} = \{n\in \Upsilon \mid  \overline{\Omega_n} \textrm{ is compact}\} \; \textrm{ and } \; \Upsilon_{\rm comp}^+ = \{n\in \Upsilon^+ \mid  \overline{\Omega_n} \textrm{ is compact}\}. $$

The following theorem first gives a sufficient and necessary condition to build a tight frame from $\{\psi_n\}_{n\in\Upsilon_{\rm comp}}$. It is a straight generalization of Theorems~4-7 of \cite{Gilles2024frame} to the $N$-dimensional case.

\begin{theorem}[Discrete Parseval frame]
Let us denote $\mathrm{L}_{\rm comp}^2(\R^N) = \{f \in \mathrm{L}^2(\R^N) \mid \mathrm{supp} \, \widehat{f} \subseteq \Gamma_{\rm comp} \}$ and $\Gamma_{\rm comp} = \underset{n \in \Upsilon_{\rm comp}}{\bigcup} \overline{\Omega_n}$.
The system $\{T_{b_nk}\psi_n\}_{(n,k)\in\Upsilon_{\rm comp} \times \Z^N}$ is a Parseval frame for $\mathrm{L}_{\rm comp}^2(\R^N)$ if and only if, for $ a.e.\, \xi \in \R^N $,
\begin{equation*}
\sum_{n \in \Upsilon_\alpha} \frac{1}{\vert b_n \vert} \widehat{\psi}_n(\xi) \overline{\widehat{\psi}_n(\xi+\alpha)} = \delta_{\alpha,0},
\end{equation*}
for every $\alpha \in \mathcal{K}$, where
$$\mathcal{K} = \underset{n \in \Upsilon_{\rm comp}}{\bigcup}b_n^{-1} \mathbb{Z}^N, \qquad \Upsilon_\alpha = \{n\in\Upsilon_{\rm comp} \mid b_n \alpha  \in\mathbb{Z}^N\}, $$
and $\delta_{\alpha,0}$ stands for the Kronecker delta function on $\R^N$, i.e., $\delta_{\alpha,0} = 1$ if $\alpha=0$ and $\delta_{\alpha,0} = 0$ otherwise.
\end{theorem}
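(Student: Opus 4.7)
The plan is to rewrite the double sum $\sum_{n\in\Upsilon_{\rm comp}}\sum_{k\in\Z^N}|\langle f,T_{b_nk}\psi_n\rangle|^2$ as a single integral in $\widehat f$ involving the bracket
$$H_\alpha(\xi):=\sum_{n\in\Upsilon_\alpha}\frac{1}{|b_n|}\widehat{\psi}_n(\xi)\overline{\widehat{\psi}_n(\xi+\alpha)},$$
and then to match it against $\|f\|_{\Li}^2$ using a polarization argument to extract the stated pointwise identity.

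First, for each fixed $n\in\Upsilon_{\rm comp}$, I would use Plancherel to rewrite
$$\langle f,T_{b_nk}\psi_n\rangle = \int_{\R^N}\widehat{f}(\xi)\overline{\widehat{\psi}_n(\xi)}\,e^{2\pi\imath b_n k\cdot\xi}\,\rmd\xi,$$
recognize it (up to normalization) as a Fourier coefficient of the $b_n^{-1}\Z^N$-periodization of $\widehat{f}\,\overline{\widehat{\psi}_n}$, and apply Parseval's identity on the torus $\R^N/b_n^{-1}\Z^N$. Unfolding the squared modulus back to the full space gives
$$\sum_{k\in\Z^N}\big|\langle f,T_{b_nk}\psi_n\rangle\big|^2 = \frac{1}{|b_n|}\sum_{m\in\Z^N}\int_{\R^N}\overline{\widehat{f}(\xi)}\,\widehat{f}(\xi+m/b_n)\,\widehat{\psi}_n(\xi)\,\overline{\widehat{\psi}_n(\xi+m/b_n)}\,\rmd\xi.$$
The compact supports of $\widehat f$ in $\Gamma_{\rm comp}$ and of every $\widehat{\psi}_n$ in $\overline{\Omega_n}$ make the $m$-sum pointwise finite and justify Fubini. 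Summing in $n$ and regrouping the $(n,m)$-double sum by the single index $\alpha=m/b_n\in\mathcal K$, whose integer constraint $b_n\alpha\in\Z^N$ is exactly the definition of $\Upsilon_\alpha$, yields the master identity
$$\sum_{n,k}\big|\langle f,T_{b_nk}\psi_n\rangle\big|^2 = \sum_{\alpha\in\mathcal K}\int_{\R^N}\overline{\widehat{f}(\xi)}\,\widehat{f}(\xi+\alpha)\,H_\alpha(\xi)\,\rmd\xi.$$

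For sufficiency, if $H_\alpha\equiv\delta_{\alpha,0}$ almost everywhere, only the $\alpha=0$ term survives and the right-hand side collapses to $\int|\widehat f|^2=\|f\|_{\Li}^2$. For necessity, I would polarize the Parseval equality into the sesquilinear reproducing formula
$$\langle f,g\rangle=\sum_{\alpha\in\mathcal K}\int\overline{\widehat{f}(\xi)}\,\widehat{g}(\xi+\alpha)\,H_\alpha(\xi)\,\rmd\xi, \qquad f,g\in\mathrm{L}^2_{\rm comp}(\R^N),$$
and test it against pairs $(f,g)$ whose Fourier transforms are concentrated in arbitrarily small balls centered at a common Lebesgue point $\xi_0$ and at $\xi_0+\alpha_0$ respectively; shrinking the radius extracts $H_{\alpha_0}(\xi_0)=\delta_{\alpha_0,0}$ for a.e. $\xi_0$.

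The main obstacle is the necessity step: $\mathcal K=\bigcup_{n\in\Upsilon_{\rm comp}}b_n^{-1}\Z^N$ is a countable union of lattices whose points can accumulate inside $\Gamma_{\rm comp}$, so the test-ball radii around $\xi_0$ and $\xi_0+\alpha_0$ must be chosen small enough that no competing shift $\alpha\in\mathcal K\setminus\{\alpha_0\}$ sends the first ball into the second. Handling this requires a local finiteness argument on compact sub-domains of $\Gamma_{\rm comp}$, which in turn is what guarantees absolute convergence of the $\alpha$-sum in the master identity and legitimizes the isolation procedure used to conclude pointwise a.e. rather than only in a distributional sense.
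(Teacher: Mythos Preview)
Your argument is essentially correct and follows the standard route to results of this type: periodize $\widehat f\,\overline{\widehat\psi_n}$, apply Parseval on the torus, regroup by shift $\alpha$, and then use localization for the necessity direction. The paper, however, does not carry out any of this. Its proof consists of a direct appeal to Theorem~2.1 of Hern\'andez--Labate--Weiss \cite{hernandez2002unified}, which already packages the equivalence you derive; the only work the paper does is to check the local integrability hypothesis of that theorem, and even this it outsources to the proof of Theorem~4 in \cite{Gilles2024frame}, observing that compactness of $\mathrm{supp}\,\widehat f$ and of each $\mathrm{supp}\,\widehat\psi_n$ forces only finitely many $n$ to contribute on any bounded set. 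So what you have written is, in effect, a self-contained reproof of the cited black box rather than a different mathematical strategy.

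Two small remarks on your write-up. First, $\widehat\psi_n$ is compactly supported because $\gamma_n$ is a diffeomorphism (hence proper) and $\widehat\psi$ is compactly supported, but its support need not lie in $\overline{\Omega_n}$ exactly; what matters, and what you use, is only compactness. Second, the local finiteness you flag at the end is precisely the content of the hypothesis the paper verifies: on any compact $K\subset\Gamma_{\rm comp}$ only finitely many $\overline{\Omega_n}$ meet $K$, hence only finitely many $b_n$ are in play, so the relevant shifts form a locally finite (indeed discrete) set and the isolation-by-small-balls step goes through. With that observation your necessity argument closes cleanly.
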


\begin{proof}
Let us denote $\mathcal{D} = \{f \in \mathrm{L}^2(\R^N) \mid \widehat{f}\in \mathrm{L}^\infty(\R^N) \textrm{ and } \mathrm{supp}\, \widehat{f} \subset \Gamma_{\rm comp}\}.$
Theorem 2.1 in \cite{hernandez2002unified} states, with $g_p=\psi_n$, $C_p=b_n$ and $\mathcal{P}= \Upsilon_{\rm comp} $, the desired equivalence under the condition that
$$\forall f \in \mathcal{D}, \;  \sum_{n \in \Upsilon_{\rm comp}} \sum_{k\in \mathbb{Z}^N} \int_{\mathrm{supp}\, \widehat{f}} \left\vert \widehat{f}(\xi + b_n^{-1}k) \right \vert^2 \frac{1}{\vert b_n \vert} \left\vert \widehat{\psi}_n(\xi) \right\vert^2 \mathrm{d}\xi < \infty.$$
The condition above is given by the proof of Theorem~4 in \cite{Gilles2024frame}, replacing $\mathbb{Z}$ by $\mathbb{Z}^N$. This results comes from the fact that $\mathrm{supp} \, \widehat{f}$ and $\mathrm{supp} \, \widehat{\psi}_n$ are compact and that there are finitely many $\mathrm{supp}\, \widehat{\psi}_n$ that intersect $\mathrm{supp}\, \widehat{f}$. This proves the equivalence.
\end{proof}

\begin{remark}
The previous theorem implicitely permits to easily build dual frames $\widetilde{\psi}_n$ when, for $ a.e. \, \xi \in \R^N$,
$$\sum_{n\in\Upsilon} \frac{1}{\vert b_n \vert} \left|\widehat{\psi}_n(\xi)\right|^2<\infty,$$
as follows: 
\begin{equation*}
\widehat{\widetilde\psi_n}(\xi)=\frac{\widehat{\psi}_n(\xi)}{\displaystyle\sum_{n\in\Upsilon} \frac{1}{\vert b_n \vert} \left|\widehat{\psi}_n(\xi)\right|^2}.
\end{equation*}
\end{remark}

Similarly, the following theorem gives a sufficient and necessary condition to build a tight frame from the symmetric filter bank $\{\chi_n\}_{n\in\Upsilon_{\rm comp}^+}$.

\begin{theorem}[Discrete symmetric Parseval frame]
Let us denote $\mathrm{L}_{\rm comp}^2(\R^N) = \{f \in \mathrm{L}^2(\R^N) \mid \mathrm{supp} \, \widehat{f} \subseteq \Gamma_{\rm comp} \}$ and $\Gamma_{\rm comp} = \underset{n \in \Upsilon_{\rm comp}}{\bigcup} \overline{\Omega_n}$.
Then, the system $\{T_{b_nk}\chi_n \}_{(n,b)\in\Upsilon_{\rm comp}^+ \times \Z^N}$ is a Parseval frame for $\mathrm{L}_{\rm comp}^2(\R^N)$ if and only if, for $ a.e.\, \xi \in \R^N$,
\begin{equation}
\sum_{n \in \Upsilon_\alpha^+} \frac{1}{\vert b_n \vert} \widehat{\chi}_n(\xi) \overline{\widehat{\chi}_n(\xi+\alpha)} = \delta_{\alpha,0} ,
\end{equation}
for every $\alpha \in \mathcal{K}^+$, where
$$\mathcal{K}^+ = \underset{n \in \Upsilon_{\rm comp}^+}{\bigcup}b_n^{-1} \mathbb{Z}^N, \qquad \Upsilon_\alpha^+ = \{n\in\Upsilon_{\rm comp}^+ \mid b_n \alpha  \in\mathbb{Z}^N\}, $$
and $\delta_{\alpha,0}$ stands for the Kronecker delta function on $\R^N$.

\end{theorem}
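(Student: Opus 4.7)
The plan is to follow exactly the argument used in the proof of the preceding (non-symmetric) discrete Parseval frame theorem, now applied to the symmetric system $\{T_{b_n k}\chi_n\}$. Specifically, I would invoke Theorem~2.1 of \cite{hernandez2002unified} with the correspondence $g_p = \chi_n$, $C_p = b_n$, and $\mathcal{P} = \Upsilon_{\rm comp}^+$. That theorem directly yields the equivalence between the Parseval frame property on $\mathrm{L}^2_{\rm comp}(\R^N)$ and the pointwise identity claimed here, provided the local integrability condition (LIC) of \cite{hernandez2002unified} holds on a suitable dense test subspace.

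The dense subspace of $\mathrm{L}^2_{\rm comp}(\R^N)$ to take is the same one as in the non-symmetric case, namely $\mathcal{D} = \{f \in \Li(\R^N) \mid \widehat{f} \in \mathrm{L}^\infty(\R^N) \textrm{ and } \mathrm{supp}\,\widehat{f} \subset \Gamma_{\rm comp}\}$. The LIC to be checked reads
$$
\forall f \in \mathcal{D}, \quad \sum_{n \in \Upsilon_{\rm comp}^+} \sum_{k \in \Z^N} \int_{\mathrm{supp}\,\widehat{f}} \left|\widehat{f}(\xi+b_n^{-1}k)\right|^2 \frac{1}{|b_n|} \left|\widehat{\chi}_n(\xi)\right|^2 \rmd\xi < \infty.
$$
The key observation is that, by the symmetry assumption on $\{\Omega_n\}_{n \in \Upsilon}$, the index $n \in \Upsilon_{\rm comp}^+$ forces both $\overline{\Omega_n}$ and $\overline{\Omega_{-n}}$ to be compact, so by the definition of $\chi_n$ in \Cref{eq:symwavelet_def} the support of $\widehat{\chi}_n$ is contained in the compact set $\overline{\Omega_n} \cup \overline{\Omega_{-n}}$. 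Therefore the argument from the proof of the previous theorem, borrowed from the proof of Theorem~4 in \cite{Gilles2024frame}, transfers verbatim: because $\mathrm{supp}\,\widehat{f}$ is compact and the pairs $\Omega_n \cup \Omega_{-n}$ are pairwise disjoint as $n$ ranges over $\Upsilon_{\rm comp}^+$, only finitely many indices $n$ contribute to the triple sum, and for each such $n$ the inner sum over $k \in \Z^N$ has only finitely many non-zero terms thanks to the compactness of $\mathrm{supp}\,\widehat{f}$ and the boundedness of $\widehat{f}$.

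The main obstacle is less analytic than bookkeeping: one must verify that restricting the index set to the positive half $\Upsilon_{\rm comp}^+$ and using the combined filters $\widehat{\chi}_n$ still covers $\Gamma_{\rm comp}$, so that $\mathcal{D}$ is indeed dense in the target space $\mathrm{L}^2_{\rm comp}(\R^N)$. This is precisely where the symmetry hypothesis is used: $\Gamma_{\rm comp} = \bigcup_{n \in \Upsilon_{\rm comp}^+} \bigl(\overline{\Omega_n} \cup \overline{\Omega_{-n}}\bigr)$, and the supports of the $\widehat{\chi}_n$ tile $\Gamma_{\rm comp}$ up to boundaries of measure zero. Once LIC is in hand, a direct application of Theorem~2.1 in \cite{hernandez2002unified} gives the stated equivalence with the pointwise condition on the $\widehat{\chi}_n$, completing the proof.
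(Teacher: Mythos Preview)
Your proposal is correct and follows essentially the same approach as the paper's proof: both invoke Theorem~2.1 of \cite{hernandez2002unified} with the identifications $g_p=\chi_n$, $C_p=b_n$, $\mathcal{P}=\Upsilon_{\rm comp}^+$, observe that $\Gamma_{\rm comp}=\bigcup_{n\in\Upsilon_{\rm comp}^+}(\overline{\Omega_n}\cup\overline{\Omega_{-n}})$ by symmetry, and verify the local integrability condition on $\mathcal{D}$ by the same compactness argument borrowed from Theorem~4 of \cite{Gilles2024frame}. Your discussion is in fact slightly more detailed than the paper's on why the supports of the $\widehat{\chi}_n$ are compact and why only finitely many terms survive, but the logical skeleton is identical.
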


\begin{proof}
Let us denote $\mathcal{D} = \{f \in \mathrm{L}^2(\R^N) \mid \widehat{f} \in \mathrm{L}^\infty(\R^N) \textrm{ and } \mathrm{supp}\, \widehat{f} \subset \Gamma_{\rm comp}\}.$
Noticing that $$\Gamma_{\rm comp} = \underset{n \in \Upsilon_{\rm comp}^+}{\bigcup} \left( \overline{\Omega_n}\cup \overline{\Omega_{-n}} \right).$$
Theorem 2.1 in \cite{hernandez2002unified} states (by taking $g_p=\chi_n$, $C_p=b_n$ and $\mathcal{P}= \Upsilon_{\rm comp}^+$) that the system $\{T_{b_nk}\chi_n \}_{(n,k)\in\Upsilon^+ \times \Z^N}$ is a Parseval frame for $\mathrm{L}_{\rm comp}^2(\R^N)$ if and only if, for $ a.e.\, \xi \in \R^N$,
\begin{equation*}
\sum_{n \in \Upsilon_\alpha^+} \frac{1}{\vert b_n \vert} \widehat{\chi}_n(\xi) \overline{\widehat{\chi}_n(\xi+\alpha)} = \delta_{\alpha,0},
\end{equation*} 
under the condition that
$$\forall f \in \mathcal{D}, \;  \sum_{n \in \Upsilon_{\rm comp}^+} \sum_{k\in \mathbb{Z}^N} \int_{\mathrm{supp}\, \widehat{f}} \left\vert \widehat{f}(\xi + b_n^{-1}k) \right \vert^2 \frac{1}{\vert b_n \vert} \left\vert \widehat{\chi}_n(\xi) \right\vert^2 \mathrm{d}\xi < \infty.$$
This condition is given by the proof of Theorem~4 in \cite{Gilles2024frame}, replacing $\mathbb{Z}$ by $\mathbb{Z}^N$ and $\psi_n$ by $\chi_n$. It results from the fact that $\mathrm{supp}\, \widehat{f}$ and $\mathrm{supp}\, \widehat{\chi}_n$ are compact and that there are finitely many $\mathrm{supp}\, \widehat{\psi}_n$ that intersect $\mathrm{supp}\, \widehat{f}$. This proves the equivalence.
\end{proof}

\subsubsection{Non-compactly supported \texorpdfstring{$\boldsymbol{\widehat{\psi}}$}{Lg}}

In this section, we assume that the support of $\widehat{\psi}$ is not compact. The following theorems state sufficient conditions for which the system 
$$\{T_{b_nk}\psi_n\}_{(n,k)\in\Upsilon \times \Z^N} \quad \textrm{ and } \quad \{T_{b_nk}\chi_n\}_{(n,k)\in\Upsilon^+ \times \Z^N}$$ are frames.
The first theorem, for the system $\{\psi_n\}_{n\in\Upsilon}$, is a straight generalization of Theorem~8 in \cite{Gilles2024frame}
\begin{theorem}[Discrete frame]
\label{theo:nc_discret_frame}
If 
$$A =\inf_{\xi \in \R^N} \left( \sum_{n \in \Upsilon} \frac{1}{\vert b_n \vert} \left\vert \hat\psi_n(\xi) \right\vert^2 - \sum_{n \in \Upsilon} \sum_{k\neq0} \frac{1}{\vert b_n \vert} \left\vert \hat\psi_n(\xi) \widehat{\psi}_n(\xi-b_n^{-1}k) \right\vert\right) >0$$
and
$$B =\sup_{\xi \in \R^N} \sum_{n \in \Upsilon} \sum_{k\in \mathbb{Z}^N} \frac{1}{\vert b_n \vert} \left\vert \hat\psi_n(\xi) \widehat{\psi}_n(\xi-b_n^{-1}k) \right\vert<\infty,$$
then the system $\{T_{b_nk}\psi_n\}_{ n\in\Upsilon,k\in\mathbb{Z}^N}$ is a frame for $\mathrm{L}^2(\R^N)$ with frame bounds $A$ and $B$.
\end{theorem}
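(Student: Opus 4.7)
The plan is to adapt the argument of Theorem~8 in \cite{Gilles2024frame} from the real line to $\R^N$: I would combine a Walnut-type representation of the frame operator with a Cauchy--Schwarz bound on the off-diagonal terms.

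\textbf{Step 1 (Walnut representation).} I would first restrict attention to the dense subset of functions $f\in\Li(\R^N)$ whose Fourier transform $\widehat{f}$ is bounded and compactly supported, where the $B<\infty$ hypothesis will later legitimize all interchanges of sums and integrals. Using $\widehat{T_{b_nk}\psi_n}(\xi)=e^{-2\pi\imath b_nk\cdot\xi}\widehat{\psi}_n(\xi)$ together with Parseval,
$$\langle f,T_{b_nk}\psi_n\rangle=\int_{\R^N}\widehat{f}(\xi)\overline{\widehat{\psi}_n(\xi)}\,e^{2\pi\imath b_nk\cdot\xi}\,\rmd\xi,$$
which is a Fourier coefficient on the lattice $b_n^{-1}\Z^N$. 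Periodizing the integrand and applying Parseval on a fundamental domain yields, for each $n\in\Upsilon$,
$$\sum_{k\in\Z^N}|\langle f,T_{b_nk}\psi_n\rangle|^2=\frac{1}{|b_n|}\sum_{m\in\Z^N}\int_{\R^N}\widehat{f}(\xi)\overline{\widehat{\psi}_n(\xi)}\,\overline{\widehat{f}(\xi-b_n^{-1}m)}\,\widehat{\psi}_n(\xi-b_n^{-1}m)\,\rmd\xi,$$
with the normalization matching the convention adopted in the theorem.

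\textbf{Step 2 (Off-diagonal bound).} For each fixed $n\in\Upsilon$ and each $m\in\Z^N$, I would apply the Cauchy--Schwarz inequality to the integral, then perform the substitution $\xi\mapsto\xi+b_n^{-1}m$ in one of the two resulting factors and re-index $m\to -m$, and finally use $2\sqrt{uv}\leq u+v$ to symmetrize, obtaining
$$\Bigl|\int_{\R^N}\widehat{f}(\xi)\overline{\widehat{\psi}_n(\xi)}\,\overline{\widehat{f}(\xi-b_n^{-1}m)}\,\widehat{\psi}_n(\xi-b_n^{-1}m)\,\rmd\xi\Bigr|\leq\int_{\R^N}|\widehat{f}(\xi)|^2\,|\widehat{\psi}_n(\xi)\widehat{\psi}_n(\xi-b_n^{-1}m)|\,\rmd\xi.$$

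\textbf{Step 3 (Frame bounds).} Summing the identity of Step~1 over $n\in\Upsilon$, I would isolate the $m=0$ diagonal contribution $\int_{\R^N}|\widehat{f}(\xi)|^2\sum_n\frac{1}{|b_n|}|\widehat{\psi}_n(\xi)|^2\,\rmd\xi$ and bound the $m\neq 0$ residual via Step~2. Plancherel then yields
$$A\,\|f\|_2^2\;\leq\;\sum_{n\in\Upsilon,\,k\in\Z^N}|\langle f,T_{b_nk}\psi_n\rangle|^2\;\leq\;B\,\|f\|_2^2,$$
directly from the definitions of $A$ and $B$. Density of the test functions in $\Li(\R^N)$ extends both inequalities to arbitrary $f$ and establishes the frame property with bounds $A$ and $B$.

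\textbf{Main obstacle.} Since $\widehat{\psi}$ is not compactly supported, the Walnut double sum is only conditionally convergent in general, so the interchange of sums and integrals in Step~1 is not automatic—this is precisely the delicate point of \cite{Gilles2024frame}. I would resolve it exactly as there: on the dense subset of Step~1 the hypothesis $B<\infty$ produces a pointwise absolutely convergent majorant of the integrand, so Fubini legitimizes the periodization and Parseval identity; the resulting bounds then transfer to all of $\Li(\R^N)$ by continuity of the analysis operator. The remainder of the argument is essentially a bookkeeping exercise adapting the 1D indexing on $\Z$ to $\Z^N$.
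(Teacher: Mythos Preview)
Your proposal is correct and follows essentially the same route as the paper: both restrict to functions with bounded, compactly supported Fourier transform, derive the Walnut-type identity via periodization and Parseval on $b_n^{-1}\mathbb{T}^N$, split into diagonal and off-diagonal terms, and bound the latter via Cauchy--Schwarz before extending by density. The paper delegates the Cauchy--Schwarz/symmetrization step to Theorem~3.1 of \cite{christensen2008frame} while you spell it out explicitly, but the underlying argument is identical.
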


\begin{proof}
Let $f \in \mathcal{D}$.
We follow the proof of Theorem~8 in \cite{Gilles2024frame}, replacing $\mathbb{Z}$ by $\mathbb{Z}^N$ and $\R$ by $\R^N$.
First, we use arguments similar to the proof of Theorem~3.4 in \cite{labate2004approach}.
Since $\R^N$ can be written as a disjoint union $\R^N=\bigcup_{l\in\Z^N} b_n^{-1} (\mathbb{T}-l)$  with $\mathbb{T}=[0,1)^N$, we rewrite
\begin{align*}
\sum_{k\in\Z^N} \vert \langle f, T_{b_nk} \psi_n \rangle \vert^2 & = \sum_{k \in \Z^N} \left\vert \int_{\R^N} \widehat{f}(\xi) \overline{\widehat{\psi}_n(\xi)} e^{2\pi \imath ( b_n k \cdot \xi )} \rmd \xi \right\vert^2 \\
& = \sum_{k \in \Z^N} \left\vert \int_{b_n^{-1} \mathbb{T}} \left( \sum_{l\in\Z^N} \widehat{f}(\xi-b_n^{-1}l) \overline{\widehat{\psi}_n(\xi-b_n^{-1}l)} \right) e^{2\pi \imath b_n (k \cdot \xi )} \rmd \xi \right\vert^2 \\
& = \sum_{k \in \Z^N} \left\vert \int_{b_n^{-1} \mathbb{T}} \left( \sum_{l\in\Z^N} \widehat{f}(\xi-b_n^{-1}l) \overline{\widehat{\psi}_n(\xi-b_n^{-1}l)} \right) e^{2\pi \imath \frac{1}{b_n^{-1}} \left( k \cdot \xi \right)} \rmd \xi \right\vert^2,
\end{align*}
with $\mathbb{T}=[0,1)^N$.

The function $\xi \mapsto \sum_{l\in\Z^N} \widehat{f}(\xi-b_n^{-1}l) \overline{\widehat{\psi}_n(\xi-b_n^{-1}l)}$ belongs to $\Li(b_n^{-1} \mathbb{T})$ and each of its component is $b_n^{-1} \Z^N$-periodic. Thus, by Parseval's identity, we get
\begin{align*}
\sum_{k\in\Z^N} \vert \langle f, T_{b_nk} \psi_n \rangle \vert^2 &= \frac{1}{\vert b_n\vert} \int_{b_n^{-1} \mathbb{T}} \left\vert \sum_{l\in\Z^N} \widehat{f}(\xi-b_n^{-1}l) \overline{\widehat{\psi}_n(\xi-b_n^{-1}l)} \right\vert^2 \rmd \xi \\
&=\frac{1}{\vert b_n\vert} \int_{b_n^{-1} \mathbb{T}} \sum_{l\in\Z^N} \widehat{f}(\xi-b_n^{-1}l) \overline{\widehat{\psi}_n(\xi-b_n^{-1}l)} \overline{\sum_{u\in\Z^N} \widehat{f}(\xi-b_n^{-1}u) \overline{\widehat{\psi}_n(\xi-b_n^{-1}u)} } \rmd \xi. 
\end{align*}
Hence, by the change of indices $u=l+k$, we get
\begin{align*}
\sum_{k\in\Z^N} \vert \langle f, & T_{b_nk} \psi_n \rangle \vert^2 \\
  &=\frac{1}{\vert b_n\vert} \int_{b_n^{-1} \mathbb{T}} \sum_{l,k\in\Z^N} \widehat{f}(\xi-b_n^{-1}l) \overline{\widehat{\psi}_n(\xi-b_n^{-1}l)} \, \overline{\widehat{f}(\xi-b_n^{-1}(l+k))} \widehat{\psi}_n(\xi-b_n^{-1}(l+k)) \rmd \xi \\
 &=\frac{1}{\vert b_n\vert} \sum_{l,k\in\Z^N}  \int_{b_n^{-1} \mathbb{T}}\widehat{f}(\xi-b_n^{-1}l) \overline{\widehat{\psi}_n(\xi-b_n^{-1}l)} \, \overline{\widehat{f}(\xi-b_n^{-1}(l+k))} \widehat{\psi}_n(\xi-b_n^{-1}(l+k)) \rmd \xi \\
  &=\frac{1}{\vert b_n\vert} \sum_{k\in\Z^N}  \int_{\R^N}\widehat{f}(\xi) \overline{\widehat{\psi}_n(\xi)} \, \overline{\widehat{f}(\xi-b_n^{-1}k)} \widehat{\psi}_n(\xi-b_n^{-1}k) \rmd \xi.
\end{align*}
Splitting the terms when $k=0$ and $k\neq 0$, we obtain
\begin{equation}\label{eq:split_equality_frame}
\sum_{n\in\Upsilon} \sum_{k\in\Z^N} \vert \langle f, T_{b_nk} \psi_n \rangle \vert^2 =  \int_{\R^N} \left\vert \widehat{f}(\xi) \right\vert^2 \sum_{n\in\Upsilon} \frac{1}{\vert b_n\vert} \left\vert \widehat{\psi}_n(\xi) \right\vert^2 \rmd \xi + R(f),
\end{equation}
where
\begin{equation*}
R(f) =\sum_{n\in\Upsilon} \sum_{k\neq 0} \frac{1}{\vert b_n\vert} \int_{\R^N}\widehat{f}(\xi) \overline{\widehat{f}(\xi-b_n^{-1}k)} \, \overline{\widehat{\psi}_n(\xi)} \widehat{\psi}_n(\xi-b_n^{-1}k) \rmd \xi.
\end{equation*}

Finally, the arguments of the proof of Theorem~3.1 in \cite{christensen2008frame}, with $d=N$, $C_j=b_n$, $g_j=\psi_n$ and $\mathcal{J}=\Upsilon$, give the results.
\end{proof}

\begin{theorem}[Discrete symmetric frame]
If 
$$A =\inf_{\xi \in \R^N} \left( \sum_{n \in \Upsilon^+} \frac{1}{\vert b_n \vert} \left\vert \widehat{\chi}_n(\xi) \right\vert^2 - \sum_{n \in \Upsilon^+} \sum_{k\neq0} \frac{1}{\vert b_n \vert} \left\vert \widehat{\chi}_n(\xi) \widehat{\chi}_n(\xi-b_n^{-1}k) \right\vert\right) >0$$
and
$$B =\sup_{\xi \in \R^N} \sum_{n \in \Upsilon^+} \sum_{k\in \mathbb{Z}^N} \frac{1}{\vert b_n \vert} \left\vert \widehat{\chi}_n(\xi) \widehat{\chi}_n(\xi-b_n^{-1}k) \right\vert<\infty,$$
then the system $\{T_{b_nk}\chi_n\}_{ n\in\Upsilon^+,k\in\mathbb{Z}^N}$ is a frame for $\mathrm{L}^2(\R^N)$ with frame bounds $A$ and $B$.
\end{theorem}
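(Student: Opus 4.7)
The plan is to mirror the proof of \Cref{theo:nc_discret_frame} essentially verbatim, with $\psi_n$ replaced by $\chi_n$ and $\Upsilon$ replaced by $\Upsilon^+$. The hypotheses on the $\widehat{\chi}_n$ are structurally identical to those on the $\widehat{\psi}_n$ in the previous theorem and they do not appeal to any specific feature of the non-symmetric case, so the same chain of arguments applies.

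First, I would restrict attention to the dense subspace $\mathcal{D}=\{f\in\Li(\R^N)\mid \widehat{f}\in\mathrm{L}^\infty(\R^N)\text{ and }\mathrm{supp}\,\widehat{f}\text{ compact}\}$. For each fixed $n\in\Upsilon^+$, I would reproduce the periodization step: writing $\R^N=\bigsqcup_{l\in\Z^N}b_n^{-1}(\mathbb{T}-l)$ with $\mathbb{T}=[0,1)^N$ and applying Parseval's identity on the $b_n^{-1}\Z^N$-periodic function $\xi\mapsto\sum_{l}\widehat{f}(\xi-b_n^{-1}l)\overline{\widehat{\chi}_n(\xi-b_n^{-1}l)}$. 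After a change of index $u=l+k$ this yields
$$\sum_{k\in\Z^N}|\langle f,T_{b_nk}\chi_n\rangle|^2=\frac{1}{|b_n|}\sum_{k\in\Z^N}\int_{\R^N}\widehat{f}(\xi)\overline{\widehat{f}(\xi-b_n^{-1}k)}\,\overline{\widehat{\chi}_n(\xi)}\widehat{\chi}_n(\xi-b_n^{-1}k)\,\rmd\xi.$$

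Summing over $n\in\Upsilon^+$ and separating the $k=0$ term from the remainder gives the analogue of \Cref{eq:split_equality_frame},
$$\sum_{n\in\Upsilon^+}\sum_{k\in\Z^N}|\langle f,T_{b_nk}\chi_n\rangle|^2=\int_{\R^N}|\widehat{f}(\xi)|^2\sum_{n\in\Upsilon^+}\frac{|\widehat{\chi}_n(\xi)|^2}{|b_n|}\,\rmd\xi+R(f),$$
with $R(f)$ the residual $k\neq 0$ double sum. At this point I would invoke Theorem~3.1 in \cite{christensen2008frame} with $g_j=\chi_n$, $C_j=b_n$, $d=N$ and $\mathcal{J}=\Upsilon^+$: a Cauchy--Schwarz bound on $|R(f)|$ in terms of $\sum_{n}\sum_{k\neq 0}\frac{1}{|b_n|}|\widehat{\chi}_n(\xi)\widehat{\chi}_n(\xi-b_n^{-1}k)|$ shows that the assumption $A>0$ produces the lower frame bound, while the assumption $B<\infty$ dominates the whole right-hand side and delivers the upper frame bound. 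Density of $\mathcal{D}$ in $\Li(\R^N)$ then extends both inequalities to every $f\in\Li(\R^N)$.

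The only real obstacle is cosmetic: making sure the bookkeeping indexed by $\Upsilon^+$ rather than $\Upsilon$ is consistent, in particular that the $n=0$ summand (where $\widehat{\chi}_0=\widehat{\psi}_0$) is handled on the same footing as the other indices. Since both hypotheses are already formulated intrinsically in terms of the $\widehat{\chi}_n$ (rather than going back to the $\widehat{\psi}_n$), no translation step is needed, and the argument reduces to a relabeling of the proof of \Cref{theo:nc_discret_frame}.
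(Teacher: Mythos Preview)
Your proposal is correct and matches the paper's approach exactly: the paper's own proof is the single sentence ``This results from the argumentation of the proof of \Cref{theo:nc_discret_frame} by replacing $\Upsilon$ by $\Upsilon^+$ and $\psi_n$ by $\chi_n$.'' You have simply written out that relabeling in detail, including the periodization/Parseval step, the split into $k=0$ and $k\neq 0$ terms, and the appeal to Theorem~3.1 of \cite{christensen2008frame}.
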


\begin{proof}
This results from the argumentation of the proof of \Cref{theo:nc_discret_frame} by replacing $\Upsilon$ by $\Upsilon^+$ and $\psi_n$ by $\chi_n$.
\end{proof}

\section{Empirical wavelet systems from affine deformations}
\label{sec:aff_def}

In this section, we consider a partition $\{\Omega_n\}_{n\in\Upsilon}$ for which we can find diffeomorphisms $\gamma_n$ that are affine functions, i.e., of the form $\gamma_n(\xi)=\beta_n(\xi-\eta_n)$ with $\beta_n$ a linear function.
If the construction of the empirical wavelet systems introduced in \Cref{sec:ews} is natural in the Fourier domain, it is also possible to build them directly in the spatial domain for affine mappings. 

First, the following Lemma gives the explicit expression of a function with Fourier transform deformed by a linear function.

\begin{lemma}\label{lem:fourdeform}
Let $\beta : x \mapsto Ax$ be a linear function (non identically zero), with $A \in \R^{N\times N}$.
The function $g$ corresponding to the inverse Fourier transform of the deformation of the Fourier transform $\widehat{f}$ of a function $f$ by $\beta$, i.e., $\widehat{g}=\widehat{f}\circ\beta$, is given by, for every $x \in \R^N$,
  \begin{equation}\label{eq:invformula}
  g(x)=\frac{1}{\left|det\; A \right|} f(A^{-\intercal}x),
\end{equation}
 where $\beta^{-\intercal}=(\beta^{-1})^\intercal$.
\end{lemma}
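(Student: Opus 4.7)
The plan is to compute $g = \mathcal{F}^{-1}(\widehat{f} \circ \beta)$ directly from the definition of the inverse Fourier transform and perform a linear change of variables. Since $\beta$ is assumed non identically zero and a bijection is implicit in writing $\widehat{f}\circ\beta$ as a deformation invertible at the Fourier level, I will first note that $A$ must be invertible (otherwise $\beta$ collapses $\R^N$ onto a subspace of measure zero and the formula is vacuous), so $A^{-1}$ and $A^{-\intercal} = (A^{-1})^{\intercal}$ are well defined.

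First, I would write
\begin{equation*}
g(x) = \F^{-1}(\widehat{f}\circ\beta)(x) = \int_{\R^N} \widehat{f}(A\xi)\, e^{2\pi \imath (\xi\cdot x)} \, \rmd\xi.
\end{equation*}
Then I would substitute $u = A\xi$, so that $\xi = A^{-1}u$ and $\rmd\xi = |\det A|^{-1}\, \rmd u$, giving
\begin{equation*}
g(x) = \frac{1}{|\det A|} \int_{\R^N} \widehat{f}(u)\, e^{2\pi \imath (A^{-1}u \cdot x)} \, \rmd u.
\end{equation*}
The next step is to transfer $A^{-1}$ across the dot product using the standard identity $A^{-1}u \cdot x = u \cdot A^{-\intercal} x$, which yields
\begin{equation*}
g(x) = \frac{1}{|\det A|} \int_{\R^N} \widehat{f}(u)\, e^{2\pi \imath (u\cdot A^{-\intercal} x)} \, \rmd u = \frac{1}{|\det A|}\, f(A^{-\intercal} x),
\end{equation*}
by recognizing the inverse Fourier transform of $\widehat{f}$ evaluated at $A^{-\intercal} x$.

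There is no real obstacle here; the only care needed is to ensure that the change of variables is justified (which follows from $A$ being invertible so that $\beta$ is a diffeomorphism of $\R^N$) and to handle the transpose correctly in the exponent. The formula $A^{-1}u \cdot x = u \cdot A^{-\intercal} x$ is the key algebraic identity, and it is what makes the inverse transpose $A^{-\intercal}$ appear in the spatial argument rather than $A^{-1}$ itself.
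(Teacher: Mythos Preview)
Your proof is correct and follows essentially the same approach as the paper: write the inverse Fourier transform, perform the linear change of variables $u=A\xi$, use $A^{-1}u\cdot x=u\cdot A^{-\intercal}x$, and recognize the resulting integral as $f(A^{-\intercal}x)$. The only minor addition you make is the explicit remark that $A$ must be invertible, which the paper leaves implicit.
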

\begin{proof}
  By taking the inverse Fourier transform, we have (using the substitution $\xi=\beta^{-1}(u) \to d\xi=|\mathrm{det}\; J_{\beta^{-1}}(u)|du$)
\begin{align*}
g(x)= \int_{\R^N} \widehat{f}(\beta(\xi))e^{2\pi \imath (\xi\cdot x)}\mathrm{d}\xi&=
\int_{\beta(\R^N)}\widehat{f}(u)e^{2\pi\imath(\beta^{-1}(u)\cdot x )} \left|\mathrm{det}\; J_{\beta^{-1}}(u) \right|\mathrm{d}u\\
&=\frac{1}{\left|det\; A \right|} \int_{\R^N}\widehat{f}(u)e^{2\pi \imath ( u\cdot \beta^{-\intercal}(x) )} \mathrm{d}u\\
&=\frac{1}{\left|det\; A \right|} \F^{-1}\left(\widehat{f}\right)(\beta^{-\intercal}(x))\\
&= \frac{1}{\left|det\; A \right|} f(A^{-\intercal}x).
\end{align*}
\end{proof}

Finally, for affine diffeormorphisms, empirical wavelet systems $\{\psi_n\}_{n\in\Upsilon}$ can be built in the spatial domain using the following proposition.

\begin{proposition}[Spatial domain construction]
\label{prop:spatial_domain_construction}
  Let $\{\Omega_n\}_{n\in\Upsilon}$ be a partition of the Fourier domain and $\psi$ a wavelet kernel. Let assume there exists a set of linear functions $\{\beta_n: \xi \mapsto A_n \xi\}_{n\in\Upsilon}$ such that
  $\Lambda=\beta_n(\Omega_n-\eta_n)$ if $\Omega_n$ is bounded and $\Lambda \subsetneq \beta_n(\Omega_n-\eta_n)$ otherwise.
The set of empirical wavelets, $\{\psi_n\}_{n\in\Upsilon}$, defined in the Fourier domain by, for every $\xi \in \R^N$, 
$$\widehat{\psi}_n(\xi) = \sqrt{\left\vert \mathrm{det} J_{\beta_n} (\xi - \eta_n) \right\vert } \, \widehat{\psi}( \beta_n(\xi-\eta_n)) =\sqrt{\left\vert \mathrm{det} A_n \right\vert } \, \widehat{\psi}( A_n(\xi-\eta_n)),$$
is given in the spatial domain by, for every $x\in\R^N$,
  \begin{equation*}
  \psi_n(x)=\frac{1}{\sqrt{\left\vert \mathrm{det} A_n \right\vert }}\psi(A_n^{-\intercal}x) \, e^{2\pi \imath \, \eta_n x}.
  \end{equation*}
\end{proposition}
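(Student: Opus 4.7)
The plan is to take the inverse Fourier transform of the identity
$$\widehat{\psi}_n(\xi) = \sqrt{|\mathrm{det}\, A_n|}\, \widehat{\psi}\bigl(A_n(\xi-\eta_n)\bigr)$$
by decomposing the affine map $\xi\mapsto A_n(\xi-\eta_n)$ into its linear part $\beta_n$ and its translation by $-\eta_n$, and then handling each of these two pieces by the appropriate Fourier-analytic identity. Concretely, I will chain together \Cref{lem:fourdeform} (for the linear deformation) and the classical translation-modulation duality (for the shift).

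First, I would apply \Cref{lem:fourdeform} with $\beta=\beta_n$ and $f=\psi$ to the auxiliary function $g$ characterized by $\widehat{g}(\xi)=\widehat{\psi}(A_n\xi)$. The lemma immediately gives $g(x)=\frac{1}{|\mathrm{det}\, A_n|}\psi(A_n^{-\intercal}x)$, which accounts for the linear deformation in the spatial domain.

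Next, since $\widehat{\psi}_n(\xi)=\sqrt{|\mathrm{det}\, A_n|}\,\widehat{g}(\xi-\eta_n)$, I would invoke the standard fact that a frequency shift by $\eta_n$ corresponds to multiplication by $e^{2\pi\imath\,\eta_n\cdot x}$ in the spatial domain; this is a one-line change of variable $u=\xi-\eta_n$ in the inverse Fourier integral. Combining this with the previous step and the normalization factor $\sqrt{|\mathrm{det}\, A_n|}$ yields
$$\psi_n(x) = \sqrt{|\mathrm{det}\, A_n|}\cdot \frac{1}{|\mathrm{det}\, A_n|}\,\psi(A_n^{-\intercal}x)\,e^{2\pi\imath\,\eta_n\cdot x} = \frac{1}{\sqrt{|\mathrm{det}\, A_n|}}\,\psi(A_n^{-\intercal}x)\,e^{2\pi\imath\,\eta_n\cdot x},$$
which is exactly the claimed formula.

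I do not anticipate any real obstacle here: the argument is essentially bookkeeping, with the $\sqrt{|\mathrm{det}\, A_n|}$ built into the definition of $\widehat{\psi}_n$ exactly cancelling one of the two factors of $1/|\mathrm{det}\, A_n|$ produced by \Cref{lem:fourdeform}. The only minor point requiring care is to track the transpose inverse $A_n^{-\intercal}$ (as opposed to $A_n^{-1}$) that arises when the linear map is pulled through the inverse Fourier transform.
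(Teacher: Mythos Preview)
Your proposal is correct and essentially matches the paper's proof: both combine \Cref{lem:fourdeform} for the linear part $\beta_n$ with the translation--modulation duality for the shift by $\eta_n$. The only cosmetic difference is the order (the paper applies the modulation identity first, then invokes \Cref{lem:fourdeform} with $\widehat{f}=\sqrt{|\mathrm{det}\,A_n|}\,\widehat{\psi}$), which is immaterial.
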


\begin{proof}
Moreover, since the translation by $\eta_n$ in the Fourier domain is a modulation of frequency $\eta_n$ in the spatial domain, we can rewrite
\begin{equation*}
\psi_n(x) = \F^{-1} \left( \sqrt{\left\vert \mathrm{det} A_n \right\vert } \widehat{\psi} \circ\beta_n \right)(x) \, e^{2\pi\imath \eta_n x}.
\end{equation*}
By applying \Cref{lem:fourdeform} with $\widehat{f}= \sqrt{\left\vert \mathrm{det} A_n \right\vert } \widehat{\psi}$
and $\beta=\beta_n$, we obtain the result.
\end{proof}

\begin{example}\label{ex:1D_spatial_domain}
As in \Cref{ex:psin_1D}, we consider the 1D case with $\widehat{\psi}$ of support $\Lambda$ the open interval of size $1$ centered on $0$ and $\{\Omega_n\}_{n\in\Upsilon}$ a partition of open bounded intervals with center $\omega_n$. The diffeomorphism $\gamma_n: \xi \mapsto \frac{1}{\vert \Omega_n \vert}(\xi-\omega_n)$, satisfying $\Lambda=\gamma_n(\Omega_n)$, can be rewritten 
$\gamma_n(\xi)=\beta_n(\xi-\eta_n)$ with $\beta_n: \xi \mapsto \xi/\vert \Omega_n \vert$ and $\eta_n =\omega_n$. The associated empirical wavelet system is therefore given in the spatial domain by, for every $x\in\R$,
\begin{equation*}
\begin{aligned}
\psi_n(x) 
&= \sqrt{ \vert \Omega_n \vert } \, \psi (\vert \Omega_n \vert x ) \, e^{2\pi \imath \, \omega_n x}.
\end{aligned}
\end{equation*}
We thus retrieve the definition given in \cite{cewt}.
\end{example}

\begin{example}\label{ex:2D_classic_wavelet}
We can also build wavelets on a 2D dyadic tiling. We consider a separable wavelet kernel $\widehat{\psi}^{\rm 2D}(\xi_1,\xi_2)=\widehat{\psi}(\xi_1)\widehat{\psi}(\xi_2)$, where $\widehat{\psi}$ is a scaling function supported by an open interval $\Lambda$ centered in $0$, 
and the diffeomorphism $\gamma_{j,n}: \xi \mapsto 2^j(\xi - \omega_n)$, $n=1,2,3$, with $\omega_1=(\omega_0,0)$, $\omega_2=(0,\omega_0)$ and $\omega_3=(\omega_0,\omega_0)$.
The associated translated empirical wavelet system is given in the spatial domain by, for every $(x,y)\in\R^2$ and $(l,m)\in\Z^2$,
\begin{equation*}
\begin{aligned}
\psi_{j,n}((x,y) - 2^j(l,m)) &= \frac{1}{2^j} \, \psi \left(\frac{x - 2^jl}{2^j}\right) \psi \left(\frac{y  - 2^jm}{2^j}\right).
\end{aligned}
\end{equation*}
Finally, the transform resulting from \cref{def:discrete_transf} using this system is similar to the classic discrete wavelet transform.
\end{example}

The construction of a symmetric empirical wavelet system $\{\chi_n\}_{n\in\Upsilon}$, induced by affine diffeomorphisms, in the spatial domain stems from \Cref{prop:spatial_domain_construction}, using \Cref{eq:spatial_domain_symwavelet}.

\begin{example}
We consider again the 1D domain. From the system $\{\psi_n\}_{n\in\Upsilon}$ defined as in \Cref{ex:1D_spatial_domain}, we can build a symmetric system $\{\chi_n\}_{n\in\Upsilon}$ as follows, for every $x\in\R$,
$$\chi_0(x)  = \sqrt{ \vert \Omega_0 \vert } \, \psi (\vert \Omega_0 \vert x ),$$
and, for $n \neq 0$,
\begin{equation*}
\begin{aligned}
\chi_n(x)  &= \sqrt{ \frac{\vert \Omega_n \vert}{2} } \psi (\vert \Omega_n \vert x )e^{2\pi \imath \, \omega_n x} + \sqrt{ \frac{\vert \Omega_n \vert}{2} } \psi (\vert \Omega_n \vert x )e^{-2\pi \imath \, \omega_n x}   \\
&= \sqrt{2 \vert \Omega_n \vert } \, \psi (\vert \Omega_n \vert x ) \,\cos(2\pi\omega_n x).
\end{aligned}
\end{equation*}
In particular, this extends the definitions of 1D empirical wavelet systems in \cite{cewt} to symmetric partitions of the Fourier domain.
\end{example}

\section{Construction of empirical wavelet systems}
\label{sec:num_exp}

In this section, 2D continuous empirical wavelet systems are constructed from Gabor and Shannon wavelet kernels and we examine their guarantees of reconstruction. Their practical behaviors are analyzed through numerical experiments conducted on two different (real-valued) images: a toy image of size $256 \times 256$ and the classic Barbara image of size $512 \times 512$. These images are shown in \Cref{fig:images} (left).

For the Fourier domain partitioning of both images, the $N_m$ harmonic modes are detected by the scale-space representation \cite{Gilles2014a} on the logarithm of the Fourier spectrum, using a scale-space parameter set to $s_0=0.8$. We get $N_m=10$ for the toy image and $N_m=15$ for the Barbara image. The Fourier domain is then partitioned by separating the detected modes using either the Watershed \cite{hurat2020empirical} or Voronoi \cite{gilles2022empirical} methods, which provide as many connected supports of low constrained shapes as modes.
\Cref{fig:images} (middle and right) shows the (symmetric) Fourier tranform of both images, partitioned by either Watershed or Voronoi into $N_m$ symmetric regions $\Omega_n \cup \Omega_{-n}$, with $n\in\Upsilon^+=\{0,\ldots,N_m-1\}$.

\begin{figure}[t]
\centerline{
\subfloat{
  \includegraphics[width=.3\textwidth]{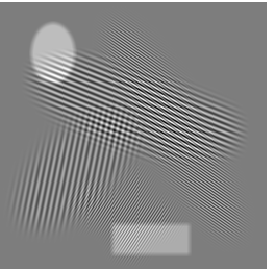}
  \includegraphics[width=.3\textwidth]{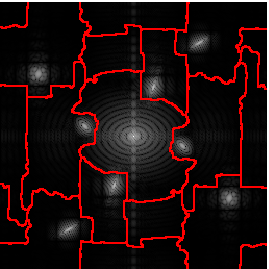}
  \includegraphics[width=.3\textwidth]{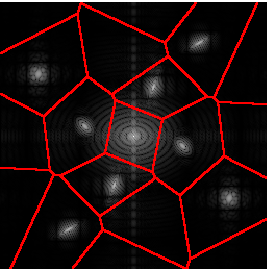}
  } }
  \centerline{
\subfloat{
    \includegraphics[width=.3\textwidth]{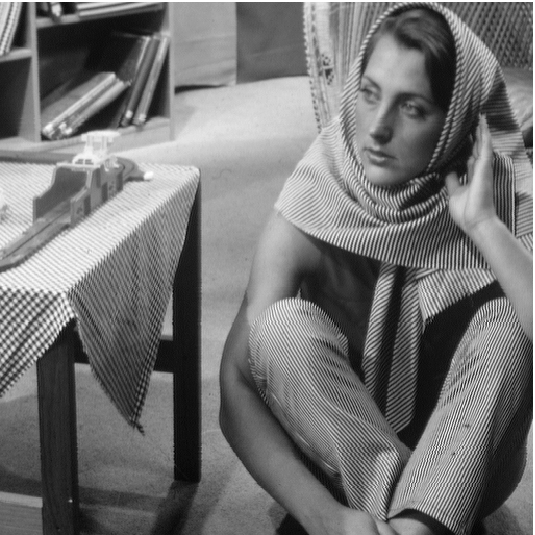}
  \includegraphics[width=.3\textwidth]{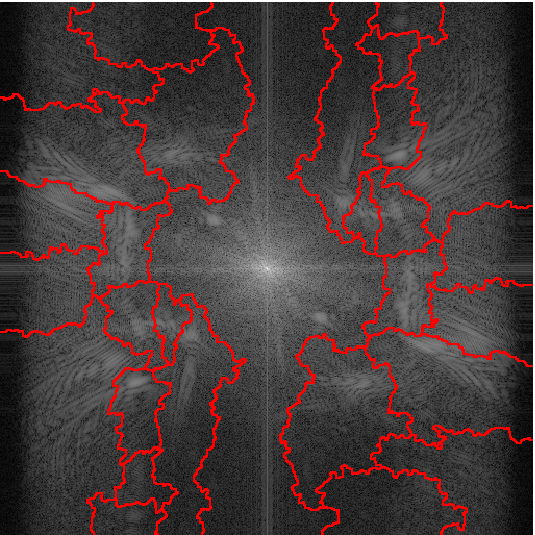}
  \includegraphics[width=.3\textwidth]{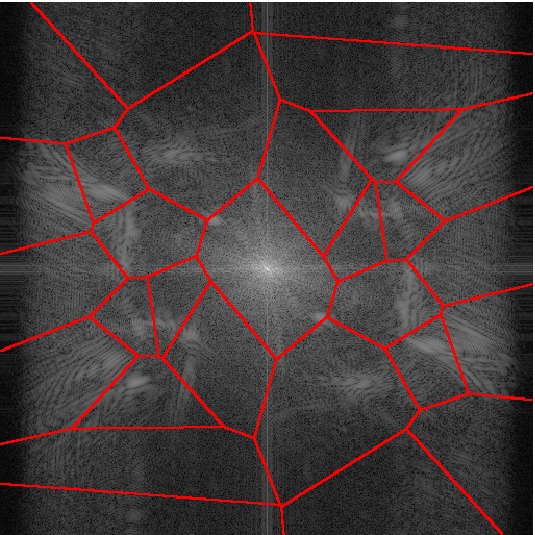}
  } }
 \caption{{\bf Images and Fourier partitions.} (Top) Toy image of size $256 \times 256$ and (bottom) classic Barbara image of size $512 \times 512$, along with the (middle) Watershed and (right) Voronoi partitions (overlapping in red) of the logarithm of their Fourier spectra.}
  \label{fig:images}
\end{figure}

To numerically construct empirical wavelet systems $\{\chi_n\}_{n\in\Upsilon^+}$ as in \Cref{eq:symwavelet_def}, we need to compute an estimate $\breve \gamma_n$ of the diffeomorphism $\gamma_n$.
This estimation is performed using the demons algorithm \cite{thirion1998image}, which is inspired by a diffusion process and is widely used in medical image registration.
This algorithm estimates a displacement field representing the desired mapping by alternating between solving the flow equations and regularization.
For each region $\Omega_n$, the pair of smoothing parameter and number of multiresolution image pyramid levels is selected by a grid search minimizing the quadratic risk $\Vert \Lambda - \breve \gamma_n(\Omega_n)\Vert_2^2$ on the values in $(0.3,0.35,\ldots,0.7)\times (n_P-2,n_P-1,n_P)$, where $n_P$ is the highest integer such that $2^{n_P}$ is smaller than each dimension of the image.
The number of iterations at the $n_P$ pyramid levels are set to $(2^4,\ldots,2^{n_P+1})$ from the highest to the lowest pyramid level.

For all the numerical experiments, the symmetric empirical wavelet transform is computed as in \Cref{def:discrete_transf} with $b_n=1$ for every $n\in\Upsilon^+$, so that the results for continuous wavelet frames in \Cref{sec:ewf} apply. 
To each wavelet transform coefficient $\E_{\chi}^f(\cdot,n)$ of an image $f$ of size $M \times M$, we associate the wavelet transform spectrum defined as $\vert \E_{\chi}^f(\cdot,n) \vert^2$
and the Fourier spectrum energy $\mathrm{E}_n$ of the underlying region $\Omega_n \cup \Omega_{-n}$ that reads
 $$\mathrm{E}_n = \frac{1}{M^2} \sum_{(m_1,m_2) \in \Omega_n \cup \Omega_{-n}} \left\vert \widehat{f}(m_1,m_2) \right\vert^2.$$  
The reconstruction $r$ of $f$ given by \Cref{eq:symreconstruction} is assessed by the Mean Squared Error (MSE) given by
$$ \mathrm{MSE}(r) = \frac{1}{M^2} \sum_{m_1=1}^M \sum_{m_2=1}^M \vert r(m_1,m_2) - f(m_1,m_2) \vert^2.$$

\subsection{2D empirical Gabor wavelets}

The Fourier transform of the 1D Gabor wavelet kernel is given by (see \cite{cewt}), for every $u \in \R$,
$$\widehat{\psi}^\mathrm{1D-G}(u)=e^{-\pi\left(\frac{5}{2}u\right)^2},$$
which is mostly supported by $(-\frac{1}{2},\frac{1}{2})$. We define its extension to 2D by, for every $u \in \R^2$,
\begin{equation*}
\widehat{\psi}^\mathrm{G}(u)=e^{-\pi \left(\frac{5}{2} \right)^2\Vert u \Vert^2_2},
\end{equation*}
which is mostly supported by the open disk of center $0$ and radius $1/2$ denoted $\Lambda = \mathrm{B}_2(0,1/2)$.

The following proposition gives guarantees of reconstruction from a Gabor empirical wavelet systems.

\begin{proposition}[Gabor empirical wavelet reconstruction]
Let assume that the diffeomorphisms $\gamma_n$ satisfy, for $a.e. \, \xi \in \R^2$, $\vert \{ m \in \Upsilon \mid \gamma_m(\xi)=\gamma_n(\xi) \} \vert \leq K_\xi \in \mathbb{N}$ for every $n\in\Upsilon$ and $\{ \vert \mathrm{det}\; J_{\gamma_n}(\xi) \vert \}_{n \in\Upsilon}$ is a bounded sequence. Then, the continuous reconstruction is guaranteed for the empirical wavelet systems $\{\psi_n\}_{n\in\Upsilon}$ and $\{\chi_n\}_{n\in\Upsilon}$ induced by $\widehat{\psi}^\mathrm{G} \circ \gamma_n$, and is given by \Cref{eq:reconstruction,eq:symreconstruction}.
\end{proposition}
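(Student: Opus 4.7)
The plan is to reduce the claim to verifying the hypotheses of \Cref{theo:icewt} for the system $\{\psi_n\}_{n\in\Upsilon}$ and of \Cref{theo:symicewt} for the system $\{\chi_n\}_{n\in\Upsilon^+}$, both instantiated with $\widehat{\psi}=\widehat{\psi}^{\rm G}$; once these hold a.e., the two theorems directly deliver \Cref{eq:reconstruction,eq:symreconstruction}. Concretely, for a.e.\ $\xi\in\R^2$ I would need to establish
\[ 0<\sum_{n\in\Upsilon}|\widehat{\psi}_n(\xi)|^2<\infty \quad \textrm{and} \quad \sum_{n\in\Upsilon^+\setminus\{0\}}|\widehat{\psi}_n(\xi)|\,|\widehat{\psi}_{-n}(\xi)|<\infty. \]

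First, I would expand, for every $n\in\Upsilon$ and $\xi\in\R^2$,
\[ |\widehat{\psi}_n(\xi)|^2 = |\mathrm{det}\; J_{\gamma_n}(\xi)|\, e^{-2\pi(5/2)^2\|\gamma_n(\xi)\|_2^2}. \]
Strict positivity of the sum is immediate, since the Gabor Gaussian is nowhere vanishing and the Jacobian of a diffeomorphism is nowhere zero, so every summand is strictly positive. For the upper bound, setting $M_\xi:=\sup_n|\mathrm{det}\; J_{\gamma_n}(\xi)|<\infty$ by hypothesis,
\[ \sum_{n\in\Upsilon}|\widehat{\psi}_n(\xi)|^2 \le M_\xi\sum_{n\in\Upsilon}e^{-2\pi(5/2)^2\|\gamma_n(\xi)\|_2^2}. \]
I would then use the multiplicity hypothesis to group the indices into classes of size at most $K_\xi$ indexed by the distinct values $\{y_k\}_k$ of the family $\{\gamma_n(\xi)\}_n$, yielding
\[ \sum_{n\in\Upsilon}e^{-2\pi(5/2)^2\|\gamma_n(\xi)\|_2^2}\le K_\xi\sum_k e^{-2\pi(5/2)^2\|y_k\|_2^2}. \]
To close the argument, I would exploit that each $\gamma_n$ is a bijection of $\R^2$ sending $\Omega_n$ onto $\Lambda$, so $\gamma_n(\xi)\notin\Lambda$ whenever $\xi\notin\Omega_n$, together with the volume constraint $|\Omega_n|\ge|\Lambda|/M_\xi$ forced by the Jacobian bound, in order to show that the distinct images $\{y_k\}_k$ spread to infinity quickly enough for the Gaussian tail sum to converge.

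The symmetric cross-term condition then comes essentially for free: by Cauchy--Schwarz in $\ell^2(\Upsilon)$,
\[ \sum_{n\in\Upsilon^+\setminus\{0\}}|\widehat{\psi}_n(\xi)|\,|\widehat{\psi}_{-n}(\xi)| \le \Bigl(\sum_{n\in\Upsilon}|\widehat{\psi}_n(\xi)|^2\Bigr)^{1/2}\Bigl(\sum_{n\in\Upsilon}|\widehat{\psi}_{-n}(\xi)|^2\Bigr)^{1/2}<\infty. \]
I expect the main obstacle to be the spreading step: the stated hypotheses are qualitative (pointwise finite multiplicity and pointwise bounded Jacobians), whereas the Gaussian sum requires a quantitative dispersion estimate on the images $\{y_k\}_k$. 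The argument must therefore carefully combine the partition structure, the implied lower bound on $|\Omega_n|$, and the global bijectivity of each $\gamma_n$; everything else reduces to a routine verification of the hypotheses of \Cref{theo:icewt,theo:symicewt} and direct invocation of their conclusions.
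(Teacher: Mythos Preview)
Your overall strategy matches the paper's: verify the pointwise hypotheses of \Cref{theo:icewt,theo:symicewt} by expanding $|\widehat{\psi}_n(\xi)|^2=|\det J_{\gamma_n}(\xi)|\,e^{-2\pi(5/2)^2\|\gamma_n(\xi)\|^2}$, obtain strict positivity termwise, factor out $M_\xi=\sup_n|\det J_{\gamma_n}(\xi)|$, and use the multiplicity bound $K_\xi$ to reduce to a sum over the distinct values of $\gamma_n(\xi)$.

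Two differences are worth noting. First, for the symmetric cross-term your Cauchy--Schwarz bound is cleaner than the paper's route: the paper repeats the entire Gaussian-sum argument (this time landing on an integral over $\R^4$), whereas your inequality reduces the cross-term directly to the already-established finiteness of $\sum_n|\widehat{\psi}_n(\xi)|^2$ (after reindexing, $\sum_n|\widehat{\psi}_{-n}(\xi)|^2$ is the same sum). This is a genuine simplification.

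Second, the ``spreading step'' you single out as the main obstacle is exactly the point where the paper's argument is thinnest. After passing to distinct values, the paper simply asserts
\[
\sum_{n\in\Gamma_\xi} e^{-2\pi(5/2)^2\|\gamma_n(\xi)\|^2}\ \le\ \int_{\R^2} e^{-2\pi(5/2)^2\|u\|^2}\,\rmd u,
\]
with the sole justification that $\{\gamma_n(\xi):n\in\Gamma_\xi\}$ is ``a sampling of $\R^2$'' and the integrand is positive; no separation of the sample points is established. So your caution is well placed---the paper does not supply the quantitative dispersion estimate you are looking for either. Your proposed remedy via $|\Omega_n|\ge|\Lambda|/M_\xi$ is not quite enough as written: the Jacobian bound is only assumed pointwise in $\xi$, so it does not yield a uniform lower bound on $|\Omega_n|$, and even a uniform volume bound on the $\Omega_n$ would control their sizes rather than the locations of the images $\gamma_n(\xi)$ for a fixed $\xi$. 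In short, your proposal reaches the same level of rigor as the paper at this step, while being explicit about where the heuristic lies.
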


\begin{proof}
We have, for every $\xi \in \R^2$, 
\begin{align*}
\sum_{n\in\Upsilon}\left|\widehat{\psi}_n(\xi)\right|^2 = \sum_{n\in\Upsilon}\vert \mathrm{det}\; J_{\gamma_n}(\xi) \vert \left|\widehat{\psi}(\gamma_n(\xi))\right|^2 = \sum_{n\in\Upsilon} \vert \mathrm{det}\; J_{\gamma_n}(\xi) \vert e^{-2\pi(\frac{5}{2})^2\Vert \gamma_n(\xi) \Vert^2}.
\end{align*}

First, since $\gamma_n$ is a diffeomorphism, we have, for every $\xi \in \R^2$, $\vert \mathrm{det}\; J_{\gamma_n}(\xi) \vert > 0 $ and therefore
\begin{equation*}
 \sum_{n\in\Upsilon}\left|\widehat{\psi}_n(\xi)\right|^2 > 0.
 \end{equation*}
 
Now, let us denote $\Gamma_\xi = \left\{ m \in \Upsilon \mid \gamma_m(\xi) = \gamma_n(\xi) \Rightarrow n \geq m \right\} $. For $a.e. \, \xi \in \R^2$  and every $n\in\Gamma_\xi$, the $\gamma_n(\xi)$ are all different and the condition $\vert \{ m \in \Upsilon \mid \gamma_m(\xi)=\gamma_n(\xi) \} \vert \leq K_\xi \in \mathbb{N}$ means that there are at most $K_\xi$ integers $m \in \Upsilon$ for which the $\gamma_m(\xi)$ have the same value as $\gamma_n(\xi)$. 
It follows that, for $ a.e. \, \xi \in \R^2$,
\begin{align*}
\sum_{n\in\Upsilon}\left|\widehat{\psi}_n(\xi)\right|^2 & \leq \max_{n \in \Upsilon} \vert \mathrm{det}\; J_{\gamma_n}(\xi) \vert \sum_{n\in\Upsilon} e^{-2\pi(\frac{5}{2})^2\Vert \gamma_n(\xi) \Vert^2}\\ 
& \leq K_\xi \max_{n \in \Upsilon} \vert \mathrm{det}\; J_{\gamma_n}(\xi) \vert \sum_{n\in\Gamma_\xi} e^{-2\pi(\frac{5}{2})^2\Vert \gamma_n(\xi) \Vert^2} \\
& \leq K_\xi \max_{n \in \Upsilon} \vert \mathrm{det}\; J_{\gamma_n}(\xi) \vert \int_{\R^2} e^{-2\pi(\frac{5}{2})^2\Vert u \Vert^2} \mathrm{d}u < \infty,
\end{align*}
where the last line comes from the fact that $\{\gamma_n(\xi) | n\in\Gamma_\xi \}$ is a sampling of $\R^2$ and $u \mapsto e^{-2\pi(\frac{5}{2})^2\Vert u \Vert^2}$ is a positive function.
Then, \Cref{theo:icewt} applies and gives a dual frame of $\{\psi_n\}_{n\in\Upsilon}$. This guarantees the reconstruction using \Cref{eq:reconstruction}.

In addition, in the case of a symmetric partition $\{\Omega_n\}_{n\in\Upsilon}$, we show similarly that, for $ a.e. \, \xi \in \R^2$,
\begin{align*}
\sum_{n\in\Upsilon\setminus\{0\}}\left|\widehat{\psi}_n(\xi)\right| \left|\widehat{\psi}_{-n}(\xi)\right| & =\sum_{n\in\Upsilon\setminus\{0\}}\sqrt{\vert \mathrm{det}\; J_{\gamma_n}(\xi) \mathrm{det}\; J_{\gamma_{n}}(-\xi) \vert} e^{-\pi(\frac{5}{2})^2 \left(\Vert \gamma_{n}(\xi) \Vert^2+\Vert \gamma_{-n}(\xi) \Vert^2 \right)} \\
&=K_\xi \max_{n\in\Upsilon\setminus\{0\}}\vert \mathrm{det}\; J_{\gamma_n}(\xi) \vert \sum_{n\in\Gamma_\xi} e^{-\pi(\frac{5}{2})^2 \left(\Vert \gamma_{n}(\xi) \Vert^2+\Vert \gamma_{-n}(\xi) \Vert^2 \right)} \\
&=K_\xi \max_{n\in\Upsilon\setminus\{0\}}\vert \mathrm{det}\; J_{\gamma_n}(\xi) \vert \int_{\R^4} e^{-\pi(\frac{5}{2})^2 \Vert u \Vert^2 } \rmd u < \infty. 
\end{align*}
Then, \Cref{theo:symicewt} applies and gives a dual frame of $\{\chi_n\}_{n\in\Upsilon^+}$. The reconstruction is then given by \Cref{eq:symreconstruction}.
\end{proof}

For the toy image, \Cref{fig:toy_image_watershed_gabor_transform,fig:toy_image_voronoi_gabor_transform} compare the preimage of $\Lambda = \mathrm{B}_2(0,\frac{1}{2})$ under the diffeomorphism estimate $\breve \gamma_n$ to the targeted $\Omega_n$ and show the symmetric empirical Gabor wavelet coefficients induced by $\widehat{\psi}^\mathrm{G} \circ \breve \gamma_n$, for the Watershed and Voronoi partitions, respectively, and $n\in\{0,\ldots,9\}$. 
The diffeomorphisms $\gamma_n$ are well estimated for both partitions but the more constrained shapes of the Voronoi partition allow for better estimates. The MSE of the reconstruction from the Watershed and Voronoi transforms are, respectively, $6.93 \times 10^{-30}$ and $1.65 \times 10^{-31}$, 
confirming the accurate diffeomorphism estimation. In addition, the different components of the toy image are well recovered by the wavelet coefficients associated with the highest Fourier spectrum energies.

\begin{figure}[!ht]
\centerline{
  \includegraphics[width=\textwidth]{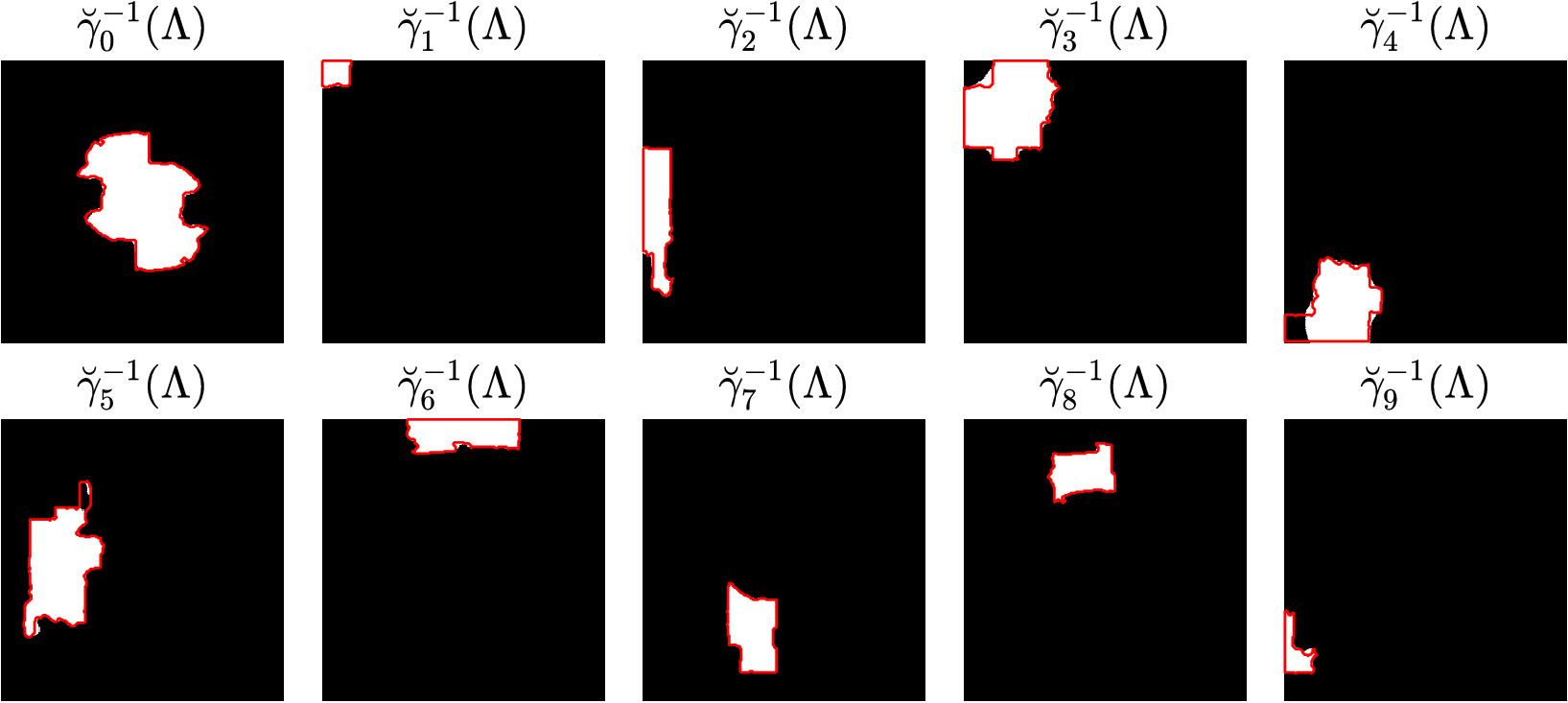} }
  \noindent\rule[5pt]{\textwidth}{0.4pt}
  \centerline{ \includegraphics[width=\textwidth]{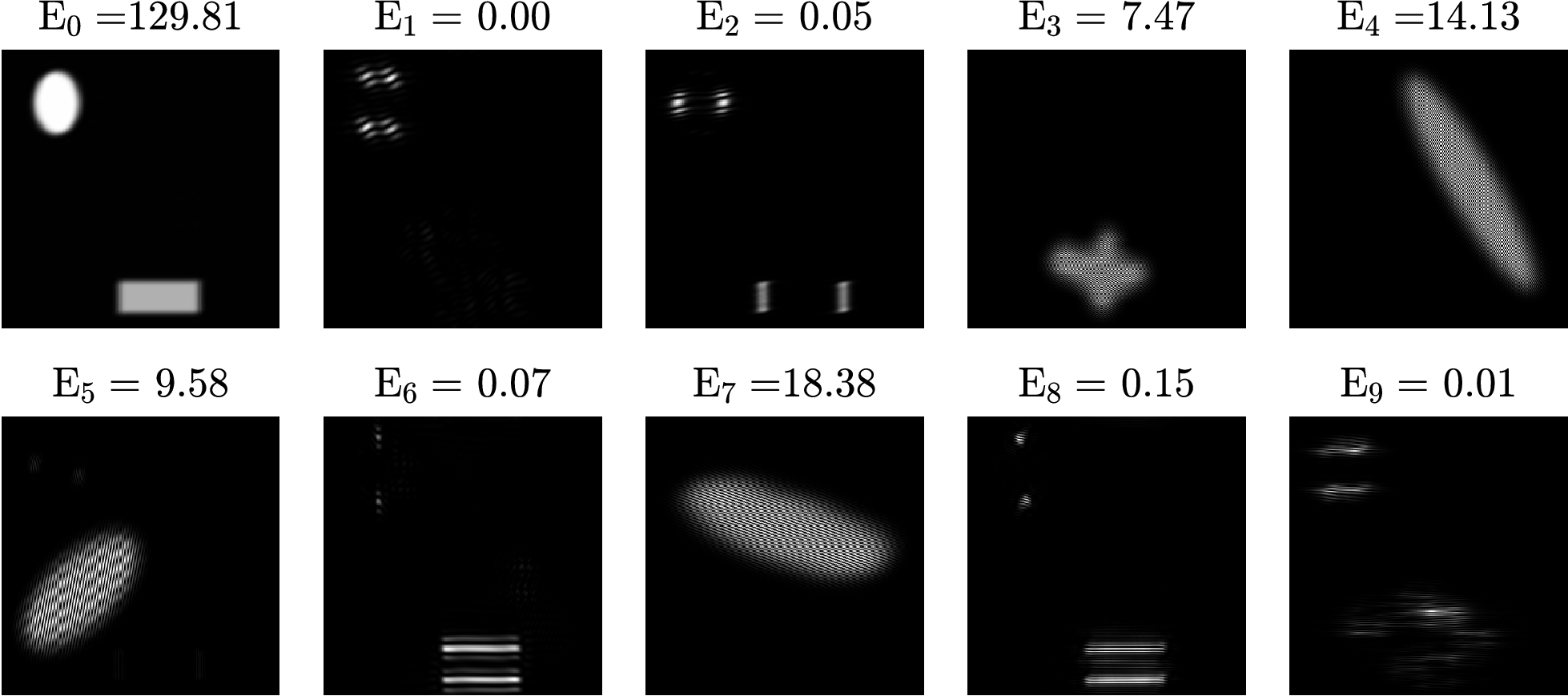} }
 \caption{ {\bf Gabor Watershed transform of the toy image.} (Top) Sets $\breve \gamma_n^{-1}(\Lambda)$ (in white) for the diffeomorphisms $\gamma_n$ mapping the Watershed regions $\Omega_n$ (with contour in red) to the disk $\Lambda = \mathrm{B}_2(0,1/2) $ and (bottom) resulting empirical Gabor wavelet transform spectra, for the toy image and $n \in \{0,\ldots,9\}$. The Fourier spectrum energies $\mathrm{E}_n$ over the regions $\Omega_n \cup \Omega_{-n}$ are indicated above the wavelet spectra. }
  \label{fig:toy_image_watershed_gabor_transform}
\end{figure}

\begin{figure}[!ht]
\centerline{
  \includegraphics[width=\textwidth]{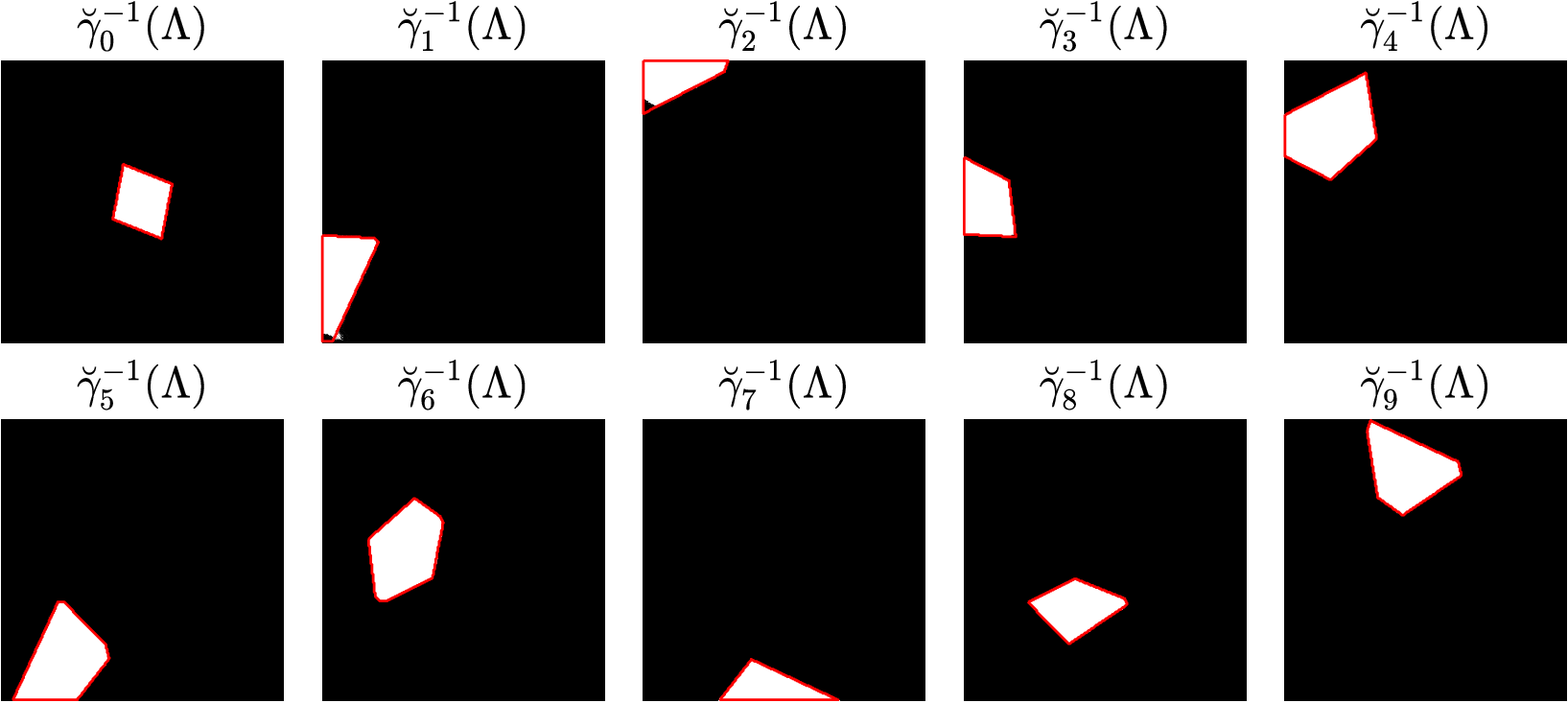} }
  \noindent\rule[5pt]{\textwidth}{0.4pt}
  
  \centerline{ \includegraphics[width=\textwidth]{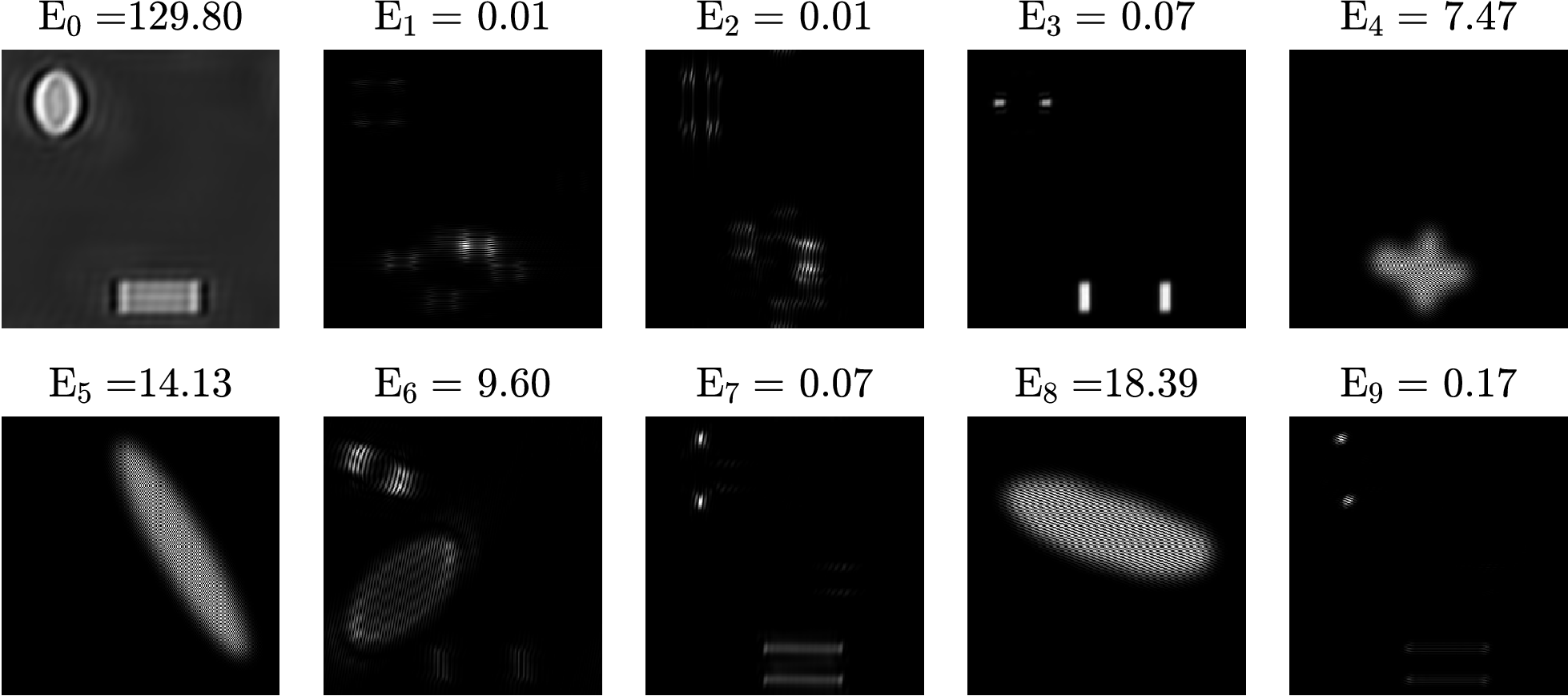} }
 \caption{ {\bf Gabor Voronoi transform of the toy image.} (Top) Sets $\breve \gamma_n^{-1}(\Lambda)$ (in white) for the diffeomorphisms $\gamma_n$ mapping the Voronoi regions $\Omega_n$ (with contour in red) to the disk $\Lambda = \mathrm{B}_2(0,1/2) $ and (bottom) resulting empirical Gabor wavelet transform spectra, for the toy image and $n \in \{0,\ldots,9\}$. The Fourier spectrum energies $\mathrm{E}_n$ over the regions $\Omega_n \cup \Omega_{-n}$ are indicated above the wavelet spectra. }
  \label{fig:toy_image_voronoi_gabor_transform}
\end{figure}

For the Barbara image, 
\Cref{fig:barbara_watershed_gabor_transform,fig:barbara_voronoi_gabor_transform} compare the preimage of $\Lambda = \mathrm{B}_2(0,\frac{1}{2})$ under $\breve \gamma_n$ to the targeted $\Omega_n$ and show the symmetric empirical Gabor wavelet coefficients induced by $\widehat{\psi}^\mathrm{G} \circ \breve \gamma_n$, for the Watershed and Voronoi partitions, respectively, and $n\in\{0,\ldots,9\}$. The diffeomorphism estimation is much more accurate on the Vornoi partition than on the Watershed partition, in particular for the region $\Omega_{0}$ which is very large for the Watershed partition. Thus, in terms of reconstruction, Watershed and Voronoi partition lead to MSE of $5.02 \times 10^{-25}$ and $2.10 \times 10^{-30}$, 
respectively.

\begin{figure}[htbp]
\centerline{
  \includegraphics[width=\textwidth]{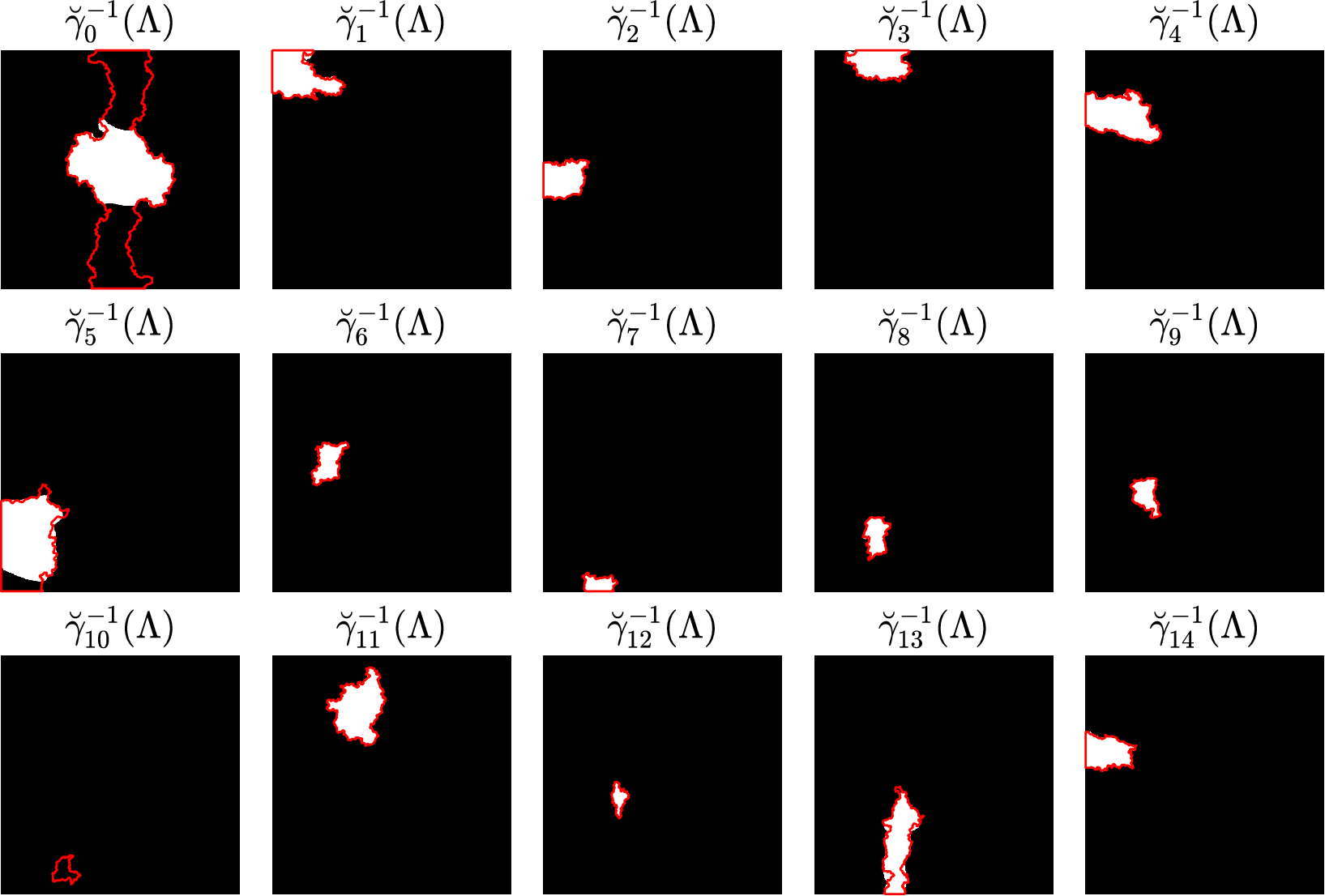} }
  \noindent\rule[5pt]{\textwidth}{0.4pt}
  \centerline{ \includegraphics[width=\textwidth]{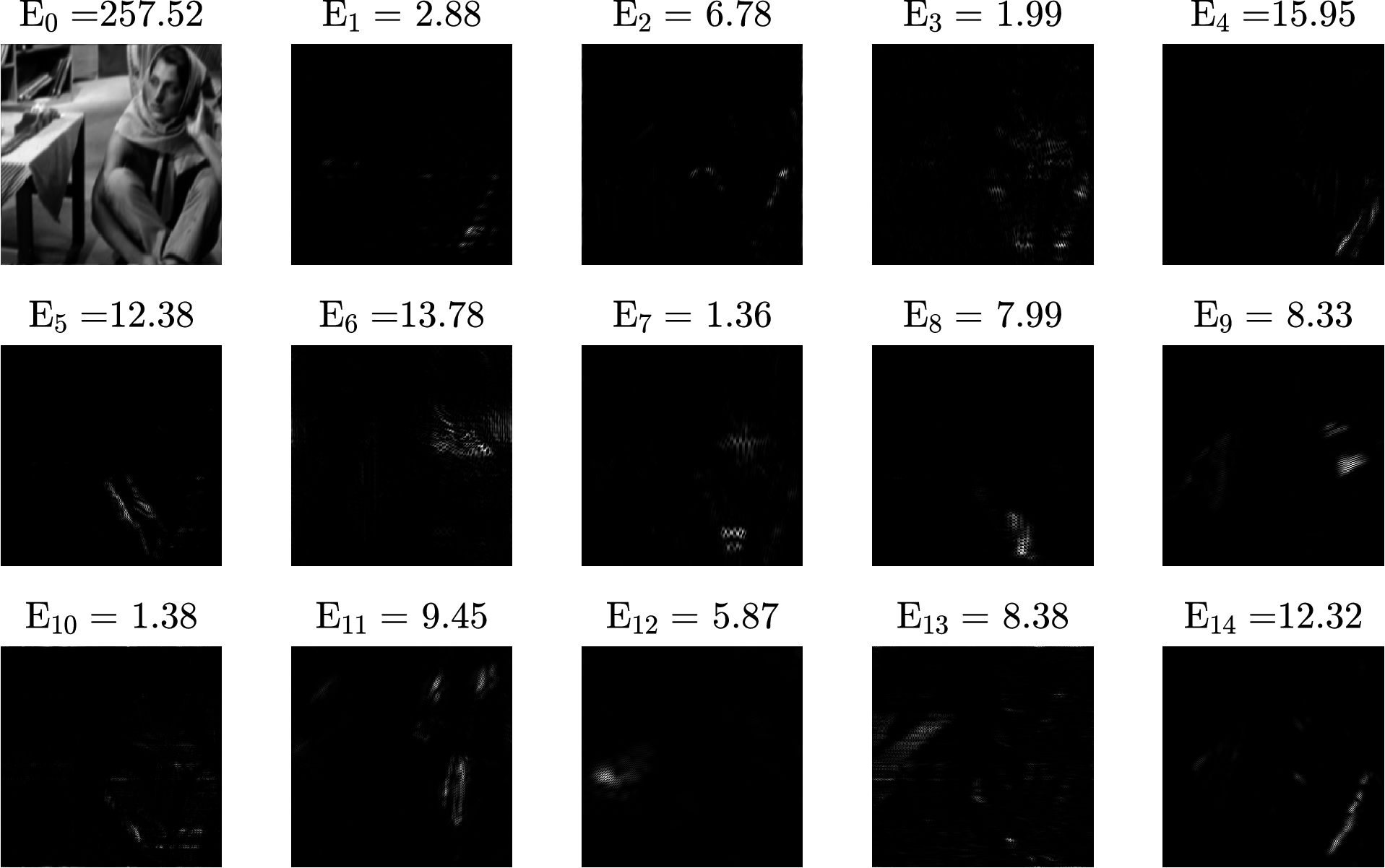} }
 \caption{ {\bf Gabor Watershed transform of the Barbara image.} (Top) Sets $\breve \gamma_n^{-1}(\Lambda)$ (in white) for the diffeomorphisms $\gamma_n$ mapping the Watershed regions $\Omega_n$ (with contour in red) to the disk $\Lambda = \mathrm{B}_2(0,1/2) $ and (bottom) resulting empirical Gabor wavelet transform spectra, for the Barbara image and $n \in \{0,\ldots,14\}$. The Fourier spectrum energies $\mathrm{E}_n$ over the regions $\Omega_n \cup \Omega_{-n}$ are indicated above the wavelet spectra. }
  \label{fig:barbara_watershed_gabor_transform}
\end{figure}

\begin{figure}[htbp]
\centerline{
  \includegraphics[width=\textwidth]{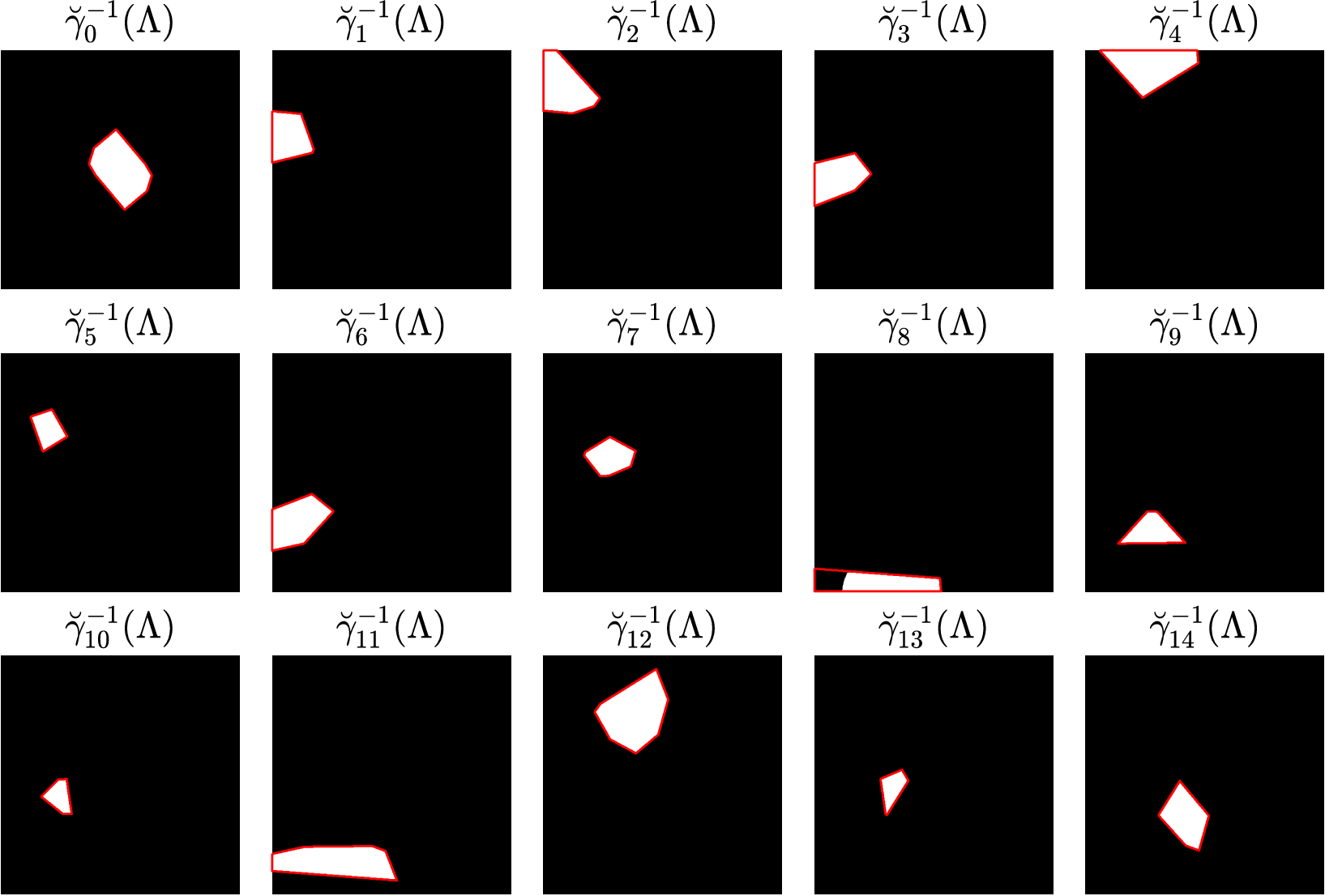} }
  \noindent\rule[5pt]{\textwidth}{0.4pt}
  \centerline{ \includegraphics[width=\textwidth]{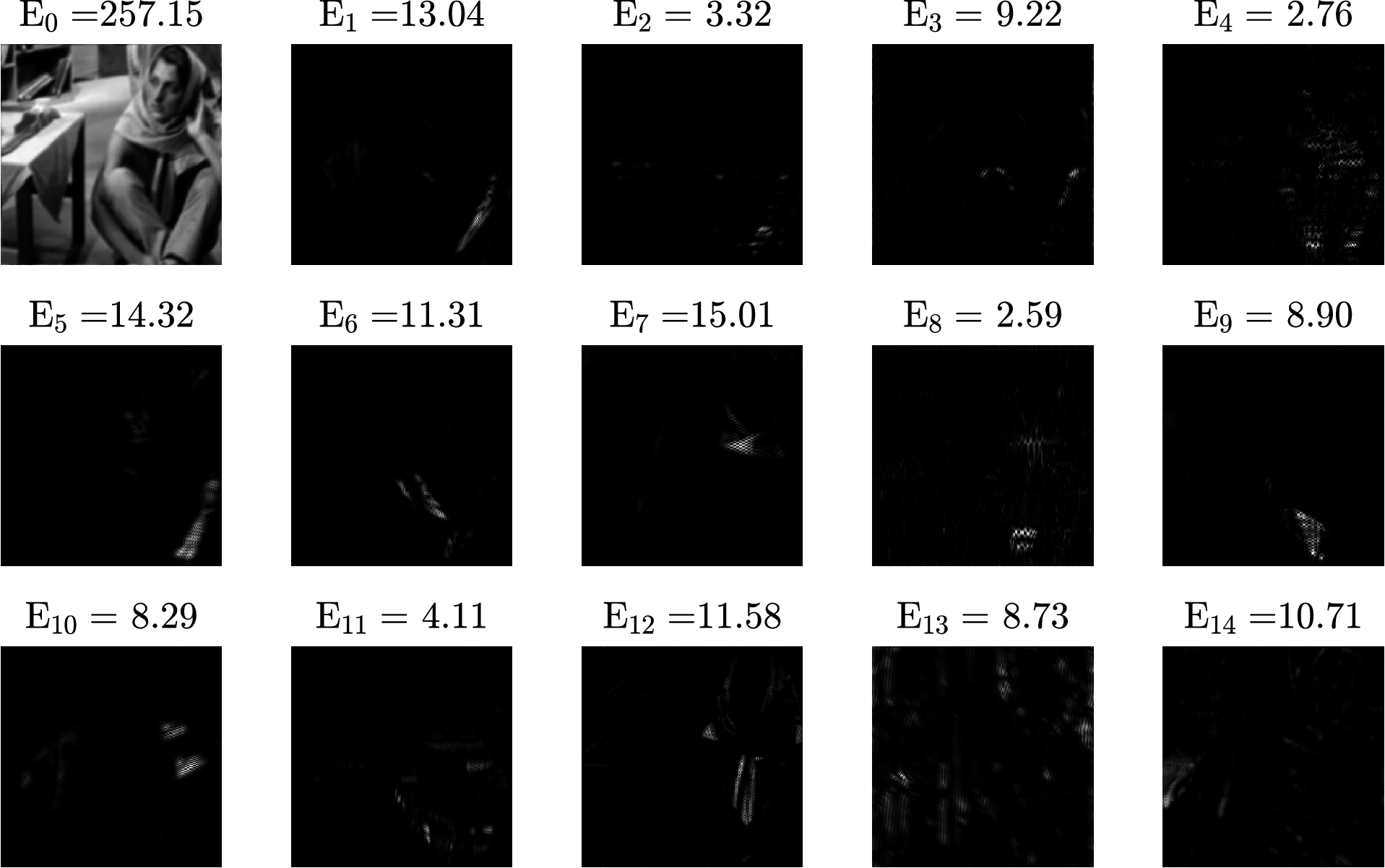} }
 \caption{ {\bf Gabor Voronoi transform of the Barbara image.} (Top) Sets $\breve \gamma_n^{-1}(\Lambda)$ (in white) for the diffeomorphisms $\gamma_n$ mapping the Voronoi regions $\Omega_n$ (with contour in red) to the disk $\Lambda = \mathrm{B}_2(0,1/2) $ and (bottom) resulting empirical Gabor wavelet transform spectra, for the Barbara image and $n \in \{0,\ldots,14\}$. The Fourier spectrum energies $\mathrm{E}_n$ over the regions $\Omega_n \cup \Omega_{-n}$ are indicated above the wavelet spectra. }
  \label{fig:barbara_voronoi_gabor_transform}
\end{figure}

\subsection{2D empirical Shannon wavelets}

The 1D Shannon scaling function is a $sinc$ function given in the Fourier domain by, for every $u \in \R$,
\begin{equation*}
\widehat{\psi}^\mathrm{1D-S}(u)= e^{-\imath\pi(u+\frac{3}{2})}\mathbb{1}_{\left(-\frac{1}{2},\frac{1}{2}\right)}(u).
\end{equation*}
This definition can be extended to the 2D domain as a separable function given by, for every $u=(u_1,u_2) \in \R^2$,
\begin{equation*}
 \widehat{\psi}^\mathrm{S}(u)=\widehat{\psi}^\mathrm{1D-S}(u_1) \, \widehat{\psi}^\mathrm{1D-S}(u_2),
\end{equation*}
which is supported by the square centered in $0$ of side length $1$ denoted $\Lambda = \left(-\frac{1}{2},\frac{1}{2} \right) \times \left(-\frac{1}{2},\frac{1}{2} \right)$.

The following proposition gives guarantees of reconstruction of Shannon empirical wavelet systems.

\begin{proposition}[Shannon wavelet reconstruction]
Assume that the boundaries $\partial \Omega_n$ have measures zero.
Then, the continuous reconstruction is guaranteed for the empirical wavelet systems $\{\psi_n\}_{n\in\Upsilon}$ and $\{\chi_n\}_{n\in\Upsilon}$ induced by $ \widehat{\psi}^\mathrm{S} \circ \gamma_n$, and is given by \Cref{eq:reconstruction,eq:symreconstruction}.
\end{proposition}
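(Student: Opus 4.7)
The strategy is to verify the hypotheses of \Cref{theo:icewt} and \Cref{theo:symicewt}, taking advantage of the fact that $\widehat{\psi}^\mathrm{S}$ is, up to a unit-modulus phase factor, the indicator of $\Lambda$. Concretely, $|\widehat{\psi}^\mathrm{S}(u)|^2 = \mathbb{1}_{\Lambda}(u)$, so for every $\xi \in \R^2$,
$$\left|\widehat{\psi}_n(\xi)\right|^2 = \left|\mathrm{det}\; J_{\gamma_n}(\xi)\right| \mathbb{1}_{\gamma_n^{-1}(\Lambda)}(\xi).$$
When $\Omega_n$ is bounded, the hypothesis $\Lambda = \gamma_n(\Omega_n)$ yields $\gamma_n^{-1}(\Lambda) = \Omega_n$, so that $|\widehat{\psi}_n|^2$ is exactly the Jacobian weight on $\Omega_n$ and vanishes elsewhere; in the unbounded case, the hypothesis $\Lambda \subsetneq \gamma_n(\Omega_n)$ gives $\gamma_n^{-1}(\Lambda) \subseteq \Omega_n$, so the filter is still essentially supported in $\Omega_n$.

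Next I use the partition structure. Since $\R^2 = \bigcup_{n\in\Upsilon}\overline{\Omega_n}$, the sets $\Omega_n$ are pairwise disjoint and, by hypothesis, each $\partial\Omega_n$ has measure zero, almost every $\xi \in \R^2$ belongs to exactly one open set $\Omega_{n(\xi)}$. Consequently, the indicator functions $\mathbb{1}_{\gamma_n^{-1}(\Lambda)}$ have pairwise disjoint supports up to a null set, and for $a.e.\,\xi$ at most one term of the sum $\sum_{n\in\Upsilon}|\widehat{\psi}_n(\xi)|^2$ is nonzero. When $\xi \in \Omega_{n(\xi)}$ with $\gamma_{n(\xi)}(\xi) \in \Lambda$, this single surviving term equals $|\mathrm{det}\; J_{\gamma_{n(\xi)}}(\xi)|$, which is strictly positive (since $\gamma_{n(\xi)}$ is a diffeomorphism) and finite (by smoothness). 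Thus the condition $0 < \sum_{n\in\Upsilon}|\widehat{\psi}_n(\xi)|^2 < \infty$ of \Cref{theo:icewt} is verified $a.e.$, and the reconstruction formula \eqref{eq:reconstruction} follows directly.

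For the symmetric case, on top of the above bound on $\sum_{n \in \Upsilon^+}|\widehat{\psi}_n|^2$, I need the cross-sum $\sum_{n\in\Upsilon^+\setminus\{0\}}|\widehat{\psi}_n(\xi)||\widehat{\psi}_{-n}(\xi)|$ to be finite. Each term factors as
$$\sqrt{|\mathrm{det}\; J_{\gamma_n}(\xi)||\mathrm{det}\; J_{\gamma_{-n}}(\xi)|}\;\mathbb{1}_{\gamma_n^{-1}(\Lambda)}(\xi)\mathbb{1}_{\gamma_{-n}^{-1}(\Lambda)}(\xi),$$
and since $\gamma_n^{-1}(\Lambda)\subseteq\Omega_n$ and $\gamma_{-n}^{-1}(\Lambda)\subseteq\Omega_{-n}$ with $\Omega_n\cap\Omega_{-n}=\varnothing$ when $n\neq 0$, the product of indicators vanishes $a.e.$ Hence the cross-sum is $0$ almost everywhere, trivially satisfying the finiteness hypothesis of \Cref{theo:symicewt}, which then yields \eqref{eq:symreconstruction}.

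\textbf{Main obstacle.} The delicate point is the unbounded regions: when $\Omega_n$ is unbounded, $\gamma_n^{-1}(\Lambda)$ is a strict subset of $\Omega_n$, so $\widehat{\psi}_n$ has compact Fourier support leaving a gap $\Omega_n\setminus\gamma_n^{-1}(\Lambda)$ of potentially positive measure on which every filter vanishes, violating the strict positivity required in \Cref{theo:icewt}. The cleanest way around this is to interpret the proposition as applying to partitions whose sets are bounded, and to note that in practice (as in the numerical section) the discrete Fourier domain is itself bounded so all $\Omega_n$ are effectively bounded; otherwise one must either restrict the reconstruction to the subspace of $\Li(\R^2)$ whose Fourier transform vanishes on these gaps, or modify $\gamma_n$ on unbounded cells so that $\gamma_n^{-1}(\Lambda)$ covers $\Omega_n$ up to a null set. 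I would write the proof under the bounded assumption and flag this caveat explicitly.
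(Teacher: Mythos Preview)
Your proposal is correct and follows essentially the same route as the paper's own proof: both exploit that $|\widehat{\psi}^{\mathrm S}|^2=\mathbb{1}_\Lambda$, so that on each $\Omega_m$ the sum $\sum_{n}|\widehat{\psi}_n(\xi)|^2$ collapses to the single term $|\det J_{\gamma_m}(\xi)|$, which is positive and finite because $\gamma_m$ is a diffeomorphism; and both observe that the cross terms $|\widehat{\psi}_n||\widehat{\psi}_{-n}|$ vanish identically since their supports lie in disjoint $\Omega_n$, $\Omega_{-n}$. Your treatment of the unbounded cells is in fact more careful than the paper's, which tacitly writes $\sum_n|\widehat{\psi}_n(\xi)|^2=|\det J_{\gamma_m}(\xi)|$ for all $\xi\in\Omega_m$ without flagging that this can fail on $\Omega_m\setminus\gamma_m^{-1}(\Lambda)$ when $\Omega_m$ is unbounded.
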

\begin{proof}
Let fix $ m\in\Upsilon$.
For every $ \xi \in \Omega_m$, 
\begin{equation*}
\sum_{n\in\Upsilon}\left|\widehat{\psi}_n(\xi)\right|^2 = \sum_{n\in\Upsilon}\vert \mathrm{det}\; J_{\gamma_n}(\xi) \vert \left|\widehat{\psi}(\gamma_n(\xi))\right|^2 = \vert \mathrm{det}\; J_{\gamma_m}(\xi) \vert.
\end{equation*}
Since $\gamma_n$ is a diffeomorphism, it follows that, for $ a.e. \, \xi \in \R^2$, $\vert \mathrm{det}\; J_{\gamma_m}(\xi) \vert > 0$ and therefore
\begin{equation*}
0< \sum_{n\in\Upsilon}\left|\widehat{\psi}_n(\xi)\right|^2 < \infty.
\end{equation*}
This corresponds to the condition of \Cref{theo:icewt}, which gives the reconstruction from a dual frame of $\{\psi_n\}_{n\in\Upsilon}$ using \Cref{eq:reconstruction}.

In addition, for every $\xi\in\R^N$, 
\begin{equation*}
\sum_{n\in\Upsilon\setminus\{0\}}\left|\widehat{\psi}_n(\xi)\right| \left|\widehat{\psi}_{-n}(\xi)\right| =0.
\end{equation*}
Then, \Cref{theo:symicewt} gives a dual frame of $\{\chi_n\}_{n\in\Upsilon^+}$ permitting the reconstruction following \Cref{eq:symreconstruction}.
\end{proof}

For the toy image, \Cref{fig:toy_image_watershed_shannon_transform,fig:toy_image_voronoi_shannon_transform} compare the preimage of $\Lambda = \left(-\frac{1}{2},\frac{1}{2} \right) \times \left(-\frac{1}{2},\frac{1}{2} \right)$ under the diffeomorphism estimate $\breve \gamma_n$ to the targeted $\Omega_n$ and show the symmetric empirical Shannon wavelet coefficients induced by $\widehat{\psi}^\mathrm{S} \circ \breve \gamma_n$, for the Watershed and Voronoi partitions, respectively, and $n\in\{0,\ldots,9\}$.
Most of the diffeomorphisms $\gamma_n$ are well estimated for both partitions except for the region $\Omega_4$ of the Watershed partition. Thus, the MSE of the reconstruction from the Watershed and Voronoi transforms are, respectively, $6.24 \times 10^{-10}$ and $3.88 \times 10^{-32}$, 
confirming the higher accuracy of the diffeormophism estimation for the Voronoi partition.
However, the components of the toy image are better separated by the wavelet coefficients associated to the Watershed partition.

\begin{figure}[!ht]
\centerline{
  \includegraphics[width=\textwidth]{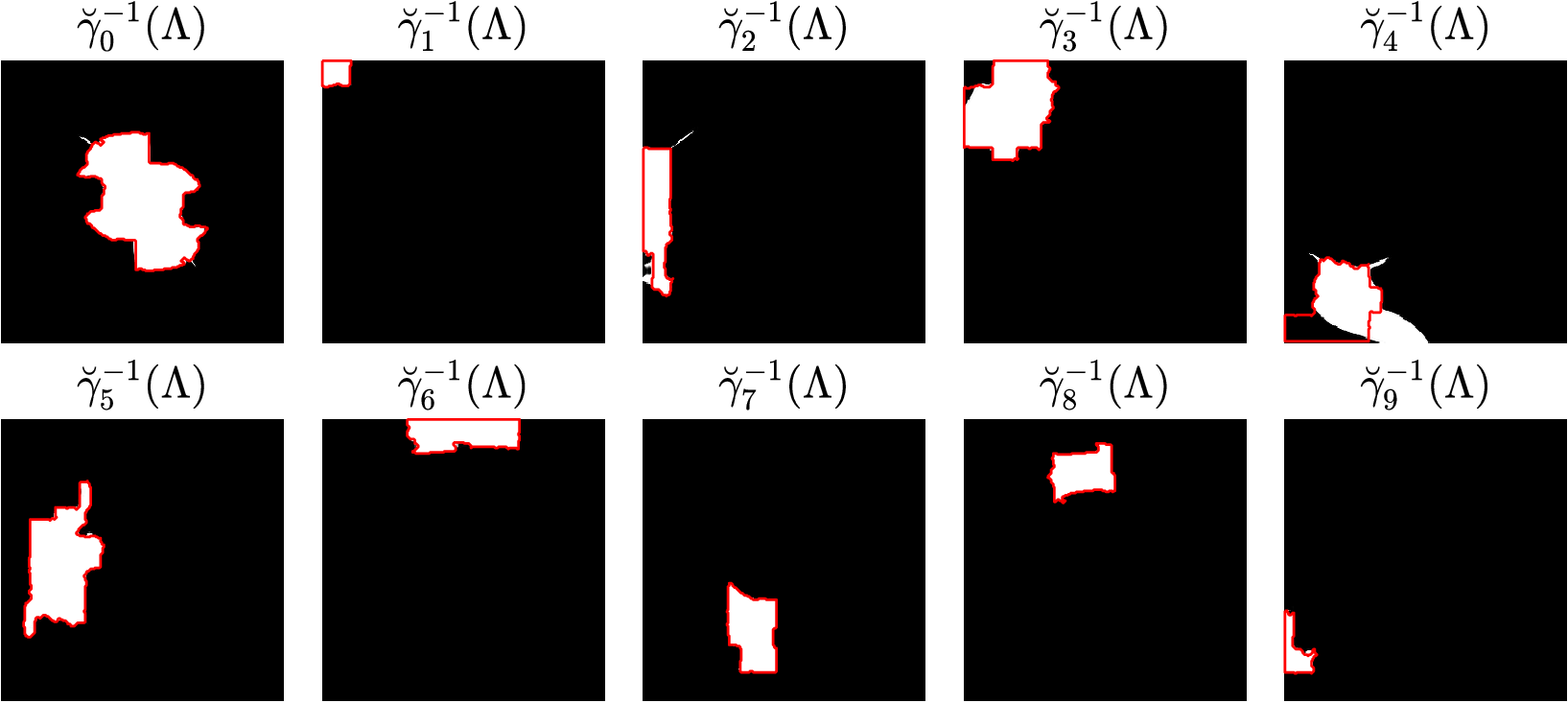} }
  \noindent\rule[5pt]{\textwidth}{0.4pt}
  \centerline{ \includegraphics[width=\textwidth]{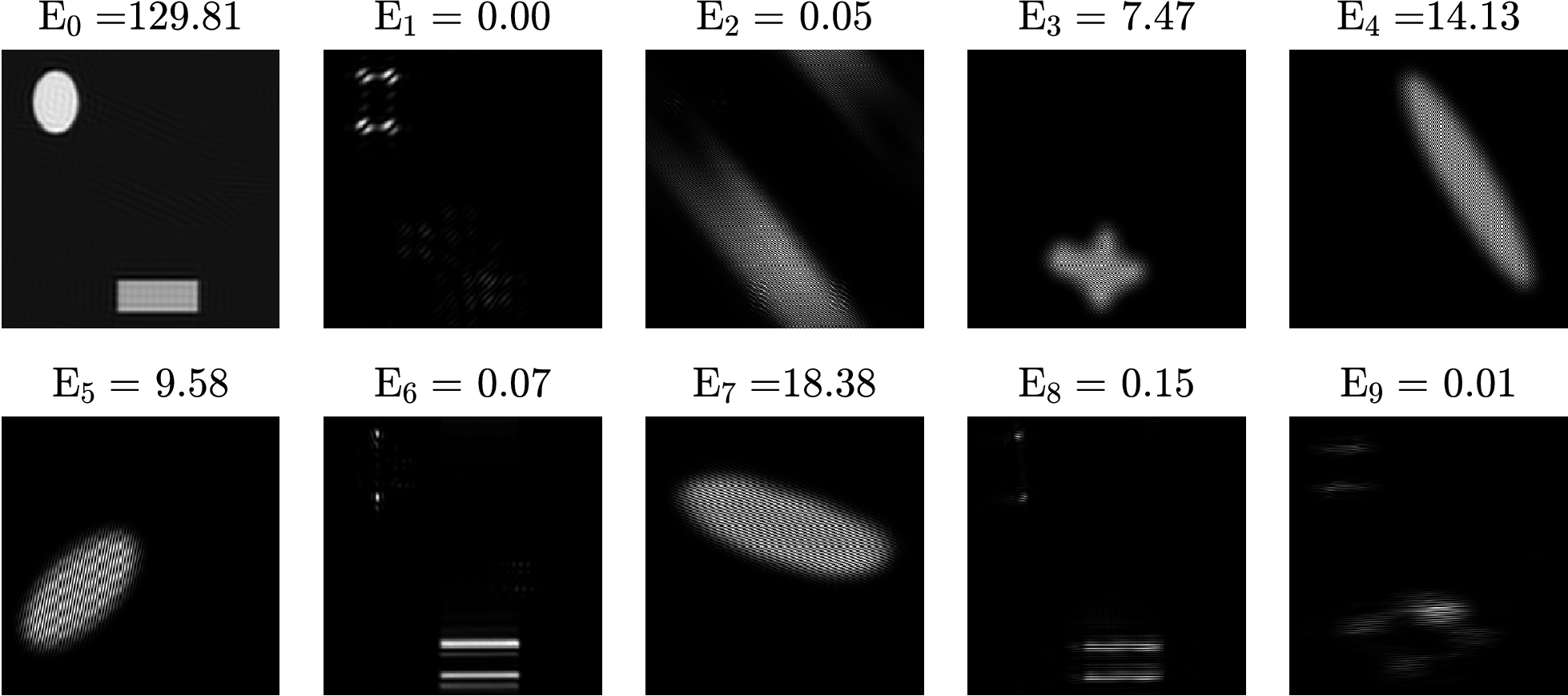} }
 \caption{ {\bf Shannon Watershed transform of the toy image.} (Top) Sets $\breve \gamma_n^{-1}(\Lambda)$ (in white) for the diffeomorphisms $\gamma_n$ mapping the Watershed regions $\Omega_n$ (with contour in red) to the square $\Lambda = \left(-\frac{1}{2},\frac{1}{2} \right) \times \left(-\frac{1}{2},\frac{1}{2} \right) $ and (bottom) resulting empirical Shannon wavelet transform spectra, for the toy image and $n \in \{0,\ldots,9\}$. The Fourier spectrum energies $\mathrm{E}_n$ over the regions $\Omega_n \cup \Omega_{-n}$ are indicated above the wavelet spectra. }
  \label{fig:toy_image_watershed_shannon_transform}
\end{figure}

\begin{figure}[!ht]
\centerline{
  \includegraphics[width=\textwidth]{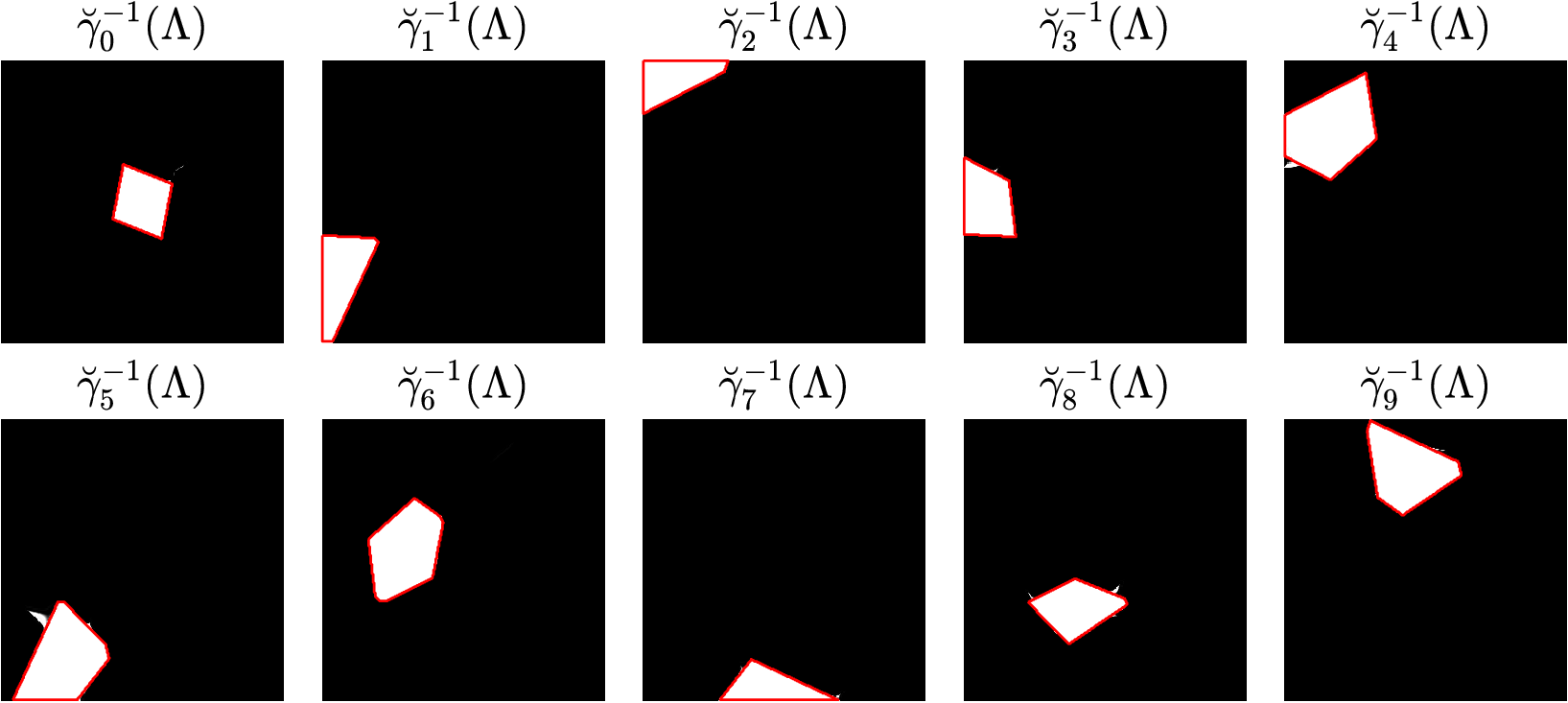} }
  \noindent\rule[5pt]{\textwidth}{0.4pt}
  \centerline{ \includegraphics[width=\textwidth]{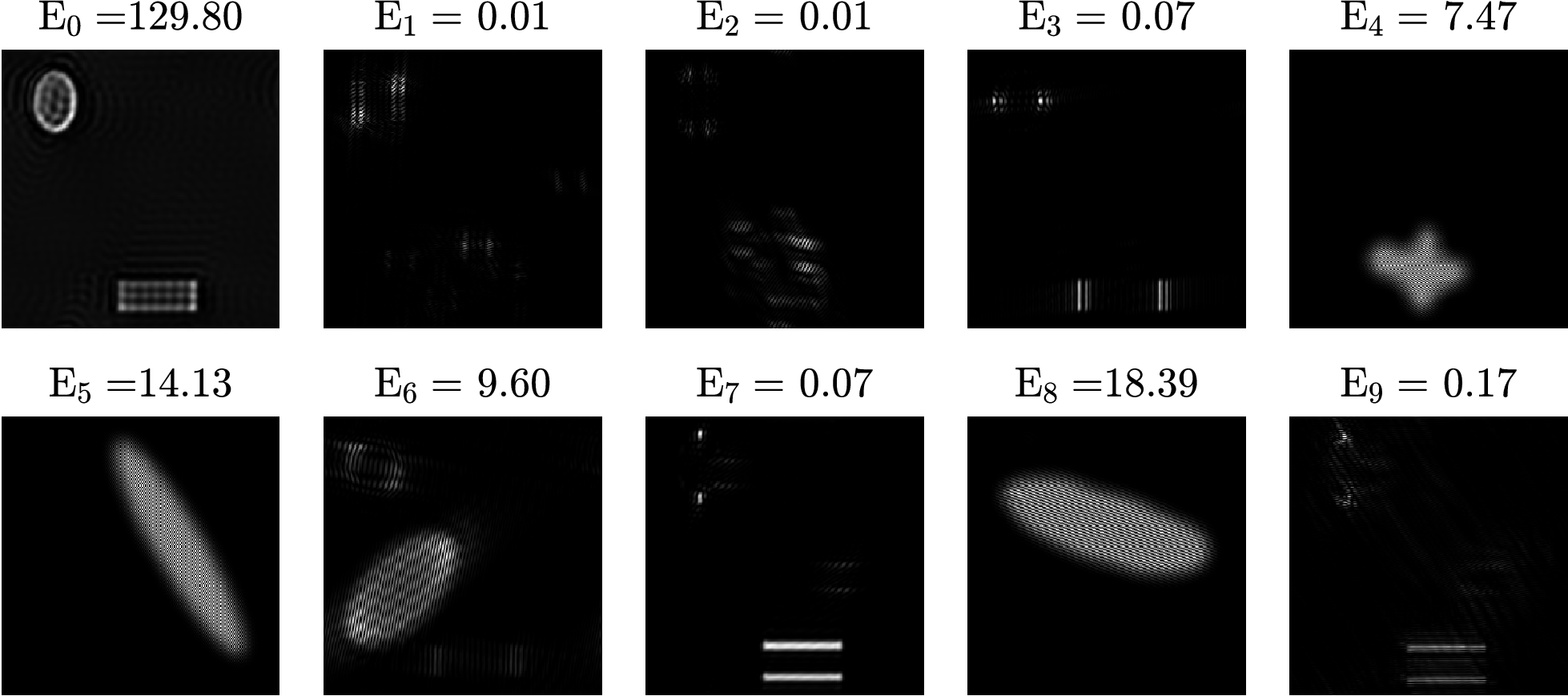} }
 \caption{ {\bf Shannon Voronoi transform of the toy image.} (Top) Sets $\breve \gamma_n^{-1}(\Lambda)$ (in white) for the diffeomorphisms $\gamma_n$ mapping the Voronoi regions $\Omega_n$ (with contour in red) to the square $\Lambda = \left(-\frac{1}{2},\frac{1}{2} \right) \times \left(-\frac{1}{2},\frac{1}{2} \right) $ and (bottom) resulting empirical Shannon wavelet transform spectra, for the toy image and $n \in \{0,\ldots,9\}$. The Fourier spectrum energies $\mathrm{E}_n$ over the regions $\Omega_n \cup \Omega_{-n}$ are indicated above the wavelet spectra. }
  \label{fig:toy_image_voronoi_shannon_transform}
\end{figure}

For the Barbara image, \Cref{fig:barbara_watershed_shannon_transform,fig:barbara_voronoi_shannon_transform} compare the preimage of $\Lambda = \left(-\frac{1}{2},\frac{1}{2} \right) \times \left(-\frac{1}{2},\frac{1}{2} \right)$ under $\breve \gamma_n$ to the targeted $\Omega_n$ along with the symmetric empirical Shannon wavelet coefficients induced by $\widehat{\psi}^\mathrm{S} \circ \breve \gamma_n$, for the Watershed and Voronoi partitions, respectively, and $n\in\{0,\ldots,9\}$. 
The diffeomorphism estimates associated with the Watershed partition show little accuracy for some regions, in particular regions $\Omega_0$, $\Omega_5$ and $\Omega_{9}$. In contrast, the Voronoi partition allows for better estimates despite some singularity due to the irregularity at the corner of the square $\overline{\Lambda}$. 
The reconstruction from the Watershed and Voronoi partitions lead to MSE of, respectively, $9.45 \times 10^{-5}$ and $2.19 \times 10^{-29}$, 
which quantifies the higher accuracy of diffeomorphism estimation for the Voronoi partition.

\begin{figure}[htbp]
\centerline{
  \includegraphics[width=\textwidth]{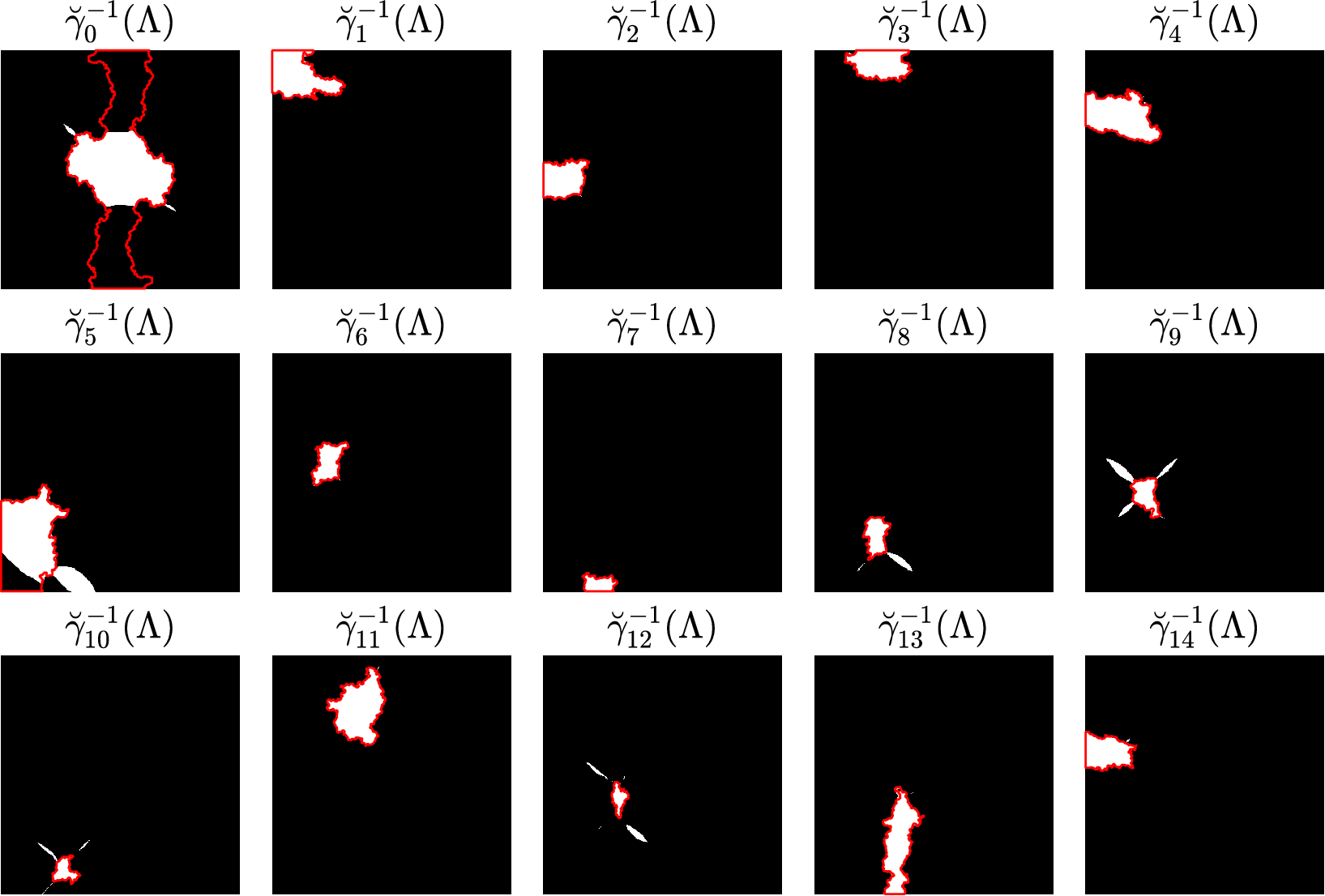} }
  \noindent\rule[5pt]{\textwidth}{0.4pt}
  \centerline{ \includegraphics[width=\textwidth]{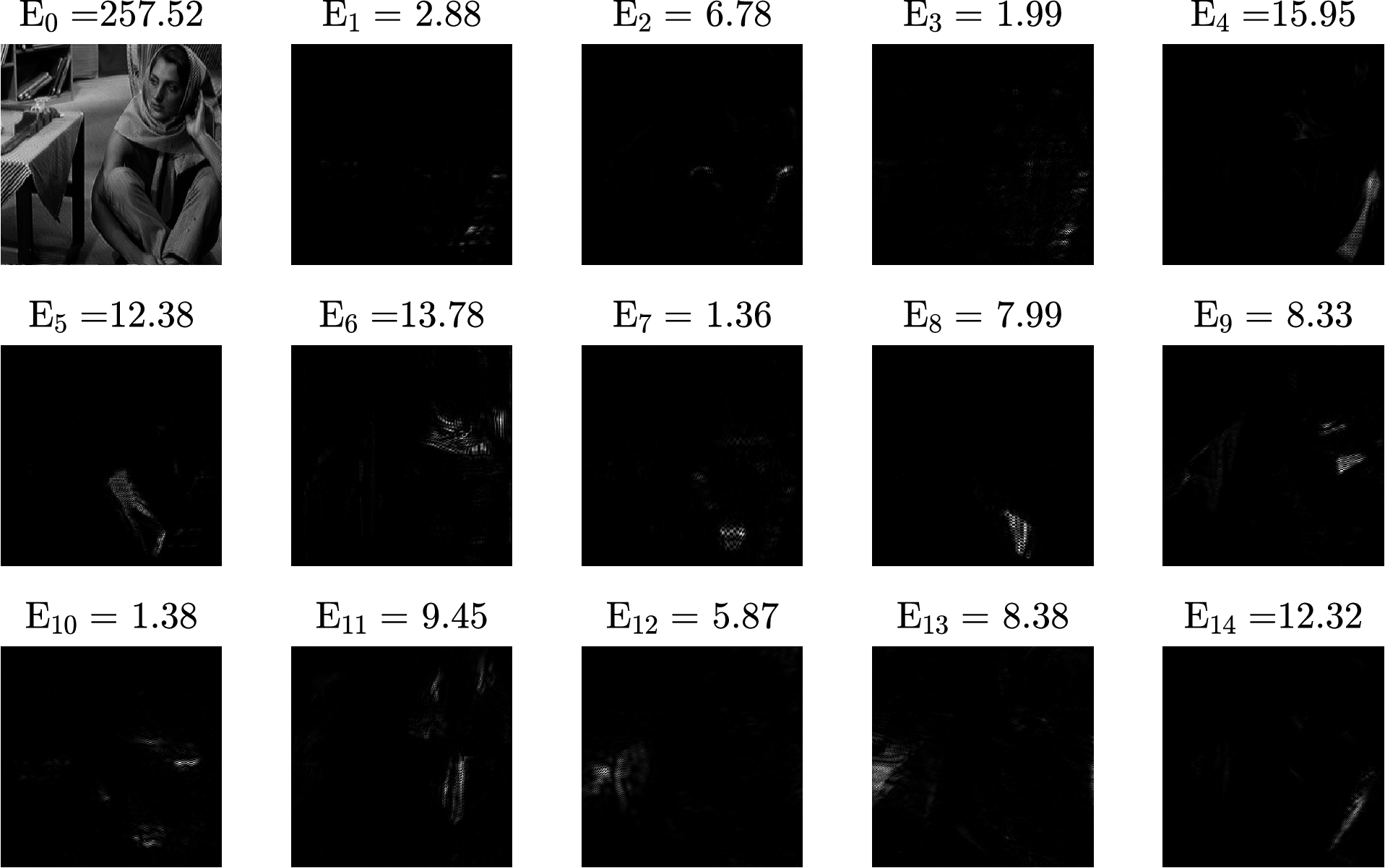} }
 \caption{ {\bf Shannon Watershed transform of the Barbara image.} (Top) Sets $\breve \gamma_n^{-1}(\Lambda)$ (in white) for the diffeomorphisms $\gamma_n$ mapping the Watershed regions $\Omega_n$ (with contour in red) to the square $\Lambda = \left(-\frac{1}{2},\frac{1}{2} \right) \times \left(-\frac{1}{2},\frac{1}{2} \right) $ and (bottom) resulting empirical Shannon wavelet transform spectra, for the Barbara image and $n \in \{0,\ldots,14\}$. The Fourier spectrum energies $\mathrm{E}_n$ over the regions $\Omega_n \cup \Omega_{-n}$ are indicated above the wavelet spectra. }
  \label{fig:barbara_watershed_shannon_transform}
\end{figure}

\begin{figure}[htbp]
\centerline{
  \includegraphics[width=\textwidth]{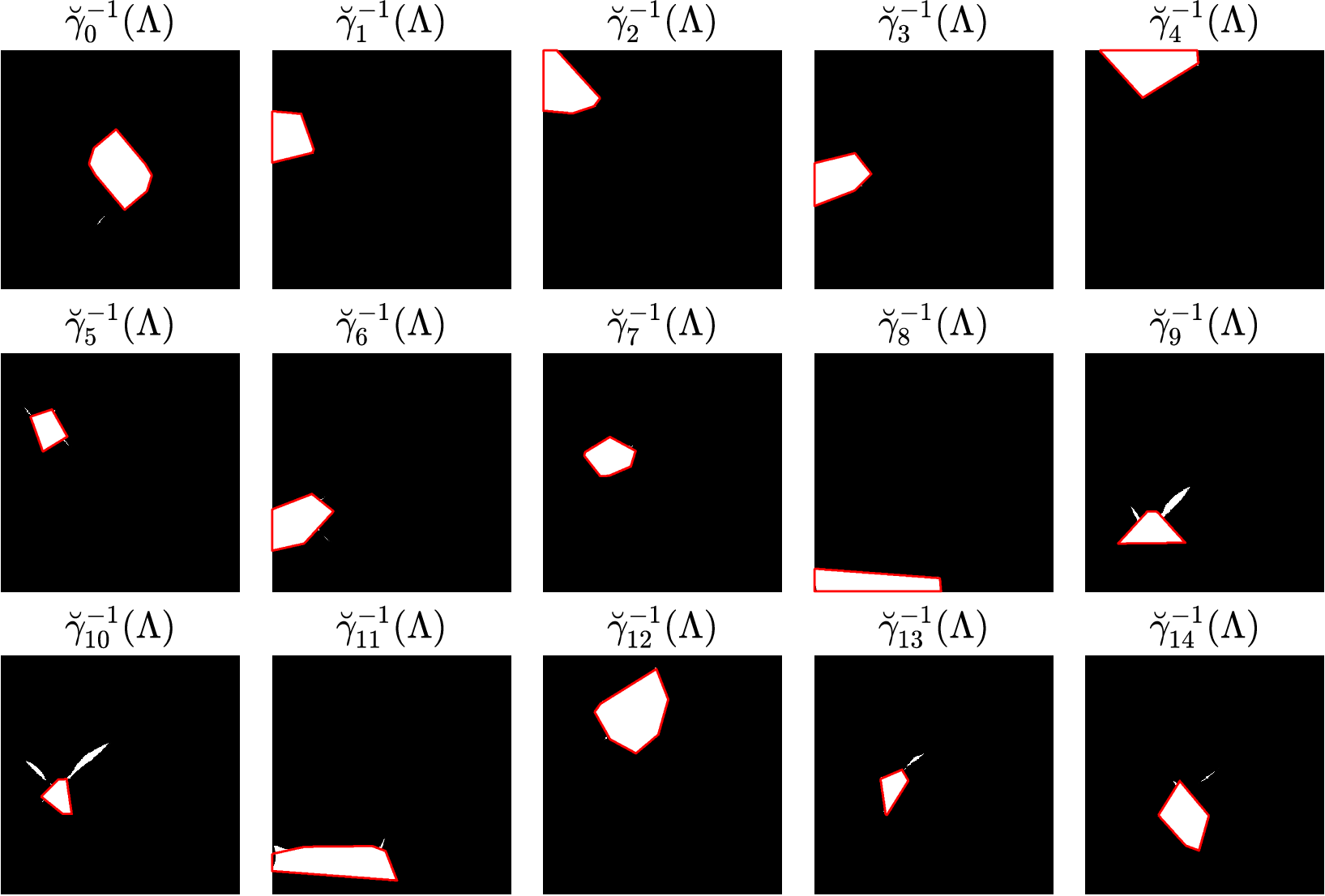} }
  \noindent\rule[5pt]{\textwidth}{0.4pt}
  \centerline{ \includegraphics[width=\textwidth]{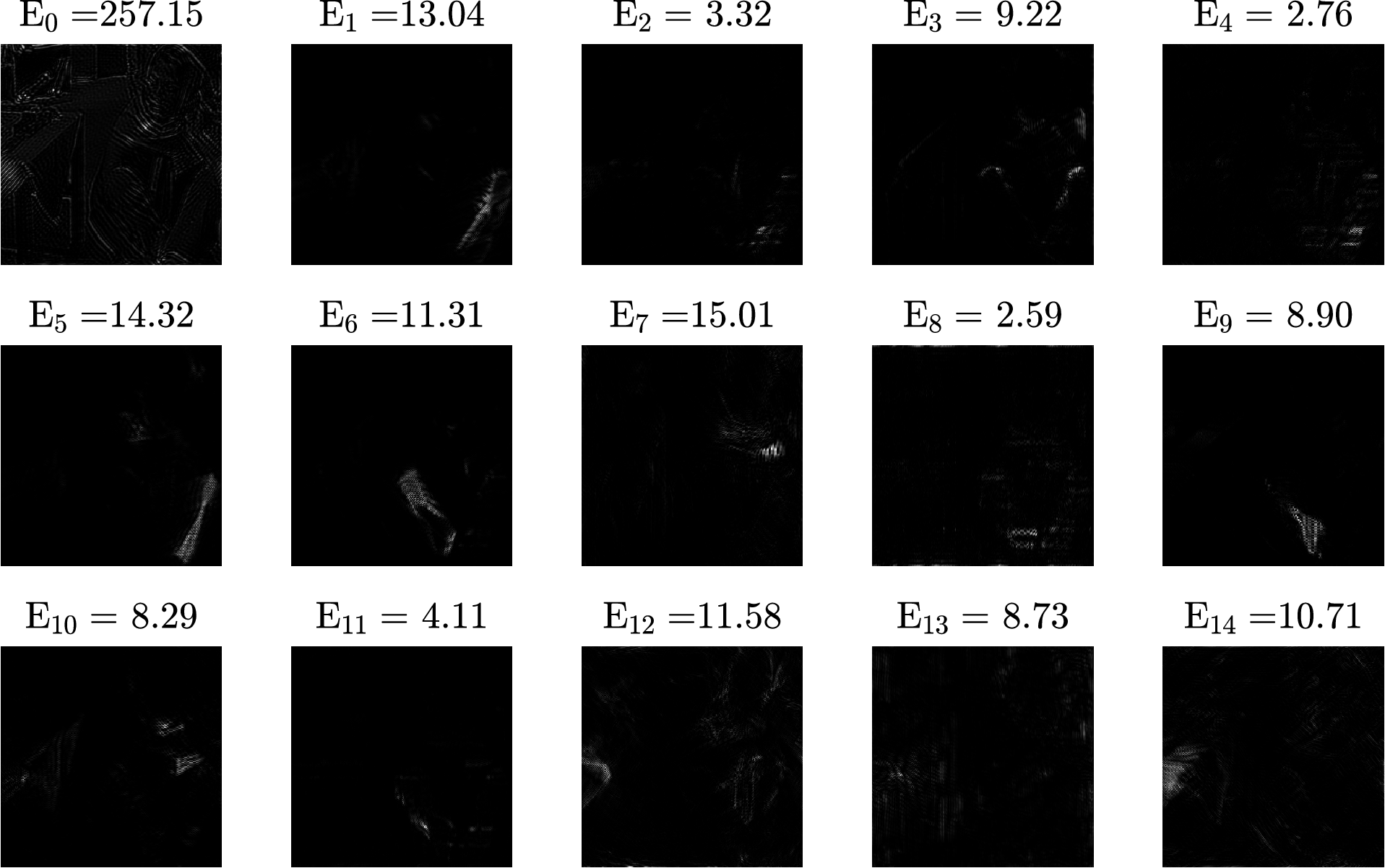} }
 \caption{ {\bf Shannon Voronoi transform of the Barbara image.} (Top) Sets $\breve \gamma_n^{-1}(\Lambda)$ (in white) for the diffeomorphisms $\gamma_n$ mapping the Voronoi regions $\Omega_n$ (with contour in red) to the square $\Lambda = \left(-\frac{1}{2},\frac{1}{2} \right) \times \left(-\frac{1}{2},\frac{1}{2} \right) $ and (bottom) resulting empirical Shannon wavelet transform spectra, for the Barbara image and $n \in \{0,\ldots,14\}$. The Fourier spectrum energies $\mathrm{E}_n$ over the regions $\Omega_n \cup \Omega_{-n}$ are indicated above the wavelet spectra. }
  \label{fig:barbara_voronoi_shannon_transform}
\end{figure}

\subsection{Discussion}

As illustrated in this section, continuous wavelet frames can easily be built theoretically but the resulting numerical transform significantly relies on the estimation of the mappings $\gamma_n$.
Overall, the Voronoi partitions provides regions that are easier to map than the Watershed partitions, particularly when the wavelet kernel's Fourier support is a square. However, the Voronoi partition can lead to a less adapted separation of the harmonic modes, implying that different wavelet coefficients can contain information of the same frequency band.
These observations highlight the need for an efficient diffeomorphism estimator robust to sets $\Lambda$ and $\Omega_n$ for extensions to higher dimensions and applications.
Moreover, a perfect detection and delimitation of harmonic modes is difficult in practice, which suggests to explore several wavelet kernels in applications.

\section{Conclusions}
\label{sec:conclusions}
In this work, we proposed a general formalism to build multidimensional empirical wavelet systems for a large variety of (potentially symmetric) Fourier domain partitions based on continuous wavelet kernels.
Moreover, we showed conditions for the existence of continuous and discrete empirical wavelet frames, which in particular allow to guarantee the reconstruction from the wavelet transforms. Specific wavelet systems based on classic wavelet kernels have also been developed and have been shown to be frames under mild assumptions on the Fourier supports. In addition, the implementation toolbox of these wavelet systems will be made freely available at the time of publication. Future work will focus on estimating robustly and efficiently the diffeomorphisms involved in the proposed definitions in 2D and 3D. Applications will also be considered.

\section*{Acknowledgements}
This work has been sponsored by the Air Force Office of Scientific Research, grant FA9550-21-1-0275.

\bibliographystyle{siamplain}
\bibliography{mybibfile}

\end{document}